\documentclass[11pt]{amsart}
\usepackage{fullpage}
\usepackage[foot]{amsaddr}
\setlength\marginparwidth{2cm}
\usepackage{microtype}
\usepackage[OT1]{fontenc}
\usepackage{eulervm}
\usepackage[tt=false]{libertine} 
\usepackage{amsmath}
\usepackage{amssymb}
\usepackage{amsthm}
\usepackage{mathtools} 
\usepackage[linesnumbered,boxed,ruled,vlined]{algorithm2e}
\usepackage{algpseudocode}
\usepackage{enumitem}
\usepackage{bm}
\usepackage{bbm}
\usepackage{xifthen}

\usepackage[margin=1cm]{caption} 
\usepackage{subcaption}

\usepackage[thinlines]{easytable}

\usepackage[bookmarks=true,hypertexnames=false,pagebackref]{hyperref}
\hypersetup{colorlinks=true, citecolor=blue, linkcolor=red, urlcolor=blue}

\usepackage{pgfplots}
\pgfplotsset{compat=1.16}
\usepackage{tikz}
\usetikzlibrary{arrows,arrows.meta,backgrounds,calc,fit,decorations.pathreplacing,decorations.markings,shapes.geometric}

\tikzstyle{internal} = [draw, fill, shape=circle]
\tikzstyle{external} = [shape=circle]
\tikzstyle{square}   = [draw, fill, rectangle]
\tikzstyle{triangle} = [draw, fill, regular polygon, regular polygon sides=3, inner sep=3pt]
\tikzstyle{pentagon} = [draw, fill, regular polygon, regular polygon sides=5, inner sep=2pt, minimum size=14pt]
\tikzset{every fit/.append style=text badly centered}

\usetikzlibrary{positioning,chains,fit,shapes,calc}
\usetikzlibrary{trees}
\usetikzlibrary{decorations.pathreplacing}
\usetikzlibrary{decorations.pathmorphing}
\usetikzlibrary{decorations.markings}
\tikzset{>=latex} 

\usepackage{ifthen}

\usepackage{cleveref}

\usepackage[textsize=tiny]{todonotes}

\usepackage[normalem]{ulem}

\usepackage{mleftright}

\usepackage{cool}
\Style{DSymb={\mathrm d},DShorten=true,IntegrateDifferentialDSymb=\mathrm{d}}

\newcommand{\tp}[1]{{\left( #1 \right)}}

\newcommand{\Ex}{\mathop{\mathbb{{}E}}\nolimits}
\renewcommand{\Pr}{\mathop{\mathrm{Pr}}\nolimits}

\def\*#1{\mathbf{#1}}
\def\+#1{\mathcal{#1}}
\def\-#1{\mathrm{#1}}
\def\=#1{\mathbb{#1}}

\newcommand{\abs}[1]{\ensuremath{\left\vert#1\right\vert}}

\newcommand{\eps}{\varepsilon}

\newcommand{\Var}[2]{\ensuremath{\textnormal{Var}_{#1}\left(#2\right)}}

\newcommand{\defeq}{:=}

\newcommand{\numP}{\#{\textnormal{\textbf{P}}}}

\newcommand{\BIS}{\#\textnormal{\textbf{BIS}}}

\newcommand{\DTV}[2]{d_{\text{TV}}\left({#1},{#2}\right)}

\newtheorem{theorem}{Theorem}

\newtheorem{lemma}[theorem]{Lemma}

\newtheorem{observation}[theorem]{Observation}

\newtheorem{corollary}[theorem]{Corollary}
\theoremstyle{definition}

\newtheorem{definition}[theorem]{Definition}
\newtheorem{fact}[theorem]{Fact}

\theoremstyle{remark}
\newtheorem{remark}[theorem]{Remark}

\crefname{theorem}{Theorem}{Theorems}
\crefname{observation}{Observation}{Observations}
\crefname{claim}{Claim}{Claims}
\crefname{condition}{Condition}{Conditions}
\crefname{algorithm}{Algorithm}{Algorithms}
\crefname{property}{Property}{Properties}
\crefname{example}{Example}{Examples}
\crefname{fact}{Fact}{Facts}
\crefname{lemma}{Lemma}{Lemmas}
\crefname{corollary}{Corollary}{Corollaries}
\crefname{definition}{Definition}{Definitions}
\crefname{remark}{Remark}{Remarks}
\crefname{proposition}{Proposition}{Propositions}
\crefname{equation}{equation}{equations}
\crefname{enumi}{Case}{Case}
\creflabelformat{enumi}{(#2#1#3)}


\makeatletter
\def\prob#1#2#3{\goodbreak\begin{list}{}{\labelwidth\z@ \itemindent-\leftmargin
      \itemsep\z@  \topsep6\p@\@plus6\p@
      \let\makelabel\descriptionlabel}
  \item[\textbf{Name}]#1
  \item[\textbf{Instance}]#2
  \item[\textbf{Output}]#3
  \end{list}}
\makeatother


\makeatletter
\providecommand\@dotsep{5}
\def\listtodoname{Todo list}
\def\listoftodos{\@starttoc{tdo}\listtodoname}
\makeatother


\newcommand{\rch}[3]{{#1}\leadsto_{_#3}{#2}}
\newcommand{\nrch}[3]{{#1}\not\leadsto_{_{#3}}{#2}}

\newboolean{doubleblind}
\setboolean{doubleblind}{false}

\title{An FPRAS for two terminal reliability in directed acyclic graphs}

\ifdoubleblind
\author{Author(s)}
\else

\author{Weiming Feng and Heng Guo}

\address[Weiming Feng]{Institute for Theoretical Studies, ETH Z\"urich, Clausiusstrasse 47, 8092 Z\"urich, Switzerland.}
\address[Heng Guo]{School of Informatics, University of Edinburgh, Informatics Forum, Edinburgh, EH8 9AB, United Kingdom.}
\email{weiming.feng@eth-its.ethz.ch}
\email{hguo@inf.ed.ac.uk}

\fi

\begin{document}

\begin{abstract}
  We give a fully polynomial-time randomized approximation scheme (FPRAS) for two terminal reliability in directed acyclic graphs (DAGs).
  In contrast, we also show the complementing problem of approximating two terminal unreliability in DAGs is \BIS{}-hard.
\end{abstract}
\maketitle
\section{Introduction}\label{section-def}

Network reliability is one of the first problems studied in counting complexity.
Indeed, $s-t$ reliability is listed as one of the first thirteen complete problems when Valiant \cite{Val79a} introduced the counting complexity class \numP.
The general setting is that given a (directed or undirected) graph $G$,
each edge $e$ of $G$ fails independently with probability $q_e$.
The problem of $s-t$ reliability is then asking the probability that in the remaining graph, the source vertex $s$ can reach the sink $t$.
There are also other variants, where one may ask the probability of various kinds of connectivity properties of the remaining graph.
These problems have been extensively studied, and apparently most variants are $\numP$-complete \cite{Ball80,Jer81,BP83,PB83,Ball86,Col87}.

While the exact complexity of reliability problems is quite well understood,
their approximation complexity is not.
Indeed, the approximation complexity of the first studied $s-t$ reliability is still open in either directed or undirected graphs.
One main exception is the \emph{all-terminal} version (where one is interested in the remaining graph being connected or disconnected).
A famous result by Karger \cite{Kar99} gives the first fully polynomial-time randomized approximation scheme (FPRAS) for all-terminal \emph{unreliability},
while about two decades later, Guo and Jerrum \cite{GJ19a} give the first FPRAS for all-terminal \emph{reliability}.
The latter algorithm is under the partial rejection sampling framework \cite{GJL19},
and the Markov chain Monte Carlo (MCMC) method is also shown to be efficient shortly after \cite{ALOV19,CGM19}.
See \cite{HS18,Kar20,CHLP23}, \cite{GH20}, and \cite{ALOVV21,CGZZ23} for more recent results and improved running times along the three lines above for the all-terminal version respectively.

The success of these methods implies that the solution space of all-terminal reliability is well-connected via local moves.
However, this is not the case for the two-terminal version (namely the $s-t$ version). 
Instead, the natural local-move Markov chain for $s-t$ reliability is torpidly mixing.
Here the solution space consists of all spanning subgraphs (namely a subset of edges) in which $s$ can reach $t$.
Consider a (directed or undirected) graph composed of two paths of equal length connecting $s$ and $t$.
Suppose we start from one path and leave the other path empty.
Then before the other path is all included in the current state, we cannot remove any edge of the initial path.
This creates an exponential bottleneck for local-move Markov chains, and it suggests that a different approach is required.

In this paper, we give an FPRAS for the $s-t$ reliability in directed acyclic graphs.
Note that the exact version of this problem is \numP-complete \cite[Sec 3]{PB83},
even restricted to planar DAGs where the vertex degrees are at most $3$ \cite[Theorem 3]{Pro86}.
Our result positively resolves an open problem by Zenklusen and Laumanns \cite{ZL11}.
Without loss of generality, in the theorem below we assume that any vertex other than $s$ has at least one incoming edge,
and thus $\abs{E}\ge \abs{V}-1$ for the input $G=(V,E)$.

\begin{theorem}  \label{thm:main}
  Let $G=(V,E)$ be a directed acyclic graph (DAG), failure probabilities $\*q=(q_e)_{e \in E} \in [0,1]^E$, two vertices $s,t\in V$, and $\eps>0$.
  There is a randomized algorithm that takes $(G,\*q,s,t,\eps)$ as inputs and outputs a $(1\pm\eps)$-approximation to the $s-t$ reliability with probability at least $3/4$ in time $\widetilde{O}(n^{6}m^4\max\{m^4,\eps^{-4}\})$ where $n=\abs{V}$, $m=\abs{E}$, and $\widetilde{O}$ hides $\text{polylog}(n/\epsilon)$ factors.
\end{theorem}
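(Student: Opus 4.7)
My plan is to combine a self-reducibility reduction with a non-local sampling subroutine tailored to the DAG structure. At a high level, I would first reduce the approximation of $R := \Pr[s \to t]$ to a sequence of conditional-marginal estimations, and then design an efficient sampler for the relevant conditional distributions on subgraphs.

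For the reduction, I order the edges along a topological order and write
\[
  R = \prod_{i=1}^{m} \rho_i,
\]
where $\rho_i$ is a conditional marginal of the form $\Pr[e_i \text{ alive} \mid s \to t,\, e_1,\ldots,e_{i-1}\ \text{fixed}]$. After handling degenerate edges (those irrelevant to $s$-$t$ reachability, or with extremely small $p_e$) by pre-processing or truncation, each $\rho_i$ should lie in a polynomially bounded range and therefore be estimable to relative error $\eps/m$ using $\widetilde{O}(m^2/\eps^2)$ samples drawn from the corresponding conditional distribution, yielding a relative-error-$\eps$ estimate of $R$ in the standard way.

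The core technical work is the sampler. The natural local-move Markov chain on subgraphs containing an $s$-$t$ path is torpidly mixing, as in the parallel-paths example from the introduction, so the sampler must be non-local. My plan is a recursive sampler that exploits the DAG's topological structure via a boundary split: at each step, locate a topological cut $B$ separating the $s$-side from the $t$-side of the DAG, (approximately) sample the reachability pattern on $B$ (i.e.\ which vertices of $B$ are reachable from $s$ through the upstream side), and recurse on the two sub-DAGs. To keep this tractable, the sampler would grow the reachable boundary set one vertex at a time, using rejection and lazy-sampling corrections to stay on the correct conditional distribution, and exploiting acyclicity to prevent the exponential blow-up that would occur for general graphs.

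The main obstacle is controlling both the running time and the total-variation error of the recursive sampler simultaneously. The recursion depth can be as large as $n$, the number of possible boundary patterns is potentially exponential, and errors compound multiplicatively across recursion levels and telescoping factors; carefully bounding all three — plausibly the source of the large polynomial and $\eps^{-4}$ factors in the stated runtime — is the delicate part. The proof would be completed by (i) establishing correctness and polynomial expected running time for the recursive sampler, (ii) analyzing the Monte-Carlo variance of each ratio $\rho_i$, and (iii) combining via a union bound over the $O(m)$ sampling rounds.
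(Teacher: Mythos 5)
There is a genuine gap, and it sits exactly where the technical content of the result lives: your sampler is not an algorithm yet. The outer layer of your plan (a JVV-style telescoping of conditional edge marginals, reducing a $(1\pm\eps)$ estimate of $R$ to $O(m)$ conditional-marginal estimations) is standard and fine up to a small slip — with $\rho_i$ defined as conditional marginals given $s\leadsto t$, the product $\prod_i\rho_i$ is the conditional probability of one reference configuration, so $R$ is recovered as (weight of that configuration)$/\prod_i\rho_i$ rather than as the product itself. But the inner layer, ``locate a topological cut $B$, sample the reachability pattern on $B$, and recurse,'' begs the question: sampling which subset of a cut is reachable from $s$, conditioned on $s\leadsto t$, requires knowing the relative conditional probabilities of exponentially many boundary patterns, each of which is itself a reliability-type quantity. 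You acknowledge the exponential number of patterns but propose no mechanism to collapse it; ``rejection and lazy-sampling corrections'' presuppose the very conditional probabilities you are trying to compute. As written, the recursive step is as hard as the original problem.

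The idea that is missing — and that the paper uses to break this circularity — is the Karp--Luby union-of-sets estimator combined with dynamic programming over a \emph{vertex} topological order. The paper maintains, for each vertex $u$ from $t$ backwards to $s$, an estimate $\widetilde{R}_u$ of the $u$--$t$ reliability and a pool $S_u$ of approximate samples from $\pi_u$. The event ``$\Lambda$ reaches $t$'' is written as a union over the at most $n$ out-neighbours of the current reachable set $\Lambda$, and Karp--Luby estimates the union's measure using only the downstream $(\widetilde{R}_w,S_w)$, which are already available by the ordering. The sampler then proceeds edge by edge (not cut by cut), contracting everything reachable from $u$ into $\Lambda$ and calling this estimator to get each edge's conditional marginal, with a final rejection filter correcting the accumulated multiplicative error. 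A further nontrivial component you would still need even with a correct sampler is the sample-reuse analysis: generating fresh samples for every marginal query is exponentially expensive, so samples must be reused, and the resulting correlations force a union bound over exponentially many possible oracle answers (hence the success amplification to $\exp(-\Omega(m))$ and much of the polynomial overhead in the stated running time). None of this is derivable from the proposal as it stands.
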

The running time of \Cref{thm:main} is $\widetilde{O}(n^6m^8)$ when $\eps>1/m$, and is $\widetilde{O}(n^{6}m^4/\eps^4)$ when $\eps<1/m$.
The reason behind this running time is that our algorithm always outputs at least a $(1\pm 1/m)$-approximation.
Thus, when $\eps> 1/m$, it does not matter what $\eps$ actually is for the running time.
This high level of precision is required for the correctness of the algorithm.


As hinted earlier, our method is a significant departure from the techniques for the all-terminal versions.
Indeed, a classical result by Karp and Luby \cite{KL83,KLM89} has shown how to efficiently estimate the size of a union of sets.
A direct application of this method to $s-t$ reliability is efficient only for certain special cases \cite{KL85,ZL11}.
Our main observation is to use the Karp-Luby method as a subroutine in dynamic programming using the structure of DAGs.
Let $s=v_1,\dots,v_n=t$ be a topological ordering of the DAG~$G$. (Note that we can ignore vertices before $s$ and after $t$.)
Let $R_u$ be the $u-t$ reliability so that our goal is to estimate $R_s$.
We inductively estimate $R_{v_i}$ from $i=n$ to $i=1$.
For each vertex $u$, our algorithm maintains an estimator of $R_u$ and a set $S_u$ of samples of subgraphs in which $u$ can reach $t$.
In the induction step, we use the Karp-Luby method to generate the next estimator,
and use a self-reduction similar to the Jerrum-Valiant-Vazirani sampling to counting reduction \cite{JVV86} to generate samples.
Both tasks can be done efficiently using only what have been computed for previous vertices.
Moreover, a naive implementation of this outline would require exponentially many samples.
To avoid this, we reuse generated samples and carefully analyze the impact of doing so on the overall error bound.
A more detailed overview is given in \Cref{sec:alg-overview}.

Our technique is inspired by an FPRAS for the number of accepting strings of non-deterministic finite automata (\#NFA), found by Arenas, Croquevielle, Jayaram, and Riveros~\cite{ACJR21}, which in turn used some techniques from a quasi-polynomial-time algorithm for sampling words from context-free languages by Gore, Jerrum, Kannan, Sweedyk, and Mahaney \cite{GJKSM97}.
Their \#NFA algorithm runs in time $O\left(\left(\frac{n\ell}{\eps}\right)^{17}\right)$,\footnote{The running time of the algorithm in \cite{ACJR21} is not explicitly given. This bound is obtained by going through their proof.} where $n$ is the number of states and $\ell$ is the string length.
More recently, Meel, Chakraborty, and Mathur claim an improved running algorithm which runs in time~$\widetilde{O}(\frac{n^4\ell^{11}}{\eps^4})$~\cite[Theorem 3]{MCM23}.
These algorithms first normalize the NFA into a particular layered structure. 
Applying similar methods on the $s-t$ reliability problem can simplify the analysis, but would greatly slow down the algorithm.
In contrast, our method works directly on the DAG. 
This makes our estimation and sampling subroutines interlock in an intricate way.
To analyze the algorithm, we have to carefully separate out various sources of randomness.
This leads to a considerably more sophisticated analysis, with a reward of a much better (albeit still high) running time.

Independently and simultaneously, Amarilli, van Bremen, and Meel~\cite{AvBM23} also found an FPRAS for $s-t$ reliability in DAGs.
Their method is to reduce the problem to \#NFA via a sequence of reductions, and then invoke the algorithm in \cite{ACJR21}.
Indeed,
\ifdoubleblind 
\else as Marcelo Arenas subsequently pointed out to us,~\fi 
counting the number of subgraphs of a DAG in which $s$ can reach $t$ belongs to a complexity class $\*{SpanL}$~\cite{AJ93}, where \#NFA is $\*{SpanL}$-complete under polynomial-time parsimonious reductions. 
In particular, every problem in $\*{SpanL}$ admits an FPRAS because \#NFA admits one~\cite{ACJR21}, which implies that $s-t$ reliability in DAGs admits an FPRAS if $q_e = 1/2$ for all edges.
The method of \cite{AvBM23} reduces a reliability instance of $n$ vertices and $m$ edges, where $q_e = 1/2$ for all edges, to estimating length $m$ accepting strings of an NFA with $O(m^2)$ states.\footnote{In fact, \cite{AvBM23} first reduces the reliability instance to an nOBDD (non-deterministic ordered binary decision diagram) of size $O(m)$, which can be further reduced to an NFA of size $O(m^2)$. As they are working with a more general context, no explicit reduction is given for the $s-t$ reliability problem in DAGs. We provide a direct (and essentially the same) reduction in \Cref{sec:reduction}.}
As a consequence, their algorithm (even using the faster algorithm for \#NFA \cite{MCM23}) has a running time of $O\left(m^{19}\eps^{-4}\right)$.
When $q_e\neq 1/2$, their reduction needs to expand the instance further to reduce to the $q_e=1/2$ case, slowing down the algorithm even more.
In contrast, our algorithm deals with all possible probabilities $0\leq q_e < 1$ in a unified way.
In any case, an algorithm via reductions is much slower than the direct algorithm in \Cref{thm:main}.

As both all-terminal reliability and unreliability in undirected graphs have FPRASes \cite{Kar99,GJ19a},
one may wonder if FPRAS exists for $s-t$ \emph{unreliability} in DAGs.
Here $s-t$ unreliability is the probability that $s$ cannot reach $t$ in the random subgraph.
In contrast to \Cref{thm:main}, we show that this problem is \BIS-hard,
where \BIS{} is the problem of counting independent sets in bipartite graphs,
whose approximation complexity is still open.
This is a central problem in the complexity of approximate counting \cite{DGGJ04},
and is conjectured to have no FPRAS.

\begin{theorem}  \label{thm:st-unrel}
  There is no FPRAS to estimate $s-t$ \emph{unreliability} in DAGs unless there is an FPRAS for \BIS{}.
  This is still true even if all edges fail with the same probability.
\end{theorem}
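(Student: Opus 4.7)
The plan is an approximation-preserving reduction from \BIS{}. Given a bipartite instance $H=(L\cupdot R, E_H)$ and $\eps>0$, with $n=|L|+|R|$, I would construct a DAG $D$ with uniform edge failure probability $q=1/2$ as follows. Take vertices $s$, $t$, $\{a_u: u\in L\}$, $\{b_v: v\in R\}$, and for each $(u,v)\in E_H$ a fresh set of ``gadget'' vertices $w_1^{uv},\dots,w_k^{uv}$ (with $k$ chosen below). Include edges $s\to a_u$, $b_v\to t$, and for each $(u,v)\in E_H$ and each $i\in[k]$ the length-two path $a_u\to w_i^{uv}\to b_v$. This is a DAG (five natural layers) of polynomial size, with $q=1/2$ throughout, so the ``same probability'' clause of the theorem will be automatic.

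The main calculation is a conditioning argument. Let $U\subseteq L$ (resp.\ $V\subseteq R$) be the set of $u$ (resp.\ $v$) for which $s\to a_u$ (resp.\ $b_v\to t$) survives; each pair $(U,V)$ arises with probability $2^{-n}$. Given $(U,V)$, $s\not\leadsto t$ holds iff for every $(u,v)\in E_H$ with $u\in U$ and $v\in V$, all $k$ gadget paths $a_u\to w_i^{uv}\to b_v$ fail. Because distinct edges use disjoint fresh gadget vertices, these events are mutually independent, each occurring with probability $q^{\star}\defeq(1-(1-q)^2)^k=(3/4)^k$. Since $H$ is bipartite, the pairs $(U,V)$ with no $H$-edge between them are exactly the independent sets of $H$, hence
\[
  2^{n}\Pr[s\not\leadsto t]=\sum_{U\subseteq L,\,V\subseteq R}(q^{\star})^{e_H(U,V)}=\ind(H)+E,\qquad 0\le E\le q^{\star}\cdot 2^{n},
\]
where $e_H(U,V)$ counts $H$-edges from $U$ to $V$. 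Choosing $k=\Theta(n+\log(1/\eps))$ makes $q^{\star}\cdot 2^{n}\le\eps/2$, so using $\ind(H)\ge 1$ we get $\ind(H)\le 2^{n}\Pr[s\not\leadsto t]\le(1+\eps/2)\ind(H)$.

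To conclude, invoke the assumed FPRAS for $s-t$ unreliability on $D$ with accuracy $\eps/3$, boost its success probability by the standard median trick, and multiply the output by $2^{n}$; a short calculation shows this is a $(1\pm\eps)$-approximation of $\ind(H)$. Since $|V(D)|=O(n+k|E_H|)$ is polynomial in $n$ and $\log(1/\eps)$, the reduction runs in time polynomial in $n$ and $1/\eps$. The only substantive engineering point is producing an exponentially small effective ``super-edge'' failure probability while keeping every physical edge at probability $1/2$; the parallel-of-length-two-chains gadget handles this because $(3/4)^k$ shrinks geometrically in $k=O(n+\log(1/\eps))$. The remaining ingredients---the conditioning identity, the bipartite-graph correspondence between zero-edge pairs $(U,V)$ and independent sets, and the standard AP-reduction bookkeeping---are routine.
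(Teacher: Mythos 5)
Your proof is correct and follows essentially the same route as the paper: embed the bipartite graph between $s$ and $t$ so that independent sets correspond exactly to the configurations in which $s$ cannot reach $t$, and use parallel redundancy (your $k$ length-two gadget paths; the paper's $k$ parallel edges) to make the backbone connections effectively reliable while keeping every edge failure probability equal to $1/2$. The differences are presentational: you collapse the paper's chain of reductions (vertex-failure version, then vertex-to-edge splitting, then uniformization) into a single construction with an explicit quantitative error bound, and your fresh gadget vertices keep the graph simple where the paper informally invokes parallel edges.
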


\Cref{thm:st-unrel} is proved in \Cref{sec:BIS-hard}.
The hardness of $s-t$ unreliability does not contradict \Cref{thm:main}.
This is because a good relative approximation of $x$ does not necessarily provide a good approximation of $1-x$, especially when $x$ is close to $1$.

The complexity of estimating $s-t$ reliability in general directed or undirected graphs remains open.
We hope that our work sheds some new light on these decades old problems.
Another open problem is to reduce the running time of \Cref{thm:main}, as currently the exponent of the polynomial is still high.

\subsection{Algorithm overview}
\label{sec:alg-overview}
Here we give an overview of our algorithm. 
For simplicity, we assume that~$q_e = 1/2$ for all edges.
The general case of $0 \leq q_e < 1$ can be solved with very small tweaks.

Let $s=v_1\prec\dots\prec v_n=t$ be a topological ordering of the DAG $G$. (Note that we can ignore vertices before $s$ and after $t$.)
Let $R_u$ be the $u-t$ reliability so that our goal is to estimate $R_s$.
Note that $R_{v_i}$ depends only on the subgraph induced by the vertex set $\{v_i,v_{i+1},\ldots,v_n\}$. We denote this subgraph by $G_{v_i}=(V_{v_i},E_{v_i})$.
As we assumed $q_e=1/2$, estimating $R_{v_i}$ is equivalent to estimating the number of (spanning) subgraphs of $G_{v_i}$ in which $v_i$ can reach $t$.
Denote the latter quantity by $Z_{v_i}$ so that $R_{v_i}=Z_{v_i}/2^{|E_{v_i}|}$.
For each vertex $v_i$, our algorithm maintains an estimator and a multi-set of random samples:
\begin{itemize}
  \item $\widetilde{Z}_{v_i}$:\footnote{Our algorithm actually directly maintains an estimate $\widetilde{R}_{v_i}$ to the reliability $R_{v_i}$. In this overview, we use $\widetilde{Z}_{v_i}$ instead for simplicity.}
      an estimator that approximates $Z_{v_i}$ with high probability;
  \item $S_{v_i}$: a multi-set of random subgraphs, where each $H \in S_{v_i}$ is an approximate sample from $\pi_{v_i}$ and $\pi_{v_i}$ is the uniform distribution over all spanning subgraphs of $G_{v_i}$ in which $v_i$ can reach $t$.
\end{itemize}

Our algorithm computes $\widetilde{Z}_{v_i}$ and $S_{v_i}$ for $i$ from $n$ to $1$ by dynamic programming. The base case $v_n = t$ is trivial. 
In the induction step, suppose the vertex $v_i$ has $d$ out-neighbors $u_1,u_2,\ldots,u_d$.
Note that each~$u_j$ for $j\in[d]$ comes after $v_i$ in the topological ordering. 
Let us further assume $v_i\prec u_1\prec,\dots,\prec u_d$.
For any subgraph $H$ of $G_{v_i}$, if $v_i$ can reach $t$ in $H$, then there exists a neighbor $u_j$ such that $v_i$ can reach $u_j$ and $u_j$ can reach $t$ in $H$.
We can write $Z_{v_i}$ as the size of the union $\Omega\defeq\cup_{j=1}^d \Omega^{(j)}$, where $\Omega^{(j)}$ contains all subgraph of $G_{v_i}$ where $v_i$ can reach $t$ through the neighbor $u_j$.
Note that it is straightforward to estimate the size of $\Omega^{(j)}$ given $\widetilde{Z}_{u_j}$,
and to generate uniform random subgraphs from $\Omega^{(j)}$ using samples in $S_{u_j}$. 
Given the size and samples from $\Omega^{(j)}$, a classical algorithm by Karp and Luby~\cite{KL83,KLM89} can be applied to efficiently estimate the size of the union of sets,
namely to compute $\widetilde{Z}_{v_i}$.

The more complicated task is to generate the samples of $S_{v_i}$.
We use a sampling to counting self-reduction \'a la Jerrum-Valiant-Vazirani~\cite{JVV86}. 
To generate a sample $H$, we go through each edge $e$ in $G_{v_i}$, deciding if $e$ is added into $H$ according to its conditional marginal probability. 
The first edge to consider is $(v_i,u_1)$.
Its marginal probability depends on how many subgraphs in $\Omega$ contain it.
This quantity is the same as the number of subgraphs in which $\Lambda$ can reach $t$, where $\Lambda$ is a new vertex after contracting $v_i$ and $u_1$.
To estimate this number, denoted by $Z_{\Lambda}$, we use the Karp-Luby algorithm again.
Note that the Karp-Luby algorithm requires estimates and samples from all out-neighbours of $\Lambda$,
which have been computed already as these vertices are larger in the topological ordering than $v_i$. 
Having estimated $Z_{\Lambda}$, we estimate the marginal probability of $(v_i,u_1)$ and decide if it is included in $H$.
The process then continues to consider the next edge.
In each step, we contract all vertices that can be reached from $v_i$ into $\Lambda$, and keep estimating new $Z_{\Lambda}$ using the Karp-Luby algorithm to compute the conditional marginal probabilities,
until all edges are considered to generate $H$.

A naive implementation of the process above is to generate fresh samples every time~$S_{u}$ is used,
which would lead to an exponential number of samples required.
To maintain efficiency, the key property of our algorithm is to \emph{reuse} random samples.
For any vertex $u$, the algorithm generates the multi-set of samples~$S_{u}$ only once.
Whenever the algorithm needs to use random samples for $u$, it always picks one sample from the same set~$S_{u}$.
Hence, one sample may be used multiple times during the whole algorithm.
Reusing samples introduces very complicated correlation among all $(\widetilde{Z}_u,S_u)$'s, which is a challenge to proving the correctness of the algorithm. 
Essentially, our analysis shows that as long as the estimates ($\widetilde{Z}_u$'s) and the samples are accurate enough, 
the overall error can be controlled.
Accurate estimates of~$Z_{u}$'s allow us to bound the total variation (TV) distance between our samples and perfect samples.
In turn, the small TV distance implies that there is a coupling between them, which helps us bound the errors of the estimates.
This way, we circumvent the effect of correlation on the analysis and achieve the desired overall error bound.

For the overall running time, there are two tasks for each vertex, namely the estimation step (based on Karp-Luby) and the sample generation step.
As we also need to perform estimation steps as subroutines when generating samples,
the running time is dominated by the time for the sampling step.
Let $\ell$ be the number of samples required per vertex, so that the total number of samples generated is $n\ell$.
Roughly speaking, because we can only use the union bound due to various correlation, and because errors accumulate throughout dynamic programming,
we set the error to be $\delta\defeq n^{-1}\min\{m^{-1},\eps\}$ for the estimation step.
Each estimation has two stages, first getting a constant approximation and then using the crude estimation to tune the parameters and obtain a $1\pm\delta$ approximation.
This succeeds with constant probability using $O(n\delta^{-2})$ samples.
The estimator itself requires $O(m)$ time to compute, and thus the total running time for each estimation step with constant success probability is~$\widetilde{O}(mn\delta^{-2})$.
However, as samples are reused, we need to apply a union bound to control the error over all possible values of the samples, which are exponentially many. 
This requires us to amplify the success probability of the estimation step to~$\exp(-\Omega(m))$, which means we need to repeat the algorithm $O(m)$ times and take the median.
Each estimation step thus takes $\widetilde{O}(m^2n\delta^{-2})$ time and $O(m n\delta^{-2})$ samples.
Instead of maintaining $O(m n\delta^{-2})$ samples for every vertex, 
the aforementioned two-stage estimation allows us to spread the cost and maintain~$\ell\defeq\frac{O(mn\delta^{-2})}{n}=O(n^2m\max\{m^2,\eps^{-2}\})$ samples per vertex,
which we show suffice with high probability.
As we may do up to $O(m)$ estimation steps during each sampling step,
the overall running time is then bounded by
  $\widetilde{O}(n\ell \cdot m \cdot m^2 n \delta^{-2}) = \widetilde{O}\tp{n^6m^4\max\{m^4,\eps^{-4}\}}$.

\section{Preliminaries}

\subsection{Problem definitions}\label{sec:prelim-def}

Let $G=(V,E)$ be a directed acyclic graph (DAG). Each directed edge (or arc) $e = (u,v)$ is associated with a failure probability $0\leq q_e<1$. (Any edge with $q_e=1$ can be simply removed.) 
We also assume graph $G$ is simple because parallel edges with failure probabilities $q_{e_1},q_{e_2},\ldots,q_{e_k}$ can be replace with one edge with failure probability $q_e = \prod_{i=1}^m q_{e_i}$.  
Given two vertices $s,t \in V$, the $s-t$ reliability problem asks the probability that $s$ can reach $t$ if each edge $e\in E$ fails (namely gets removed) independently with probability $q_e$.
Formally, let $\boldsymbol{q}=(q_e)_{e \in E}$.
The $s-t$ reliability problem is to compute
\begin{align}\label{eq-def-R}
    R_{G,\boldsymbol{q}}(s,t) \defeq \Pr_{\+G}[\,\text{there is a path from $s$ to $t$ in $\+G$}\,],
\end{align}
where $\+G =(V,\+E)$ is a random subgraph of $G=(V,E)$ such that each $e \in E$ is added independently to $\+E$ with probability $1-q_e$.

Closely connected to estimating $s-t$ reliability is a sampling problem, which we call the $s-t$ \emph{subgraph sampling} problem. Here the goal is to sample a random (spanning) subgraph $G'$ conditional on that there is at least one path from $s$ to $t$ in $G'$. Formally, let $\Omega_{G,s,t}$ be the set of all subgraphs $H = (V,E_H)$ of $G$ such that $E_H \subseteq E$ and $s$ can reach $t$ in $H$. The algorithm needs to draw samples from the distribution $\pi_{G,s,t,\boldsymbol{q}}$ whose support is $\Omega_{G,s,t}$ and for any $H =(V,E_H)\in \Omega_{G,s,t}$,
\begin{align}\label{eq-def-pi}
  \quad \pi_{G,s,t,\boldsymbol{q}}(H) = \frac{1}{R_{G,\boldsymbol{q}}(s,t)} \cdot \prod_{e \in E_H}(1-q_e)\prod_{f \in E\setminus E_H}q_f. 
\end{align}

\subsection{More notations}
Fix a DAG $G=(V,E)$. For any two vertices $u$ and $v$, we use
$\rch{u}{v}{G}$ to denote that $u$ can reach $v$ in the graph $G$ and use $\nrch{u}{v}{G}$ to denote that $u$ cannot reach $v$ in the graph $G$.
It always holds that $\rch{u}{u}{G}$.
Fix two vertices $s$ and $t$, where $s$ is the source and $t$ is the sink.
The failure probabilities $\*q$, $s$, and $t$ will be the same throughout the paper, and
thus we omit them from the subscripts.
For any vertex $u \in V$, we use $G_u = G[V_u]$ to denote the subgraph of $G$ induced by the vertex set
\begin{align}\label{eq-def-G_u}
  V_u \defeq \{w \in V \mid \rch{u}{w}{G} \,\land\,\rch{w}{t}{G} \}.
\end{align} 
Without loss of generality, we assume $G_s = G$. 
This means that all vertices except $s$ have at least one in-neighbour, and thus $m\ge n-1$.
If $G_s \neq G$, then all vertices and edges in $G - G_s$ have no effect on $s-t$ reliability and we can simply ignore them. For the sampling problem, we can first solve it on the graph $G_s$ and then independently add each edge $e$ in $G - G_s$ with probability $1 - q_e$.

Our algorithm actually solves the $u-t$ reliability and $u-t$ subgraph sampling problems in $G_u$ for all $u \in V$.
Let $G_u = (V_u,E_u)$. For any subgraph $H = (V_u, E_H)$ of $G_u$, define the weight function
\begin{align}\label{eqn:w_u}
	w_u(H) \defeq \begin{cases}
		\prod_{e \in E_H}(1-q_e)\prod_{f \in E_u \setminus E_H}q_f &\text{if } \rch{u}{t}{H};\\
		0 &\text{if } \nrch{u}{t}{H}.
	\end{cases}
\end{align} 
Define the distribution $\pi_u$ by
\begin{align*}
	\pi_u(H) \defeq \frac{w_u(H)}{R_u},
\end{align*}
where the partition function (the normalizing factor) 
\begin{align*}
	R_u \defeq \sum_{H: \text{ subgraph of $G_u$}}w_u(H)
\end{align*}
is exactly the $u-t$ reliability in the graph $G_u$. Finally, let
\begin{align}\label{eq-def-Omega_u}
	\Omega_{u} \defeq \left\{ H=(V_u,E_H) \mid E_H \subseteq E_u \,\land\, \rch{u}{t}{H} \right\}
\end{align}
be the support of $\pi_u$.
Also note that $R_s$ and $\pi_s$ are the probability $R_{G,\boldsymbol{q}}(s,t)$ and the distribution $\pi_{G,s,t,\boldsymbol{q}}$ defined in~\eqref{eq-def-R} and~\eqref{eq-def-pi}, respectively.
The set $\Omega_s$ is the set $\Omega_{G,s,t}$ in \Cref{sec:prelim-def}.

\subsection{The total variation distance and coupling}
Let $\mu$ and $\nu$ be two discrete distributions over $\Omega$. The \emph{total variation distance} between $\mu$ and $\nu$ is defined by
\begin{align*}
	\DTV{\mu}{\nu} \defeq \frac{1}{2}\sum_{x \in \Omega}\abs{\mu(x)- \nu(x)}.
\end{align*}
If $X \sim \mu$ and $Y \sim \nu$ are two random variables, we also abuse the notation and write $\DTV{X}{Y}\defeq\DTV{\mu}{\nu}$. 

A \emph{coupling} between $\mu$ and $\nu$ is a joint distribution $(X,Y)$ such that $X \sim \mu$ and $Y\sim \nu$. The following coupling inequality is well-known.
\begin{lemma}\label{lem:coupling}
For any coupling $\+C$ between two random variables $X\sim \mu$ and $Y\sim \nu$, it holds that
\begin{align*}
  \Pr_{\+C}[X \neq Y] \geq \DTV{\mu}{\nu}.
\end{align*}
Moreover, there exists an optimal coupling that achieves equality.
\end{lemma}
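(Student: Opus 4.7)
The plan is to prove the two parts separately: first the lower bound on $\Pr[X\neq Y]$ for any coupling, and then the existence of a coupling that saturates the bound.

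For the inequality, I would start from an equivalent characterization of total variation distance, namely $\DTV{\mu}{\nu}=\sup_{A\subseteq\Omega}\left(\mu(A)-\nu(A)\right)$. Fix any coupling $\+C$ of $(X,Y)$ and any event $A\subseteq\Omega$. Decomposing according to whether $Y\in A$,
\begin{align*}
\mu(A)-\nu(A)=\Pr_{\+C}[X\in A]-\Pr_{\+C}[Y\in A]=\Pr_{\+C}[X\in A,\,Y\notin A]-\Pr_{\+C}[X\notin A,\,Y\in A].
\end{align*}
Dropping the non-positive term and then enlarging the remaining event to $\{X\neq Y\}$ gives $\mu(A)-\nu(A)\le \Pr_{\+C}[X\neq Y]$. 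Taking the supremum over $A$ yields $\DTV{\mu}{\nu}\le\Pr_{\+C}[X\neq Y]$, which is the desired inequality.

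For the second part, I would explicitly construct an optimal coupling using the standard ``maximal coupling'' recipe. Let $p\defeq 1-\DTV{\mu}{\nu}=\sum_{x\in\Omega}\min(\mu(x),\nu(x))$ and define the three sub-probability measures
\begin{align*}
\lambda(x)\defeq\min(\mu(x),\nu(x)),\qquad \mu^+(x)\defeq\mu(x)-\lambda(x),\qquad \nu^+(x)\defeq\nu(x)-\lambda(x).
\end{align*}
Note $\lambda$ has total mass $p$, while $\mu^+$ and $\nu^+$ each have total mass $1-p=\DTV{\mu}{\nu}$. I would then sample $(X,Y)$ as follows: with probability $p$, draw $Z\sim \lambda/p$ and set $X=Y=Z$; otherwise, draw $X\sim\mu^+/(1-p)$ and $Y\sim\nu^+/(1-p)$ independently. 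A short check of marginals confirms $X\sim\mu$ and $Y\sim\nu$, so this is a valid coupling, and by construction $\Pr[X\neq Y]\le 1-p=\DTV{\mu}{\nu}$, matching the lower bound (edge cases $p=0$ or $p=1$ are handled trivially).

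Neither part is really an obstacle; the only mild subtlety is verifying the marginals of the constructed coupling and handling the degenerate cases $\DTV{\mu}{\nu}\in\{0,1\}$ so that the conditional distributions $\lambda/p$, $\mu^+/(1-p)$, $\nu^+/(1-p)$ are well-defined. Since the result is entirely classical, I would likely state it with a brief indication that the standard maximal coupling achieves the bound rather than spell everything out.
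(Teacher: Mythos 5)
Your proof is correct and is the standard argument: the sup-over-events characterization of total variation distance gives the inequality for any coupling, and the maximal coupling built from $\lambda(x)=\min(\mu(x),\nu(x))$ achieves equality. The paper itself states this lemma as well-known and offers no proof, so there is nothing to contrast with; your write-up (including the marginal check and the degenerate cases $\DTV{\mu}{\nu}\in\{0,1\}$) fills that gap correctly.
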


\section{The algorithm}

In this section we give our algorithm. 
We also give intuitions behind various design choices, and give some basic properties of the algorithm along the way.
The main analysis is in \Cref{sec-analysis}.

\subsection{The framework of the algorithm}

As $G=G_s$ is a DAG, there is a topological ordering of all vertices.
There may exist many topological orderings. We pick an arbitrary one, say, $v_1,\dots,v_n$.
It must hold that $v_1=s$ and $v_n=t$.
The topological ordering guarantees that if $(u,v)$ is an edge, $u$ must come before $v$ in the ordering, denoted $u\prec v$.

On a high level, our algorithm is to inductively compute an estimator $\widetilde{R}_u$ of $R_u$, from $u=v_n$ to $u=v_1$.
In addition to $\widetilde{R}_u$, we also maintain a multi-set $S_u$ of samples from $\pi_u$ over $\Omega_u$. 
For any vertex $u \in V$, let 
	$\Gamma_{\-{out}}(u) \defeq \{w \mid (u,w) \in E\}$
denote the set of out-neighbours of $u$.
Let
\begin{align}\label{eq-def-ell}
  \ell \defeq (60n + 150m)(400n + 500\left\lceil 10^4 n^2\max\{m^2,\eps^{-2}\}  \right\rceil) = O((n+m)n^2\max\{m^2,\epsilon^{-2}\})
\end{align}
be the size of $S_v$ for all $v\in V_u$, where $n$ is the number of vertices in $G$ and $m$ is the number of edges in $G$.
The choice of this parameter is explained in the last paragraph of \Cref{sec:alg-overview}.
Our algorithm is outlined in \Cref{alg-main}. Note that $G_t$ is an isolated vertex. 
For consistency, we let $S_t$ contain $\ell$ copies of $\emptyset$.
\begin{algorithm}  
  \caption{An FPRAS for $s-t$ reliabilities in DAGs} \label{alg-main}
  \KwIn{a DAG $G=(V,E)$, a vector $\boldsymbol{q}= (q_e)_{e \in E}$, the source $s$, the sink $t$, and an error bound $0 <\eps < 1$, where $G=G_s$ and $V=\{v_1 , v_2,\ldots,v_n\}$ is topologically ordered with $v_1= s$ and $v_n = t$}
  \KwOut{an estimator $\widetilde{R}_s$ of $R_s$}
  let $\tilde{R}_{t} = 1$ and $S_{t}$ be a multi set of $\ell$ $\emptyset$'s\label{line-base}\;
\For{$k$ from $n-1$ to $1$}{
	$\widetilde{R}_{v_k} \gets \textsf{ApproxCount}\tp{V_{v_k},E_{v_k}, v_k, (\widetilde{R}_{w}, S_w)_{w \in \Gamma_{\-{out}}(v_k)}}$\label{line-count}\;	
	$S_{v_k} \gets \emptyset$\;
	\For{$j$ from $1$ to $\ell$}{
	$S_{v_k} \gets S_{v_k} \cup \textsf{Sample}\tp{v_k,(\widetilde{R}_{w}, S_w)_{w \in \{v_{k+1},v_{k+2},\ldots,v_n\}}, \widetilde{R}_{v_k}}$\label{line-sample}\;
	}
} 
\Return $\widetilde{R}_s$.
\end{algorithm}

The base case (\Cref{line-base}) of $v_n=t$ is trivial.
The subroutine $\textsf{Sample}(\cdot)$ uses $(\widetilde{R}_{v_i},S_{v_i})$ for all $i>k$ and $\widetilde{R}_{v_k}$ to generate samples in $S_{v_k}$.
The subroutine $\textsf{ApproxCount}(V,E,u,(\widetilde{R}_w,S_w)_{\Gamma_{\-{out}}(u)})$ takes a graph $G=(V,E)$, a vertex $u$, and $(\widetilde{R}_w,S_w)$ for all $w \in \Gamma_{\-{out}}(u)$ as the input, and it outputs an approximation of the $u-t$ reliability in the graph $G$.
We describe $\textsf{Sample}$ in \Cref{sec:sample} and (a slightly more general version of) $\textsf{ApproxCount}$ in \Cref{sec:AC}.

\subsection{Generate samples} \label{sec:sample}
Let $u=v_k$ where $k<n$. Recall that $G_u = (V_u, E_u)$ is the graph defined in~\eqref{eq-def-G_u}. 
The sampling algorithm aims to output a random spanning subgraph $H=(V_u,\+E)$ from the distribution~$\pi_u$.
The algorithm is based on the sampling-to-counting reduction in~\cite{JVV86}.
It scans each edge $e$ in $E_u$ and decides whether to put $e$ into the set $\+E$ or not. The algorithm maintains two edge sets:
\begin{itemize}
	\item $E_1\subseteq E_u$: the set of edges that have been scanned by the algorithm;
    \item $\+E \subseteq E_1$: the current set of edges sampled by the algorithm.
\end{itemize}
Also, let $E_2 \defeq E_{u} \setminus E_1$ be the set of edges that have not been scanned yet by the algorithm.
Given any $\+E$, we can uniquely define the following subset of vertices
\begin{align}\label{eqn:Lambda-def}
  \Lambda = \Lambda_{\+E} \defeq \textsf{rch}(u,V_u,\+E) = \{w \in V_u \mid u \text{ can reach } w \text{ through edges in } \+E\}.
\end{align}
In other words, let $G'=(V_u,\+E)$ and $\Lambda$ is the set of vertices that $u$ can reach in $G'$. Note that $u \in \Lambda$ for any $\+E$. 
We will keep updating $\Lambda$ as $\+E$ expands.
When calculating the marginal probability of the next edge,
the path to $t$ can start from any vertex in $\Lambda$.
Thus we need to estimate the reliability from $\Lambda$ to $t$.
Instead of having a single source, as called in \Cref{alg-main},
we use a slightly more general version of \textsf{ApproxCount}, described in \Cref{sec:AC}, to allow a set $\Lambda$ of sources.
This subroutine $\textsf{ApproxCount}$ takes input $(V,E,\Lambda, (\widetilde{R}_w,S_w)_{w \in \partial \Lambda})$ and approximates the $\Lambda-t$ reliability in $(V,E)$,
which is the probability that there exists at least one vertex in $\Lambda$ being able to reach $t$ if each edge $e \in E$ fails independently with probability $q_e$.
An equivalent way of seeing it is to contract all vertices in $\Lambda$ into a single vertex $u$ first, and then calculate the $u-t$ reliability in the resulting graph.
\textsf{Sample} is described in \Cref{alg-sample},
and some illustration is given in \Cref{fig:sample-example}.

\begin{algorithm} 
  \caption{$\textsf{Sample}\tp{u,(\widetilde{R}_{w}, S_w)_{w \in \{v_{k+1},v_{k+2},\ldots,v_n\}},\widetilde{R}_{v_k}}$}\label{alg-sample} 
  \KwIn{a vertex $u=v_k$, all $(\widetilde{R}_{w}, S_w)$ for $w \in \{v_{k+1},v_{k+2},\ldots,v_n\}$, and $\widetilde{R}_{v_k}=\widetilde{R}_u$}
  \KwOut{a random subgraph $H=(V_u,\+E)$}
$T \gets\lceil 1000\log \frac{n}{\eps} \rceil$ and $F \gets 0$\;
$p_0 \gets \widetilde{R}_u$\label{line-p_0}\;
\Repeat
{$T < 0$ or $F = 1$}
{
	let $p \gets 1$\;
	let $E_1 \gets \emptyset$, $E_2 \gets E_u \setminus E_1$ and $\Lambda = \{u\}$\label{line-strat}\;
	\While{$t \notin \Lambda$}{
		let $\partial\Lambda \gets \{ w \notin \Lambda \mid \exists w' \in \Lambda \text{ s.t. } (w',w)\in E_2\}$; let $w^* \in \partial \Lambda$ be the smallest vertex in the topological ordering; pick an arbitrary edge $e = (w',w^*) \in E_2$ such that $w' \in \Lambda$\label{line-pick}\; 
		let $\Lambda_1 \gets \textsf{rch}(u,V_u,\+E \cup \{e\})$\;
        $E_1 \gets E_1 \cup \{e\}$ and $E_2 \gets E_2 \setminus \{e\}$\label{line-E2}\;
        $\partial \Lambda_1 \gets \{ w \notin \Lambda_1 \mid \exists w' \in \Lambda_1 \text{ s.t. } (w',w)\in E_2\}$ and $\partial \Lambda \gets \{ w \notin \Lambda \mid \exists w' \in \Lambda \text{ s.t. } (w',w)\in E_2\}$\label{line-Lambda-partial}\;
		$c_0 \gets \textsf{ApproxCount}(V_u, E_2, \Lambda, (\widetilde{R}_w,S_w)_{w \in \partial\Lambda} )$\label{line-c1}\;
		$c_1 \gets \textsf{ApproxCount}(V_u, E_2, \Lambda_1, (\widetilde{R}_w,S_w)_{w \in \partial \Lambda_1} )$\label{line-c2}\;
		let $c \gets 1$ with probability $\frac{(1-q_e)c_1}{q_ec_0 + (1-q_e)c_1}$; otherwise $c \gets 0$ \label{line-sample-c}\;
		\textbf{if} $c = 1$, \textbf{then} let $\+E \gets \+E \cup \{e\}$, $\Lambda \gets \Lambda_1$, $p\gets p \frac{(1-q_e)c_1}{q_ec_0 + (1-q_e)c_1}$\label{line-sample-filter}\;
		\textbf{if} $c = 0$, \textbf{then} let  $p\gets p \tp{1-\frac{(1-q_e)c_1}{q_ec_0 + (1-q_e)c_1}}$\;
		
}
\For{all edges $e \in E_2$}{
	let $c \gets 1$ with probability $1-q_e$; otherwise $c \gets 0$\;
	\textbf{if} $c = 1$, \textbf{then} let $\+E \gets \+E \cup \{e\}$ and $p \gets p(1-q_e)$\;
	\textbf{if} $c = 0$, \textbf{then} let $p \gets pq_e$\label{line-end}\;
}
let $F \gets 1$ with probability $\frac{w_u(H)}{4pp_0}$, where $H = (V_u, \+E)$\label{line-filter}\; 
$T \gets T - 1$\;
}
\textbf{if} $F = 1$ \text{then} \Return $H=(V_u,\+E)$; \textbf{else} \Return $\perp$\label{line-sample-return}\label{line-last-sample}\;
\end{algorithm}

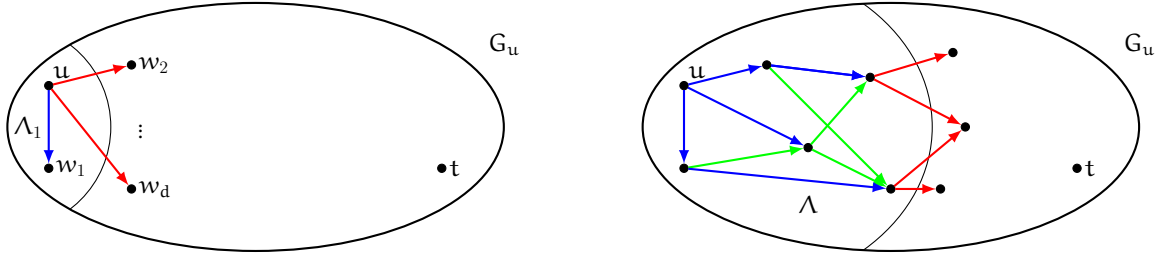
\begin{figure}[htbp]
  \centering
  \begin{subfigure}[t]{0.45\textwidth}
    \centering
    \begin{tikzpicture}[scale=0.55, inner sep=2pt, transform shape]
      \draw (0,0) ellipse (6 and 3) [thick];
      \draw (6,2) node {\LARGE $G_{u}$};
      \draw (-5,1) node [draw,fill,shape=circle,color=black, label=70:{\LARGE $u$}] (v) {};
      \draw (4.5,-1) node [draw,fill,shape=circle,color=black, label=0:{\LARGE $t$}] {}; 
      \draw (-5,0) node [label=180:{\LARGE $\Lambda_1$}] (Lambda) {};      
      \begin{scope}
        \clip (0,0) ellipse (6 and 3);
        \draw (-6,0) ellipse (2.5 and 2.5);
      \end{scope}
      
      \draw (-5,-1) node [draw,fill,shape=circle,color=black, label=-0:{\LARGE $w_1$}] (u1) {};
      \draw (-3,1.5) node [draw,fill,shape=circle,color=black, label=0:{\LARGE $w_2$}] (u2) {};
      \draw (-3,-0) node [label=0:{\LARGE $\vdots$}] (udots) {};
      \draw (-3,-1.5) node [draw,fill,shape=circle,color=black, label=0:{\LARGE $w_d$}] (u3) {};       
       \path (v) edge [->, color=blue, thick] (u1);
       \path (v) edge [->, color=red, thick] (u2);
       \path (v) edge [->, color=red, thick] (u3);
    \end{tikzpicture}
    \caption{At the start, we consider the first edge $(u,w_1)$. 
    To compute its marginal, we estimate two reliability, where the source is either $\Lambda_1=\{u,w_1\}$ (shown in picture) or just $u$, respectively.}
  \end{subfigure}
  \hspace{0.05\textwidth}
  \begin{subfigure}[t]{0.45\textwidth}
    \centering
    \begin{tikzpicture}[scale=0.55, inner sep=2pt, transform shape]
      \draw (0,0) ellipse (6 and 3) [thick];
      \draw (6,2) node {\LARGE $G_{u}$};
      \draw (-5,1) node [draw,fill,shape=circle,color=black, label=70:{\LARGE $u$}] (v) {};
      \draw (4.5,-1) node [draw,fill,shape=circle,color=black, label=0:{\LARGE $t$}] {}; 
      \draw (-2,-1.5) node [label=270:{\LARGE $\Lambda$}] (Lambda) {};      
      \begin{scope}[] 
        \clip (0,0) ellipse (6 and 3);
        \draw (-4,0) ellipse (5 and 4);
      \end{scope}
      
      \draw (-5,-1) node [draw,fill,shape=circle,color=black] (u1) {};
      \draw (-3,1.5) node [draw,fill,shape=circle,color=black] (u2) {};
      \draw (-2,-0.5) node [draw,fill,shape=circle,color=black] (u3) {};
      \draw (-0.5,1.2) node [draw,fill,shape=circle,color=black] (u4) {};
      \draw (0,-1.5) node [draw,fill,shape=circle,color=black] (u5) {};
       \path (v) edge [->, color=blue, thick] (u1);
       \path (v) edge [->, color=blue, thick] (u2);
       \path (v) edge [->, color=blue, thick] (u3);
       \path (u1) edge [->, color=green, thick] (u3);       
       \path (u2) edge [->, color=blue, thick] (u4);
       \path (u2) edge [->, color=green, thick] (u5);       
       \path (u2) edge [->, color=blue, thick] (u4);
       \path (u3) edge [->, color=green, thick] (u4);
       \path (u3) edge [->, color=green, thick] (u5);
       \path (u1) edge [->, color=blue, thick] (u5);

       \draw (1.5,1.8) node [draw,fill,shape=circle,color=black] (inv1) {};
       \draw (1.8,0) node [draw,fill,shape=circle,color=black] (inv2) {};
       \draw (1.2,-1.5) node [draw,fill,shape=circle,color=black] (inv3) {};
       \path (u4) edge [->, color=red, thick] (inv1);
       \path (u4) edge [->, color=red, thick] (inv2);
       \path (u5) edge [->, color=red, thick] (inv2);
       \path (u5) edge [->, color=red, thick] (inv3);
     \end{tikzpicture}
     \caption{As \Cref{alg-sample} progresses, there are chosen edges $\+E$ (blue),
     not chosen edges $E_1\setminus\+E$ (green), and the boundary edges (red).
   The set $\Lambda$ contains vertices reachable from $u$ using only $\+E$.}
  \end{subfigure}
  \caption{An illustration of sampling from $\pi_{u}$}\label{fig:sample-example}
\end{figure}

\begin{remark}[Crash of \textsf{Sample}]
The subroutine \textsf{Sample} (\Cref{alg-sample}) may crash in following cases: (1) in \Cref{line-pick}, $\partial \Lambda = \emptyset$; (2) in \Cref{line-sample-c}, $q_e c_0 + (1-q_e)c_1 = 0$; (3) in \Cref{line-filter}, $\frac{w_u(H)}{4p_0p} > 1$; (4) in \Cref{line-sample-return}, $F = 0$.
If it crashes, we stop \Cref{alg-main} immediately and output $\widetilde{R}_s = 0$.
\end{remark}


The algorithm needs $c_0$ and $c_1$ in order to compute the marginal probability of $e$ in \Cref{line-sample-c}.
The quantity~$c_0$ is an estimate to the reliability conditional on $e$ not selected and all choices made so far (namely $E_H \cap E_1 = \+E$).
Similarly, $c_1$ is an estimate to the reliability conditional on $e$ selected and $E_H \cap E_1 = \+E$.
Thus, if $\textsf{ApproxCount}$ returns exact values of $c_0$ and $c_1$, then
\begin{align*}
	\Pr_{H=(V_u,E_H)\sim \pi_u}\left[ e \in \+E_H \mid E_H \cap E_1 = \+E \right] = \frac{(1-q_e)c_1}{q_ec_0 + (1-q_e)c_1}.
\end{align*}
However, $\textsf{ApproxCount}$ can only approximate the reliabilities $c_0$ and $c_1$.
To handle the error from \textsf{ApproxCount}, our algorithm maintains a number $p$,
which is the probability of selecting the edges in $\+E$ and not selecting those in $E_1\setminus \+E$.
By the time we reach \Cref{line-filter}, $p$ becomes the probability that $H$ is generated by the algorithm. 
Then the algorithm uses a filter (with filter probability $\frac{w_u(H)}{4p_0p}$) to correct the distribution of $H$.
Conditional on passing the filter, $H$ is a perfect sample from $\pi_u$.
The detailed analysis of the error is given in \Cref{lemma-sample} and in \Cref{sec-main-analysis}.

Before we go to the $\textsf{ApproxCount}$ algorithm, we state one important property of \Cref{alg-sample}.
The topological ordering in \Cref{line-pick} is the ordering $s = v_1, v_2,\ldots,v_n = t$ in $G$. For any two vertices $v,v'$, we write $v \prec v'$ if $v = v_i$, $v' = v_j$ and $i < j$. 

\begin{fact}\label{fact-topo}
For any path $u_1,u_2,\ldots,u_\ell$ in $G$, it holds that $u_1 \prec u_2 \prec \ldots\prec u_\ell$.	
\end{fact}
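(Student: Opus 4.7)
The plan is to derive the claim directly from the defining property of a topological ordering, which the algorithm fixed at the start of Section~3.1. Recall the ordering $v_1, v_2, \ldots, v_n$ of $V$ is chosen so that whenever $(u,v) \in E$, the source $u$ appears strictly before the target $v$ in the list, i.e.\ $u \prec v$. This property is exactly what is needed to chain through a path.

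First, I would fix an arbitrary directed path $u_1, u_2, \ldots, u_\ell$ in $G$ and look at any consecutive pair $u_i, u_{i+1}$ for $i \in \{1, \ldots, \ell-1\}$. By the definition of a path, $(u_i, u_{i+1})$ is an edge of $G$. Applying the topological ordering property to this edge immediately gives $u_i \prec u_{i+1}$.

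Finally, I would observe that $\prec$ is just the usual strict total order on the index set $\{1, \ldots, n\}$ (through the bijection $v_k \mapsto k$), and is therefore transitive. Iterating the pairwise inequalities $u_1 \prec u_2$, $u_2 \prec u_3$, $\ldots$, $u_{\ell-1} \prec u_\ell$ then yields $u_1 \prec u_2 \prec \cdots \prec u_\ell$, as claimed. There is no real obstacle here: the statement is a direct unfolding of the definition of topological order, and the only ingredient beyond that definition is the transitivity of $\prec$.
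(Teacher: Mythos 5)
Your proof is correct and matches the paper's intent exactly: the paper states this as a Fact without proof precisely because it follows immediately from the defining property of the topological ordering (if $(u,v)\in E$ then $u\prec v$) applied to each consecutive pair of the path, plus transitivity of $\prec$. Nothing further is needed.
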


\begin{lemma}\label{lemma-loop}
  In \Cref{alg-sample}, the following property holds: at the beginning of every while-loop, for any $w \in \partial \Lambda$, $E_1 \cap E_w = \emptyset$, where $E_w$ is the edge set of the graph $G_w = (V_w,E_w)$ defined in~\eqref{eq-def-G_u}.
\end{lemma}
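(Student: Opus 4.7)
My approach is to establish a slightly stronger invariant by induction on the iterations of the while-loop, and then derive the claim from it using a topological argument. The strengthened invariant I would propose is:

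\emph{At the start of every iteration of the while-loop, for every $(x,y) \in E_1$ and every $w \in \partial\Lambda$, we have $y \preceq w$ in the topological order.}

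The base case is trivial since initially $E_1 = \emptyset$. For the inductive step, suppose the invariant holds at the start of some iteration. Let $w^*$ be the smallest vertex in $\partial\Lambda$ chosen on \Cref{line-pick}, and $e = (w', w^*)$ the picked edge. After the iteration, $E_1$ gains exactly the edge $e$, whose target is $w^*$. For any \emph{old} edge $(x,y) \in E_1$, the inductive hypothesis gives $y \preceq z$ for every $z$ in old $\partial\Lambda$; in particular $y \preceq w^*$. Hence it suffices to show that $w^* \preceq w$ for every $w$ in the new boundary $\partial\Lambda'$; this handles both the old edges (via $y \preceq w^* \preceq w$) and the new edge (via $w^* \preceq w$).

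To verify $w^* \preceq w$ for all $w \in \partial\Lambda'$, I would split into the two cases of the sampled bit $c$. If $c=0$, then $\Lambda$ is unchanged and $\partial\Lambda' \subseteq $ old $\partial\Lambda$, so every $w \in \partial\Lambda'$ is $\succeq w^*$ by the minimality of $w^*$. If $c=1$, I first need the auxiliary observation that the new $\Lambda$ equals old $\Lambda \cup \{w^*\}$: any edge in $\+E$ was, at the moment it was added, required to have its source in $\Lambda$, and since $\Lambda$ only grows over time, every edge in $\+E$ has source in the \emph{old} $\Lambda$; since $w^* \notin$ old $\Lambda$, no edge of $\+E$ leaves $w^*$, so $\textsf{rch}(u,V_u,\+E \cup \{e\}) = \Lambda \cup \{w^*\}$. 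Then a vertex $w \in \partial\Lambda'$ either already lay in old $\partial\Lambda$ (so $w \succeq w^*$ by minimality) or is a new out-neighbor of $w^*$ itself (so $w \succ w^*$ by \Cref{fact-topo}). Either way $w^* \preceq w$, completing the inductive step.

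Given the strengthened invariant, the lemma follows quickly. Suppose for contradiction that some $(x,y) \in E_1 \cap E_w$ with $w \in \partial\Lambda$. Membership in $E_w$ means $x, y \in V_w$, and in particular $w \leadsto x$ in $G$, so \Cref{fact-topo} yields $w \preceq x$. Since $(x,y)$ is an edge of $G$, \Cref{fact-topo} also gives $x \prec y$. Combining, $w \prec y$, which contradicts the invariant $y \preceq w$. Hence $E_1 \cap E_w = \emptyset$.

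\textbf{Main obstacle.} The delicate part is precisely tracking how $\Lambda$ and $\partial\Lambda$ evolve in one iteration, particularly the assertion that in the $c=1$ case the set $\Lambda$ grows by exactly $\{w^*\}$. Without that fact, new boundary vertices could arise that are not out-neighbors of $w^*$ and are not bounded below in topological order by $w^*$, which would break the minimality argument and, correspondingly, the strengthened invariant.
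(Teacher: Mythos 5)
Your proof is correct, but it is organized differently from the paper's. The paper argues by contradiction and looks \emph{backwards}: assuming some $(v,v')\in E_1\cap E_w$ for $w\in\partial\Lambda$ at loop $k$, it locates the earlier iteration $j$ at which $(v,v')$ would have been scanned, exhibits a boundary vertex $w_{i^*}\prec v'$ present at iteration $j$, and concludes that the minimality rule in \Cref{line-pick} would have forbidden choosing $(v,v')$ then. You instead run a \emph{forward} induction on a strengthened invariant (every head of a scanned edge is $\preceq$ every current boundary vertex) and then derive the lemma from it via \Cref{fact-topo}; both arguments ultimately rest on the same insight, namely that the minimality rule forces the heads of scanned edges to stay topologically below the frontier. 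Your route has two nice features: the invariant is stated once and checked locally per iteration (no need to reconstruct a $u$-to-$w$ path through $\+E^{(k)}$ and reason about its intersection with earlier $\Lambda^{(j)}$'s), and along the way you give an independent, simpler proof that $\Lambda_1=\Lambda\cup\{w^*\}$ (every edge of $\+E$ has its source in the current $\Lambda$, so no edge leaves $w^*$), which the paper instead derives \emph{from} \Cref{lemma-loop} as \Cref{cor-loop}. There is no circularity in your use of that fact inside the induction, and your case analysis of how $\partial\Lambda$ evolves (shrinking when $c=0$; gaining only out-neighbours of $w^*$ when $c=1$) is exhaustive, so the argument goes through.
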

\begin{proof}
  For any $i$, we use we use $X^{(i)}$ to denote some (vertex or edge) set $X$ at the beginning of the $i$-th loop, where $X$ can be $\Lambda,\partial\Lambda,\+E,E_1,E_2$.
  \def\kk{k}
We prove the lemma by contradiction. Suppose at the beginning of $\kk$-th loop, there exists $w \in \partial \Lambda^{(\kk)}$ such that $E_1^{(\kk)} \cap E_w \neq \emptyset$. We pick an arbitrary edge $(v,v') \in E_1^{(\kk)} \cap E_w$. Since $(v,v') \in E_1^{(\kk)}$, there must exist $j < \kk$ such that $(v,v') \notin E_1^{(j)}$ but $(v,v') \in E_1^{(j+1)}$, which means that the algorithm picks the edge $(v,v')$ in the $j$-th loop. We will prove that such $j$ cannot exist, which is a contradiction.

Since $w \in \partial \Lambda^{(\kk)}$, there must exist $w' \in \Lambda^{(\kk)}$ such that $(w',w) \in E_2^{(\kk)} \subseteq E$, where $E$ is the set of edges in the input graph $G$. By the definition of $\Lambda^{(\kk)}$, there exists a path $w_0,w_1,\ldots,w_{t-1}$ such that
\begin{itemize}
	\item $w_0 = u$ and $w_{t-1} = w'$;
	\item for all $1\leq i \leq t- 1$, $(w_{i-1},w_{i}) \in \+E^{(\kk)}$.
\end{itemize}
Hence, $\{w_0,w_1,\ldots,w_{t-1}, w_t\}$, where $w_t = w$, is a path in $G$. 
Note that $(v,v')$ is an edge in the graph $G_w$. By the definition of $G_w$, $w\prec v'$ (the vertex $v$ may be $w$). 
By~\Cref{fact-topo}, we have
\begin{align}\label{eq-order}
w_0 \prec w_1 \prec \ldots \prec w_{t} \prec v'.
\end{align}
Note that $w_0 = u \in \Lambda^{(j)}$ for all $j \geq 1$.
Also, since $w_t = w \notin \Lambda^{(\kk)}$, $w_t \notin \Lambda^{(j)}$ for all $j < \kk$. 
Hence, for any $1\le j < \kk$, there is an index $i^* \in \{1,2\ldots,t\}$ such that $w_{i^*-1} \in \Lambda^{(j)}$ and $w_{i^*} \notin \Lambda^{(j)}$.
We claim that $(w_{i^*-1},w_{i^*}) \in E_2^{(j)}$. 
\begin{itemize}
	\item If $i^* = t$, then $(w_{t-1},w_t) \in E_2^{(\kk)}$. Since $E_2^{(\kk)} \subseteq E_2^{(j)}$,  $(w_{t-1},w_t) \in E_2^{(j)}$;
    \item If $i^* \leq t-1$, then $(w_{i^*-1},w_{i^*}) \in \+E^{(\kk)}$. Suppose $(w_{i^*-1},w_{i^*}) \notin E^{(j)}_2$. Then  $(w_{i^*-1},w_{i^*})$ has been scanned before the $j$-th loop, namely $(w_{i^*-1},w_{i^*}) \in E^{(j)}_1$. However, $w_{i^*} \notin \Lambda^{(j)}$, namely that $w_{i^*}$ cannot be reached using $\+E^{(j)}$. It implies that $(w_{i^*-1},w_{i^*}) \notin \+E^{(j)}$ as otherwise $w_{i^*}$ can be reached because $w_{i^*-1}\in \Lambda^{(j)}$. This means that $(w_{i^*-1},w_{i^*}) $ has been scanned, but not added into $\+E$, which implies that $(w_{i^*-1},w_{i^*})  \notin \+E^{(\kk)}$. A contradiction. Hence, $(w_{i^*-1},w_{i^*}) \in E^{(j)}_2$.
\end{itemize} 
Thus the claim holds.
It implies that $w_{i^*} \in \partial \Lambda^{(j)}$.
On the other hand, by~\eqref{eq-order}, $w_{i^*} \prec v'$.
The way the next edge is chosen in \Cref{line-pick} implies that
in the $j$-th loop, the algorithm may only choose an edge $(s,s')$ such that $s' \in \partial \Lambda^{(j)}$ and $s'\preccurlyeq w_{i^*}$. 
In particular, the vertex $s'$ cannot be $v'$ and the edge $(v,v')$ cannot be chosen.
As this argument holds for all $j < \kk$, it proves the lemma.
\end{proof}

\begin{corollary}\label{cor-loop}
 In \Cref{alg-sample}, in every while-loop,  $\Lambda_1 = \Lambda \cup \{w^*\}$.	
\end{corollary}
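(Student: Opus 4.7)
The plan is to show the two inclusions $\Lambda \cup \{w^*\} \subseteq \Lambda_1$ and $\Lambda_1 \subseteq \Lambda \cup \{w^*\}$ separately, with \Cref{lemma-loop} doing the heavy lifting for the nontrivial direction.

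The forward inclusion is immediate: every vertex in $\Lambda$ is reachable from $u$ using edges in $\+E \subseteq \+E \cup \{e\}$, and $w^*$ is reached from $u$ by concatenating a $u \leadsto w'$ walk in $\+E$ (which exists since $w' \in \Lambda$) with the newly added edge $e = (w', w^*)$.

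For the reverse inclusion, fix any $v \in \Lambda_1$ and a witnessing path $P$ from $u$ to $v$ using edges in $\+E \cup \{e\}$. If $P$ does not traverse $e$, then $P$ lies inside $\+E$, so $v \in \Lambda$ and we are done. Otherwise $P$ passes through $w^*$, and the suffix of $P$ from $w^*$ onwards is a path from $w^*$ to $v$ using edges in $\+E \cup \{e\}$. I want to argue this suffix has length zero, i.e. $v = w^*$. Note that any edge $(x,y)$ appearing on this suffix has both endpoints reachable from $w^*$ in $G$, and both lie in $V_u$ (and therefore reach $t$ in $G$); hence $x, y \in V_{w^*}$ and $(x,y) \in E_{w^*}$. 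By \Cref{lemma-loop}, at the start of the current while-loop we have $E_1 \cap E_{w^*} = \emptyset$, and since $\+E \subseteq E_1$, this gives $\+E \cap E_{w^*} = \emptyset$. The only remaining candidate in $\+E \cup \{e\}$ is $e$ itself, but $e = (w', w^*)$ does not leave $w^*$. So no edge of the suffix exists, forcing $v = w^*$.

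The main (small) obstacle is the edge case in the second inclusion, namely ruling out that the path $P$ re-uses $e$ after first arriving at $w^*$, and more generally ensuring that no edge starting at $w^*$ or at a descendant of $w^*$ has already been committed to $\+E$. This is exactly what \Cref{lemma-loop} guarantees via the identification of that descendant structure with the edge set $E_{w^*}$ of the subgraph $G_{w^*}$. With that key fact in hand, the corollary reduces to the short case analysis above.
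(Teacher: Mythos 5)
Your proof is correct and follows essentially the same route as the paper's: the forward inclusion is immediate, and the reverse inclusion is obtained by observing that any edge continuing past $w^*$ would lie in $E_{w^*}$ and hence contradict \Cref{lemma-loop} (via $\+E\subseteq E_1$). Your write-up is in fact slightly more careful than the paper's, since you explicitly rule out the newly added edge $e$ itself reappearing on the suffix, a point the paper glosses over.
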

\begin{proof}
  Clearly $\Lambda \cup \{w^*\} \subseteq \Lambda_1$. Suppose there exists $w_0 \neq w^*$ such that $w_0 \in \Lambda_1\setminus \Lambda$. Then $w_0 \in V_u$ can be reached from $w^*$ through edges in $\+E$. The vertex $w_0$ must be in $G_{w^*}$, which implies $\+E \cap E_{w^*} \neq \emptyset$, and thus $E_1 \cap E_{w^*} \neq \emptyset$. A contradiction to \Cref{lemma-loop}.
\end{proof}

\subsection{Approximate counting}\label{sec:AC}
Our \textsf{ApproxCount} subroutine is used in both~\Cref{alg-main} and~\Cref{alg-sample}. 
To suit~\Cref{alg-sample}, $\textsf{ApproxCount}$ takes input $(V,E,\Lambda, (\widetilde{R}_w,S_w)_{w \in \partial \Lambda})$ 
and approximates the $\Lambda-t$ reliability $R_\Lambda \defeq \Pr_{\+G}\left[{\rch{\Lambda}{t}{\+G}}\right]$,
where $\+G$ is a random spanning subgraph of $G$ obtained by removing each edge $e$ independently with probability $q_e$, and $\rch{\Lambda}{t}{\+G}$ denotes the event that $\exists w \in \Lambda$ s.t.~$\rch{w}{t}{\+G}$. 
When called by \Cref{alg-main}, $\Lambda$ is a single vertex, 
and when called by \Cref{alg-sample}, $\Lambda$ is the set defined in \eqref{eqn:Lambda-def}.
Moreover, the input satisfies:
\begin{itemize}
	\item $G=(V,E)$ is a DAG containing the sink $t$ and each edge has a failure probability $q_e$ (for simplicity, we do not write $t$ and all $q_e$ explicitly in the input as they do not change throughout~\Cref{alg-main} and~\Cref{alg-sample}); 
	\item $\Lambda \subseteq V$ is a subset of vertices that act as sources and $\partial \Lambda \defeq \{w \in V \setminus \Lambda \mid \exists w' \in \Lambda \text{ s.t. } (w',w)\in E\}$;
    \item for any $w \in \partial\Lambda$, $\widetilde{R}_w$ is an approximation of the $w-t$ reliability in $G_w$ and $S_w$ is a set of $\ell$ approximate random samples from the distribution $\pi_w$,
      where $G_w$ is defined in \eqref{eq-def-G_u}.
\end{itemize}
All points above hold when \Cref{alg-count} is evoked by \Cref{alg-main} or \Cref{alg-sample}. The first two points are easy to verify and the last one is verified in \Cref{sec-analysis}.


The algorithm first rules out the following two trivial cases:
\begin{itemize}
	\item if $t \in \Lambda$, the algorithm returns 1;
	\item if $\Lambda$ cannot reach $t$ in graph $G$, the algorithm returns 0.\footnote{Since all $q_e < 1$, $R_\Lambda > 0$ if and only if $\rch{\Lambda}{t}{G}$.}
\end{itemize}

As we are dealing with the more general set-to-vertex reliability, we need some more definitions. Define
\begin{align}\label{eq-def-Omega_Lambda}
	\Omega_\Lambda \defeq \{ H=(V,E_H) \mid E_H \subseteq E \land \rch{\Lambda}{t}{H} \}.
\end{align}
For any subgraph $H = (V, E_H)$ of $G = (V,E)$, define the weight function
\begin{align}\label{eqn:w_Lambda}
	w_\Lambda(H) \defeq \begin{cases}
		\prod_{e \in E_H}(1-q_e)\prod_{f \in E \setminus E_H}q_f &\text{if } \rch{\Lambda}{t}{H};\\
		0 &\text{if } \nrch{\Lambda}{t}{H}.
	\end{cases}
\end{align} 
Define the distribution $\pi_\Lambda$, whose support is $\Omega_{\Lambda}$,  by
\begin{align}\label{eqn:pi_Lambda}
	\pi_\Lambda(H) \defeq \frac{w_\Lambda(H)}{R_\Lambda}, \quad\text{where } R_\Lambda \defeq \sum_{H \in \Omega_\Lambda}w_\Lambda(H).
\end{align}

Let $\partial\Lambda$ be listed as $\{u_1,\dots,u_d\}$ for some $d\in[n]$.
Note that $d \geq 1$ because $R_{\Lambda} > 0$ and $t \notin \Lambda$.
To estimate $R_\Lambda$, we first write $\Omega_\Lambda$ in~\eqref{eq-def-Omega_Lambda} as a union of $d$ sets. 
For each $1 \leq i \leq d$, define
\begin{align*}
\Omega_\Lambda^{(i)} \defeq \left\{ H=(V,E_H) \mid (E_H \subseteq E) \,\land\, (\exists u \in \Lambda, \text{ s.t. } (u,u_i)\in E ) \, \land \, (\rch{u_i}{t}{H}) \right\}.
\end{align*}

\begin{lemma}\label{lemma-union}
If $t \notin \Lambda$, $\Omega_\Lambda = \cup_{i=1}^d \Omega_\Lambda^{(i)}$.
\end{lemma}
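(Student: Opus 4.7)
The plan is to establish the equality by showing the two inclusions $\Omega_\Lambda \subseteq \bigcup_{i=1}^d \Omega_\Lambda^{(i)}$ and $\bigcup_{i=1}^d \Omega_\Lambda^{(i)} \subseteq \Omega_\Lambda$ separately, where the hypothesis $t \notin \Lambda$ is used in exactly one of the two directions.

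For the inclusion $\Omega_\Lambda \subseteq \bigcup_{i=1}^d \Omega_\Lambda^{(i)}$, I would fix an arbitrary $H=(V,E_H) \in \Omega_\Lambda$. By the definition of $\Omega_\Lambda$ in \eqref{eq-def-Omega_Lambda}, there exists some $u_0 \in \Lambda$ and a directed path $u_0=w_0, w_1, \ldots, w_k=t$ in $H$. Because $t \notin \Lambda$ but $w_0 \in \Lambda$, there is a well-defined smallest index $j \geq 1$ with $w_j \notin \Lambda$; this is precisely the first edge along which the path leaves $\Lambda$. Then $w_{j-1} \in \Lambda$ and $(w_{j-1},w_j) \in E_H \subseteq E$, so by the definition of $\partial\Lambda$ we have $w_j \in \partial\Lambda$, say $w_j = u_i$ for some $i \in [d]$. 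The suffix $w_j, w_{j+1}, \ldots, w_k$ is still a directed path in $H$, witnessing $\rch{u_i}{t}{H}$, and the edge $(w_{j-1},u_i) \in E_H \subseteq E$ with $w_{j-1} \in \Lambda$ verifies the remaining condition. Hence $H \in \Omega_\Lambda^{(i)}$.

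For the reverse inclusion $\bigcup_{i=1}^d \Omega_\Lambda^{(i)} \subseteq \Omega_\Lambda$, take any $H \in \Omega_\Lambda^{(i)}$; then there is some $u \in \Lambda$ with $(u,u_i)$ available (in $E_H$), and a path from $u_i$ to $t$ in $H$. Prepending the edge $(u,u_i)$ to this path produces a directed path from $u \in \Lambda$ to $t$ in $H$, which is exactly the condition $\rch{\Lambda}{t}{H}$ defining $\Omega_\Lambda$.

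The only subtle step is the forward direction, where the assumption $t \notin \Lambda$ is essential: without it, a path from $\Lambda$ to $t$ could stay entirely inside $\Lambda$ and never cross an edge into $\partial\Lambda$, so no index $i$ would be produced. Everything else amounts to unwinding the definitions of $\Omega_\Lambda$, $\Omega_\Lambda^{(i)}$, and $\partial\Lambda$, so I do not expect any nontrivial obstacle.
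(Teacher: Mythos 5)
Your proof is correct and follows essentially the same route as the paper's: both inclusions are shown by unwinding the definitions, with the forward direction identifying the vertex of $\partial\Lambda$ at which a path from $\Lambda$ to $t$ first leaves $\Lambda$. Your explicit "first exit index $j$" is in fact slightly more careful than the paper, which tacitly assumes the chosen path leaves $\Lambda$ at its second vertex; note also that you correctly read the condition in the definition of $\Omega_\Lambda^{(i)}$ as requiring the edge $(u,u_i)$ to lie in $E_H$ (as the paper's own proof and \Cref{lemma-expand-R} do), rather than merely in $E$ as the displayed definition literally states.
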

\begin{proof}
We first show $\cup_{i=1}^d \Omega_\Lambda^{(i)} \subseteq \Omega_\Lambda$. Fix any $H \in \cup_{i=1}^d \Omega_\Lambda^{(i)}$, say $H \in \Omega_\Lambda^{(i^*)}$ ($i^* \in [d]$ may not be unique, in which case we pick an arbitrary one). Then $\Lambda$ can reach $t$ in $H$, because we can first move from $\Lambda$ to $u_{i^*}$ and then move from $u_{i^*}$ to $t$. This implies $H \in \Omega_\Lambda$.
We next show $ \Omega_\Lambda \subseteq \cup_{i=1}^d \Omega_\Lambda^{(i)}$. Fix any $H \in \Omega_\Lambda$. There is a path  from $\Lambda$ to $t$ in $H$. Say the path is $w_1, w_2,\ldots, w_p = t$. Then $w_1 \in \Lambda$ (hence $w_1 \neq t$) and $w_2 = u_{i^*}$ for some $i^* \in [d]$.  Hence, $H$ contains the edge $(w_1,u_{i^*})$ and $\rch{u_{i^*}}{t}{H}$. This implies $H \in \cup_{i=1}^d \Omega_\Lambda^{(i)}$.
\end{proof}

Similar to \eqref{eqn:pi_Lambda}, we define $\pi^{(i)}$,\footnote{One may notice that if $\Omega^{(i)}_\Lambda = \emptyset$, then $\pi^{(i)}_\Lambda$ is not well-defined. \Cref{remark-welldfine} explains that $\Omega^{(i)}_\Lambda = \emptyset$ never happens because of \Cref{lem-input-condition}.} whose support is $\Omega_{\Lambda}^{(i)}$, by
\begin{align}
  \pi^{(i)}_\Lambda(H) &\defeq \frac{w_\Lambda(H)}{R^{(i)}_\Lambda }, \quad\text{where } R^{(i)}_\Lambda \defeq \sum_{H \in \Omega^{(i)}_{\Lambda}}w_\Lambda(H).\label{eqn:pi^i_Lambda}
\end{align}
In order to perform the Karp-Luby style estimation, we need to be able to do the following three things:
\begin{enumerate}
  \item compute the value $R_\Lambda^{(i)}$ for each $i \in [d]$;
  \item draw samples from $\pi_\Lambda^{(i)}$ for each $i \in [d]$;
  \item given any $i \in [d]$ and $H \in \Omega_\Lambda$, determine whether $H \in \Omega_\Lambda^{(i)}$.
\end{enumerate}
Suppose we can do them for now.\footnote{We will do (1) and (2) approximately rather than exactly. This will incur some error that will be controlled later.}
Consider the following estimator $Z_\Lambda$:
\begin{enumerate}
  \item draw an index $i\in [d]$ such that $i$ is drawn with probability proportional to $R_\Lambda^{(i)}$;
  \item draw an sample $H$ from $\pi_\Lambda^{(i)}$;
  \item let $Z_\Lambda \in \{0,1\}$ indicate whether $i$ is the smallest index $j \in [d]$ satisfying $H \in \Omega_\Lambda^{(j)}$. 
\end{enumerate}
It is straightforward to see that 
\begin{align}
	\Ex[Z_\Lambda] &= \sum_{i=1}^d \frac{R^{(i)}_\Lambda}{\sum_{j=1}^d R^{(j)}_\Lambda } \sum_{H \in \Omega^{(i)}_\Lambda} \pi^{(i)}_\Lambda(H) \cdot \boldsymbol{1}\left[ H \in \Omega_\Lambda^{(i)} \land \tp{ \forall j < i, H \notin \Omega_\Lambda^{(j)}}\right]\notag\\
	&= \frac{\sum_{i=1}^d \sum_{H \in \Omega_\Lambda^{(i)}} w_\Lambda(H) \cdot \boldsymbol{1} \left[H \in \Omega_\Lambda^{(i)} \land \tp{\forall j < i, H \notin \Omega_\Lambda^{(j)}}\right] }{\sum_{j=1}^d R^{(j)}_\Lambda }\notag\\
    \text{(by \Cref{lemma-union})}\quad &= \frac{ R_\Lambda }{\sum_{j=1}^d R^{(j)}_\Lambda} \geq \frac{1}{d} \geq \frac{1}{n},\label{eqn:expectation-lb}
\end{align}
where the first inequality holds because each $H$ belongs to at most $d$ different sets.
Since $Z_\Lambda$ is a $0$/$1$ random variable, $\Var{}{Z_\Lambda} \leq 1$. 
Hence, we can first estimate the expectation of $Z_\Lambda$ by repeating the process above and taking the average, and then use $\Ex[Z_\Lambda]\sum_{j=1}^d R^{(j)}_\Lambda$ as the estimator $\widetilde{R}_\Lambda$ for $R_\Lambda$. 
However, in the input of \textsf{ApproxCount}, we only have estimates $\widetilde{R}_{u_i}$ and a limited set of samples $S_{u_i}$ for each $u_i \in \partial\Lambda$.
Our algorithm will need to handle these imperfections.

The third step of implementing the estimator $Z_\Lambda$ is straightforward by a BFS.
For the first step, $R_\Lambda^{(i)}$ can be easily computed given $R_{u_i}$, and we will use the estimates $\widetilde{R}_{u_i}$ instead.
For the second step, define
\begin{align*}
	\delta_\Lambda(u_i) \defeq \{ (w,u_i) \in E \mid w \in \Lambda \}.
\end{align*}
To sample from $\pi^{(i)}_\Lambda$, we shall sample at least one edge in $\delta_\Lambda(u_i)$,
sample a subgraph $H$ from $\pi_{u_i}$, and add all other edges independently.
These are summarized by the next lemma.
\begin{lemma}\label{lemma-expand-R}
  For any $i \in [d]$, it holds that $R^{(i)}_\Lambda = \tp{1-\prod_{u \in \delta_\Lambda(u_i)} q_{(u,u_i)}}R_{u_i}$ and a random sample $H'=(V, E_{H'}) \sim \pi^{(i)}_\Lambda$ can be generated by the following procedure:
\begin{itemize}
	\item sample $H=(V_{u_i},E_H) \sim \pi_{u_i}$;
	\item let $E_{H'} = E_H \cup D$, where $D \subseteq \delta_\Lambda(u_i)$ is a random subset with probability proportional to 
      \begin{align}\label{eqn:D-lemma-4}
        \boldsymbol{1}[D \neq \emptyset] \cdot \prod_{e \in \delta_\Lambda(u_i) \cap D}(1-q_e) \prod_{f \in \delta_\Lambda(u_i) \setminus D}q_f;
      \end{align}
	\item add each $e \in E \setminus (E_{u_i} \cup \delta_\Lambda(u_i))$ into $E_{H'}$ independently with probability $1-q_e$.
\end{itemize}
All three steps above handle mutually exclusive edge sets and thus are mutually independent.
\end{lemma}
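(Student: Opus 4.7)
The plan is to decompose the edge set $E$ of $G$ into three pairwise disjoint parts and show that both the weight $w_\Lambda$ and membership in $\Omega^{(i)}_\Lambda$ factor cleanly along this partition. Specifically, I would consider $E_{u_i}$, $\delta_\Lambda(u_i)$, and $E^\ast \defeq E \setminus (E_{u_i} \cup \delta_\Lambda(u_i))$. The main thing to verify is that $\delta_\Lambda(u_i)$ and $E_{u_i}$ are disjoint: any $(c,u_i) \in \delta_\Lambda(u_i)$ has tail $c \in \Lambda$, and if $c$ were in $V_{u_i}$ then $u_i \leadsto c$ together with the edge $(c,u_i)$ would form a cycle, contradicting that $G$ is a DAG.

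Next I would characterise $\Omega_\Lambda^{(i)}$ via this decomposition. For $H=(V,E_H)$ write $E_H = E_1 \sqcup D \sqcup E_2$ with $E_1 = E_H \cap E_{u_i}$, $D = E_H \cap \delta_\Lambda(u_i)$, $E_2 = E_H \cap E^\ast$. I claim $H\in\Omega_\Lambda^{(i)}$ iff (a) $D\neq\emptyset$ and (b) $u_i \leadsto t$ using only edges in $E_1$. The nontrivial direction is (b): any $u_i$-to-$t$ path in $H$ has every internal vertex both reachable from $u_i$ and reaching $t$, so it lies in $V_{u_i}$ and uses only edges of $E_{u_i}$. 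Under this characterisation, $w_\Lambda(H)$ factors as
\[
  w_\Lambda(H) \;=\; w_{u_i}(H\!\restriction\! E_{u_i}) \cdot \Bigl(\prod_{e\in D}(1-q_e)\prod_{f\in\delta_\Lambda(u_i)\setminus D}q_f\Bigr)\cdot\Bigl(\prod_{e\in E_2}(1-q_e)\prod_{f\in E^\ast\setminus E_2}q_f\Bigr),
\]
because the three edge-sets are disjoint and the $E_{u_i}$-factor matches $w_{u_i}$ exactly by (b).

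Summing over $\Omega_\Lambda^{(i)}$ then splits into a product of three independent sums: the $E_1$ sum gives $R_{u_i}$ by definition; the $D$ sum over nonempty subsets of $\delta_\Lambda(u_i)$ equals $1 - \prod_{u\in\delta_\Lambda(u_i)}q_{(u,u_i)}$ (total mass $1$ minus the empty-set term); and the $E_2$ sum over all subsets of $E^\ast$ equals $1$. This yields the formula for $R_\Lambda^{(i)}$.

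Finally, the same factorisation shows that under $\pi_\Lambda^{(i)}$ the three restrictions of $E_H$ to $E_{u_i}$, $\delta_\Lambda(u_i)$, and $E^\ast$ are independent, with marginals exactly $\pi_{u_i}$, the distribution proportional to \eqref{eqn:D-lemma-4}, and independent $\mathrm{Bernoulli}(1-q_e)$ for $e\in E^\ast$, respectively. Hence sampling each of the three parts independently from these marginals produces a sample from $\pi_\Lambda^{(i)}$, which is exactly the three-step procedure in the lemma. The main obstacle is setting up the disjointness and the path-containment carefully; once those are in place, the rest is a clean factorisation argument.
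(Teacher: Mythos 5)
Your proposal is correct and follows essentially the same route as the paper's proof: both rest on the observation that $E_{u_i}$, $\delta_\Lambda(u_i)$, and the remaining edges are pairwise disjoint (with acyclicity ruling out $\delta_\Lambda(u_i)\cap E_{u_i}\neq\emptyset$) and that $\rch{u_i}{t}{H}$ depends only on $E_H\cap E_{u_i}$, so the weight and the partition function factor into the three independent pieces. The paper phrases this probabilistically (as independence of events over disjoint edge sets and an explicit computation of the conditional distribution $\pi^{(i)}_\Lambda(H')$), whereas you phrase it as a combinatorial factorization of $w_\Lambda$ and of the sum over $\Omega^{(i)}_\Lambda$; the content is the same, and your explicit verification that any $u_i$--$t$ path in $H$ stays inside $G_{u_i}$ fills in a step the paper only asserts.
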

\begin{proof}
If each edge $e$ in $G$ is removed independently with probability $q_e$, we have a random spanning subgraph $\+G=(V,\+E)$. By the definition of $R^{(i)}_\Lambda$, 
\begin{align*}
	R^{(i)}_\Lambda &= \Pr\left[\exists u \in \Lambda, \text{ s.t. } (u,u_i)\in \+E  \,\land\, \rch{u_i}{t}{\+G}\right]\\
	 &= \Pr\left[\exists u \in \Lambda, \text{ s.t. } (u,u_i)\in \+E \right] \cdot \Pr\left[\rch{u_i}{t}{\+G} \mid \exists u \in \Lambda, \text{ s.t. } (u,u_i)\in \+E  \right].
\end{align*}
It is easy to see $\Pr[\exists u \in \Lambda, \text{ s.t. } (u,u_i)\in \+E ] = 1-\prod_{u \in \delta_\Lambda(u_i)} q_{(u,u_i)}$. For the second conditional probability, note that the event $\rch{u_i}{t}{\+G}$ depends only on the randomness of edges in graph $G_{u_i}$. In other words, for any edges $e \in E \setminus E_{u_i}$, whether or not $e$ is removed has no effect on the $u_i-t$ reachability. Due to acyclicity, all edges in $\delta_\Lambda(u_i)$ are not in the graph $G_{u_i}$. We have
\begin{align*}
	R^{(i)}_\Lambda = \tp{1-\prod_{u \in \delta_\Lambda(u_i)} q_{(u,u_i)}}R_{u_i}.
\end{align*}

By definition \eqref{eqn:pi^i_Lambda}, 
$\pi^{(i)}_\Lambda$ is the distribution of $\+G=(V,\+E)$ conditional on $\exists u \in \Lambda, \text{ s.t.~} (u,u_i)\in \+E$ and $\rch{u_i}{t}{\+G}$. For any graph $H' = (V,E_{H'})$ satisfying $\exists u \in \Lambda, \text{ s.t. } (u,u_i)\in E_{H'}$ and $\rch{u_i}{t}{{H'}}$, we have
\begin{align*}
\pi^{(i)}_\Lambda(H') &= \Pr\left[\+G = H' \mid \exists u \in \Lambda, \text{ s.t. } (u,u_i)\in \+E \,\land\, \rch{u_i}{t}{\+G}\right]\\
	 &= \frac{\Pr[\+G = H']}{R^{(i)}_\Lambda}=\frac{\Pr[\+G = H']}{(1-\prod_{u \in \delta_\Lambda(u_i)} q_{(u,u_i)})R_{u_i}}\\
&= \prod_{e \in E_{H'}: e \in E \setminus E_{u_i}}(1-q_e)\prod_{f \notin E_{H'}: f \in E \setminus E_{u_i}}q_f \cdot \frac{w_{u_i}(H'[V_{u_i}])}{(1-\prod_{u \in \delta_\Lambda(u_i)} q_{(u,u_i)})R_{u_i}}\\
&= \pi_{u_i}(H'[V_{u_i}]) \cdot \frac{ \prod_{e \in \delta_{\Lambda}(u_i) \cap E_{H'}}(1-q_e) \prod_{f \in \delta_\Lambda(u_i) \setminus E_{H'} }q_f }{1-\prod_{u \in \delta_\Lambda(u_i)} q_{(u,u_i)}}\\
 &\qquad\cdot \prod_{\substack{ e \in E_{H'}:\\ e \in E \setminus (E_{u_i} \cup \delta_{\Lambda}(u_i) ) } }(1-q_e)\prod_{ \substack{ f \notin E_{H'}:\\ f \in E \setminus (E_{u_i} \cup \delta_{\Lambda}(u_i) ) }}q_f.
\end{align*}
The probability above exactly matches the procedure in the lemma.
\end{proof}
The first sampling step in \Cref{lemma-expand-R} can be done by directly using the samples from $S_{u_i}$.
We still need to show that the second step in \Cref{lemma-expand-R} can be done efficiently.
\begin{lemma}\label{lem-eff}
There is an algorithm such that
given a set $S= \{1,2,\ldots,n\}$ and $n$ numbers $0 \leq q_1, q_2,\ldots,q_n < 1$, it return a random non-empty subset $D \subseteq S$ with probability proportional to $\boldsymbol{1}[D\neq \emptyset]\prod_{i \in D}(1-q_i)\prod_{j \in S \setminus D}q_j$ in time $O(n)$.
\end{lemma}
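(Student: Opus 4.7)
The plan is to sample the minimum element of $D$ first from its exact marginal distribution, and then independently sample whether each remaining element of $S$ lies in $D$. Let $p_i \defeq 1 - q_i$ so that the target weight on a non-empty subset $D$ is $W(D) = \prod_{i \in D} p_i \prod_{j \in S \setminus D} q_j$, with normalizing constant $Z = \sum_{D \ne \emptyset} W(D) = 1 - \prod_{i=1}^n q_i$. For the marginal of $\min D$, I would compute, for each $i \in [n]$,
\begin{align*}
\Pr[\min D = i] = \frac{1}{Z}\sum_{\substack{D \ni i \\ D \cap [i-1] = \emptyset}} W(D) = \frac{1}{Z}\, p_i \prod_{j < i} q_j \prod_{k > i}(p_k + q_k) = \frac{p_i \prod_{j < i} q_j}{1 - \prod_{j=1}^n q_j},
\end{align*}
using $p_k + q_k = 1$. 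The probabilities $p_i \prod_{j<i} q_j$ telescope to $Z$, confirming they form a valid distribution on $[n]$.

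Given the above, the algorithm is: precompute the prefix products $Q_i \defeq \prod_{j < i} q_j$ for $i = 1, \dots, n+1$ in $O(n)$ time; sample an index $i^\star \in [n]$ via inverse CDF using the weights $p_i Q_i$ in $O(n)$ time (the denominator $Z = 1 - Q_{n+1}$ is positive since every $q_j < 1$); and finally, independently for each $k > i^\star$, add $k$ to $D$ with probability $p_k$, outputting $D = \{i^\star\} \cup \{k > i^\star : k \text{ chosen}\}$. The total running time is $O(n)$.

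For correctness, fix any non-empty $D$ and let $i^\star = \min D$. The probability the algorithm outputs exactly $D$ is
\begin{align*}
\frac{p_{i^\star}\, Q_{i^\star}}{Z} \cdot \prod_{\substack{k > i^\star \\ k \in D}} p_k \prod_{\substack{k > i^\star \\ k \notin D}} q_k = \frac{1}{Z}\prod_{j < i^\star} q_j \cdot p_{i^\star} \prod_{\substack{k > i^\star \\ k \in D}} p_k \prod_{\substack{k > i^\star \\ k \notin D}} q_k = \frac{W(D)}{Z},
\end{align*}
where the last equality rearranges the three products using $\{j < i^\star\} \subseteq S \setminus D$ and $i^\star \in D$. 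This is exactly the target distribution.

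There is no substantive obstacle here; the only point worth checking is that the marginal on $\min D$ has the simple closed form above (which follows from a one-line telescoping), and that the support restriction $D \ne \emptyset$ is handled automatically by forcing $i^\star \in D$. A naive alternative — sampling each $i$ independently with probability $p_i$ and rejecting on $D = \emptyset$ — can have arbitrarily many iterations when $\prod_j q_j$ is close to $1$, which is why the minimum-first approach is required for the $O(n)$ guarantee.
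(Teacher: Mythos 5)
Your proof is correct. It takes a cleaner route than the paper's, though the two end up producing essentially the same sampler. The paper proceeds by the standard sequential conditional-marginal (JVV-style) reduction: it samples $\sigma_1, \sigma_2, \dots$ one bit at a time, writes $\Pr[\forall j\le i,\sigma_j=c_j]$ in closed form in two cases (all-zeros-so-far versus at least one $1$ so far), and observes that a naive implementation of the resulting conditional ratios is $O(n^2)$ but can be made $O(n)$ by reusing prefix/suffix products. Your version makes the underlying structure explicit: the only nontrivial decision is where the first $1$ falls, so you directly derive the marginal law of $\min D$ (which telescopes to a valid distribution precisely because $p_i + q_i = 1$), and after fixing $\min D = i^\star$ the remaining coordinates are genuinely independent Bernoullis, so the algorithm and correctness proof fall out with no casework. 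What the paper's route buys is that it fits the same sequential template used elsewhere in the paper; what your route buys is a self-contained, one-line-telescoping correctness argument and an $O(n)$ bound that is immediate rather than asserted at the end. Both correctly identify that the obvious rejection-sampling alternative fails when $\prod_j q_j$ is close to $1$.
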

\begin{proof}
Note that $D$ can be obtained by sampling each $i$ in $S$ independently with probability $1-q_i$ conditional on the outcome is non-empty. A natural idea is to use rejection sampling, but $1 - \prod_{i=1}^n q_i$ can be very small. 
Here we do this in a more efficient way.

We view any subset $D \subseteq S$ as an $n$-dimensional vector $\sigma\in\{0,1\}^S$. We sample $\sigma_i$ for $i$ from $1$ to $n$ one by one. In every step, conditional on $\sigma_1=c_1,\sigma_2=c_2,\ldots,\sigma_{i-1} = c_{i-1} \in \{0,1\}$, we compute the marginal of $\sigma_i$ and sample from the marginal. The marginal can be computed as follows: for any $c_i \in \{0,1\}$,
\begin{align*}
	\Pr\left[\sigma_i = c_i \mid \forall j < i, \sigma_j=c_j\right]= \frac{\Pr[ \forall j \leq i, \sigma_j=c_j]}{\Pr[\forall j \leq i-1, \sigma_j=c_j]}.
\end{align*}
It suffices to compute $\Pr[ \forall j \leq i, \sigma_j=c_j]$ for any $1\leq i \leq n$. Let $\Omega$ denote the set of all assignments for $\{i+1,i+2,\ldots,n\}$. For any $\tau \in \Omega$, $\tau$ is an $(n-i)$-dimensional vector, where $\tau_k \in \{0,1\}$ is the value for $k \geq i + 1$.
We use $(c_j)_{j\leq i}+\tau$ to denote an $n$-dimensional vector.
For any $j$, let $f_j(0) = q_j$ and $f_j(1) = 1 - q_j$.
Note that
\begin{align*}
	\Pr\left[\forall j \leq i, \sigma_j=c_j\right] = \frac{\prod_{j=1}^i f_j(c_j) \sum_{\tau \in \Omega}\prod_{k = i+1}^n f_k(\tau_k) \boldsymbol{1}\left[ (c_j)_{j\leq i}+\tau \text{ is not zero vector} \right]}{1 - \prod_{j=1}^nq_i}.
\end{align*}
Hence, if $c_1 + c_2 + \ldots + c_i \geq 1$, then
\begin{align*}
	\Pr\left[\forall j \leq i, \sigma_j=c_j\right] = \frac{\prod_{j=1}^i f_j(c_j)}{1 - \prod_{j=1}^nq_i}.
\end{align*}
If  $c_1 + c_2 + \ldots + c_i = 0$, then
\begin{align*}
\Pr\left[\forall j \leq i, \sigma_j=c_j\right] 	= \frac{(1 - \prod_{k=i+1}^nq_k)\prod_{j=1}^i f_j(c_j)}{1 - \prod_{j=1}^nq_i}.
\end{align*}
Hence, every conditional marginal can be computed by the formula above in time $O(n)$. An naive sampling implementation takes $O(n^2)$ time to compute all the marginal probabilities, but it is not hard to see that a lot of prefix or suffix products can be reused and the total running time of sampling can be reduced to~$O(n)$.
\end{proof}

Now, we are almost ready to describe \textsf{ApproxCount} (\Cref{alg-count}). For any $u_i$, we have an approximate value $\widetilde{R}_{u_i}$ of ${R}_{u_i}$ and we also have a set $S_{u_i}$ of $\ell$ approximate samples from the distribution $\pi_{u_i}$. By \Cref{lemma-expand-R} and \Cref{lem-eff}, we can efficiently approximate $R^{(i)}_\Lambda$ and generate approximate samples from $\pi^{(i)}_\Lambda$. 
Hence, we can simulate the process described below \Cref{lemma-union} to estimate $R_\Lambda$.

However, to save the number of samples, there is a further complication.
Our algorithm estimates the expectation of $\Ex[Z_{\Lambda}]$ in two rounds and then takes the median of estimators.
Recall \eqref{eq-def-ell}.
We further divide~$\ell$ by introducing the following parameters:
\begin{align*}
  \ell = B\ell_0,\quad B \defeq 60n + 150m,\quad \ell_0 \defeq \ell_1 + 500\ell_2, \quad \ell_1 \defeq 400n, \quad \ell_2 \defeq\lceil 10^4 n^2\max\{m^2,\epsilon^{-2}\}\rceil.
\end{align*} 
Then we do the following.
\begin{itemize}
  \item For any $u_i \in \partial \Lambda$, we divide all $\ell$ samples in $S_{u_i}$ into $B$ blocks, each containing $\ell_0$ samples. Denote the $B$ blocks by $S_{u_i}^{(1)},S_{u_i}^{(2)},\ldots,S_{u_i}^{(B)}$.
    Each block here is used for one estimatior.
  \item For each $i \in [d]$ and $j \in [B]$, we further partition $S_{u_i}^{(j)}$ into two multi-sets $S_{u_i}^{(j,1)}$ and $S_{u_i}^{(j,2)}$, 
    where $S_{u_i}^{(j,1)}$ has $\ell_1$ samples and $S_{u_i}^{(j,2)}$ has $500\ell_2 $ samples.
    These two sets are used for the two rounds, respectively.
  \item For each $j \in [B]$, we do the following two round estimation:
	\begin{enumerate}
		\item use samples in $(S_{u_i}^{(j,1)})_{i \in [d]}$ to obtain a \emph{constant-error} estimation $\widehat{Z}_{\Lambda}^{(j)}$ of $\Ex[Z_{\Lambda}]$;
		\item use $\widehat{Z}_{\Lambda}^{(j)}$ and samples in $(S_{u_i}^{(j,2)})_{i \in [d]}$ to obtain a more accurate  \emph{estimation} $\widetilde{Z}_{\Lambda}^{(j)}$ of $\Ex[Z_{\Lambda}]$;
        \item let $Q^{(j)}_\Lambda \gets \widetilde{Z}_\Lambda^{(j)}\sum_{i=1}^d \widetilde{R}_{\Lambda}^{(i)}$, where $\widetilde{R}_{\Lambda}^{(i)}\defeq\big(1-\prod_{u \in \delta_\Lambda(u_i)} q_{(u,u_i)}\big) \widetilde{R}_{u_i}$ for each $i\in [d]$.
	\end{enumerate}
	\item Return the median number $\widetilde{R}_{\Lambda} \defeq \text{median}\left\{Q_{\Lambda}^{(1)},Q_{\Lambda}^{(2)},\ldots,Q_{\Lambda}^{(B)}\right\}$.
\end{itemize}

A detailed description of \textsf{ApproxCount} is given in \Cref{alg-count}.
It uses a subroutine $\textsf{Estimate}$, which generates the Karp-Luby style estimator and is described in \Cref{alg-es}.

\begin{algorithm}[h] 
  \caption{$\textsf{ApproxCount}\tp{V,E,\Lambda,(\widetilde{R}_{w}, S_w)_{w \in \partial \Lambda}}$}\label{alg-count} 
  \KwIn{a graph $G=(V,E)$, a subset $\Lambda \subseteq V$, all $(\widetilde{R}_{w}, S_w)$ for $w \in \partial \Lambda$, where $\partial \Lambda = \{w \in V \setminus \Lambda \mid \exists w' \in \Lambda \text{ s.t. } (w',w)\in E\}$;}
  \KwOut{an estimator  $\widetilde{R}_\Lambda$ of $R_\Lambda$}
\textbf{if} $t \in \Lambda$, \textbf{then} \Return $0$; \textbf{if} $\Lambda$ cannot reach $t$ in $G$, \textbf{then} \Return 1\label{line-trival}\;
\For{$u_i \in \partial \Lambda$}{
   $\widetilde{R}_{\Lambda}^{(i)} \gets \big(1-\prod_{u \in \delta_\Lambda(u_i)} q_{(u,u_i)}\big) \widetilde{R}_{u_i}$\label{line-R-tilde}\;
   partition  $S_{u_i}$ (arbitrarily) into $B$ multi-sets, denoted by $S_{u_i}^{(j)}$ for $j\in [B]$, where $B = 60n + 150m$ and each $S_{u_i}^{(j)}$ has $\ell_0 = \ell_1 + 500\ell_2 $ samples\; 
   for each $j\in[B]$, partition $S_{u_i}^{(j)}$ further into two multi-sets $S_{u_i}^{(j,1)}$ and $S_{u_i}^{(j,2)}$, where $|S_{u_i}^{(j,1)}| = \ell_1$ and $|S_{u_i}^{(j,2)}|=500\ell_2$\; 
}
\For{$j$ from $1$ to $B$}{
$\widehat{Z}_{\Lambda}^{(j)} \gets \textsf{Estimate}\tp{(S_{u_i}^{(j,1)})_{i \in  [d]}, \ell_1, \ell_1} $\label{line-round-1}\;
$\widetilde{Z}^{(j)}_\Lambda \gets \textsf{Estimate}\tp{(S_{u_i}^{(j,2)})_{i \in [d]}, 500\ell_2, 25\ell_2\cdot \min\{ 2/ \widehat{Z}_{\Lambda}^{(j)},4n\} }$\label{line-round-2}\;
$Q^{(j)}_\Lambda \gets \widetilde{Z}^{(j)}_\Lambda  \sum_{i=1}^d \widetilde{R}_{\Lambda}^{(i)}$\;
}
\Return $\widetilde{R}_{\Lambda} \defeq \text{median}\left\{Q_{\Lambda}^{(1)},Q_{\Lambda}^{(2)},\ldots,Q_{\Lambda}^{(B)} \right\}$\label{line-Q}\;
\end{algorithm}

\begin{algorithm}[h] 
  \caption{$\textsf{Estimate}\tp{(S^{\-{es}}_{u_i})_{i \in [d]}, \ell_{\-{es}} ,T}$}\label{alg-es} 
  \KwIn{a set of samples $S^{\-{es}}_{u_i}$ for each $i \in [d]$, where $|S^{\-{es}}_{u_i}| = \ell_{\-{es}}$, a threshold $T$}
  \KwOut{an estimator  $Z_{\-{es}}$ of $\Ex[Z_\Lambda]$}
 for each $i \in [d]$, let $c_i = 0$\;
\For{$k$ from $1$ to $T$}{
	draw an index $i\in [d]$ such that $i$ is drawn with probability proportional to $\widetilde{R}_{\Lambda}^{(i)}$\; 
	$c_i \gets c_i + 1$\;
	\textbf{if } $c_i > \ell_{\-{es}}$ \textbf{ then } \Return $0$\label{line-break}\;
	let $H=(V_{u_i},E_H)$ be the $c_i$-th sample from $S^{(j)}_{u_i}$\label{line-pick-sample}\; 
	do the following transformation on $H$ to get $H'=(V,E_{H'})$\;
		 $\bullet\quad$  let $E_{H'} \gets E_{H}$\;
         $\bullet\quad$ draw $D \subseteq \delta_\Lambda(u_i)$ with probability proportional to \eqref{eqn:D-lemma-4}, and let $E_{H'} \gets E_{H'} \cup D$\;
	 $\bullet\quad$ for each $e \in E \setminus (E_{u_i} \cup \delta_\Lambda(u_i))$, add $e$ into $E_{H'}$ independently with probability $1-q_e$\label{line-H'}\;
	
	let $Z_{\-{es}}^{(k)} \in \{0,1\}$ indicate whether $i$ is the smallest index $t \in [d]$ satisfying $H' \in \Omega_\Lambda^{(t)}$\label{line-test}\;
	}
\Return $Z_{\-{es}} \defeq \frac{1}{T}\sum_{k = 1}^T Z^{(k)}_{\-{es}} $\; 
\end{algorithm}

Each time \Cref{alg-count} finishes, its input $(V,E,\Lambda)$ and output $\widetilde{R}_{\Lambda}$ are stored in the memory. If \Cref{alg-count} is ever evoked again with the same input parameters $(V,E,\Lambda)$, we simply return $\widetilde{R}_{\Lambda}$ from the memory.

\Cref{alg-count} first obtains the constant-error estimation $\widehat{Z}_\Lambda^{(j)}$ in~\Cref{line-round-1}.
Next, it puts $\widehat{Z}_\Lambda^{(j)} $ into the parameters and run the subroutine \textsf{Estimate} again to get a more accurate estimation $\widetilde{Z}_\Lambda^{(j)} $. 
The benefit of this two-round estimation is that we can save the number of samples maintained for each vertex. 
It costs only a small number of samples to get the crude estiamtion, 
but the crude estimation carries information of the ratio $\frac{\sum_{t\in[d]}R_{\Lambda}^{(t)}}{R_{\Lambda}}$,
which allows us to fine tune the number of samples required per vertex for the good estimation.
To be more specific, in the second call of the subroutine $\textsf{Estimate}$, the number of overall samples, namely the parameter $T$ which depends on $\widehat{Z}_\Lambda^{(j)}$,
can still be as large as $\Omega(\ell_2 n)$ in the worst case,
and yet each $S^{(j,2)}_{u_i}$ has only $O(\ell_2)$ samples in the block.
In the analysis (\Cref{lemma-ac}), we will show that this many samples per vertex suffice with high probability and \Cref{line-break} of \Cref{alg-es} (the failure case) is executed with low probability.
The reason is that, roughly speaking, large $T$ means large overlap among $\Omega_{\Lambda}^{(t)}$'s,
and the chance of hitting each vertex is roughly the same, resulting in the number of samples required per vertex close to the average.
Conversely, small $T$ means little overlap, and some vertex or vertices may be sampled much more often than other vertices,
but in this case the overall number of samples required, namely $T$, is small anyways.
To summarize, with this two-round procedure, $O(\ell)$ overall samples per vertex are enough to obtain an estimation with the desired accuracy and high probability.

\section{Analysis} \label{sec-analysis}
In this section we analyze all the algorithms.

\subsection{Analysis of \textsf{Sample}}
Let $G = (V,E)$ be the input graph of \Cref{alg-main}. Consider the subroutine $\textsf{Sample}\tp{v_k,(\widetilde{R}_{w}, S_w)_{w \in \{v_{k+1},v_{k+2},\ldots,v_n\}},\widetilde{R}_{v_k}}$ as being called by \Cref{alg-main}. Let $u\defeq v_k$, and the subroutine runs on the graph $G_u = (V_u,E_u)$.
In this section we consider a modified version of $\textsf{Sample}$, and handle the real version in \Cref{sec-main-analysis}.
Let $m$ denote the number of edges in $G$.
Suppose we can access an oracle $\+P$ satisfying:
\begin{itemize}
  \item given $u \in V_u$,  $\+P$ returns $p_0$ such that
	\begin{align}\label{eq-r1}
      1 - \frac{1}{10m}\leq \frac{p_0}{R(V_u,E_u,\{u\})} \leq 1 + \frac{1}{10m};
	\end{align}
  \item given any $E_2 \subseteq E_u$ and $\Lambda,\Lambda_1 \subseteq V$ in \Cref{line-c1} and \Cref{line-c2} of \Cref{alg-sample}, $\+P$ returns $c_0(V_u,E_2,\Lambda)$ and $c_1(V_u,E_2,\Lambda_1)$ such that
	\begin{align}
      1 - \frac{1}{10m} &\leq \frac{c_0(V_u,E_2,\Lambda)}{R(V_u,E_2,\Lambda)} \leq 1 + \frac{1}{10m}, \text{ and }\label{eq-r2}\\
      1 - \frac{1}{10m} &\leq \frac{c_1(V_u,E_2,\Lambda_1)}{R(V_u,E_2,\Lambda_1)} \leq 1 + \frac{1}{10m}.\label{eq-r3}
	\end{align}
\end{itemize}
Here, we use the convention $\frac{0}{0} = 1$ and $\frac{x}{0} = \infty$ for $x > 0$.
For any $V$, $E$ and $U \subseteq V_u$, $R(V,E,U)$ is the $U$-$t$ reliability in the graph $(V,E)$. The numbers $p_0,c_0$ and $c_1$ returned by $\+P$ can be random variables, but we assume that the inequalities above are always satisfied. Abstractly, one can view $\+P$ as a random vector $\+X_{\+P}$, where 
\begin{align*}
  \+X_{\+P} = \{p_0\}\cup\{c_0(V_u,E_2,\Lambda), c_1(V_u,E_2,\Lambda_1) \mid \text{ for all possible } V_u,E_2,\Lambda,\Lambda_1\}.
\end{align*}
The dimension of $\+X_{\+P}$ is huge because there may be exponentially many possible $V_u,E_2,\Lambda,\Lambda_1$ in \Cref{line-c1} and \Cref{line-c2}. 
The oracle first draw a sample $x_{\+P}$ of $\+X_{\+P}$, then answers queries by looking at $x_{\+P}$ on the corresponding coordinate. 
The conditions above are assumed to be satisfied with probability $1$. 
Note that this $\+X_{\+P}$ is only for analysis purposes and is not part of the real implementation.

The modified sampling algorithm replaces \Cref{line-p_0}, \Cref{line-c1} and \Cref{line-c2} of \Cref{alg-sample} by calling the oracle $\+P$. 
In that case we do not need the estimates $(\widetilde{R}_w,S_w)$ for $w=v_{k+1},\dots,v_n$ and $\widetilde{R}_{v_k}$,
and thus may assume that the input is only $u=v_k$.
Recall that $n$ is the number of vertices in the input graph.
\begin{lemma}\label{lemma-sample}
  Given any $u = v_k \in V$, the with probability at least $1 - (\eps / n)^{200}$,  the modified sampling algorithm does not crash and returns a perfect independent sample from the distribution $\pi_u$, where the probability is over the independent randomness $\+D_u$ inside the \textsf{Sample} subroutine. The running time is $\widetilde{O}(N(|E_u|+|V_u|))$, where $N$ is the time cost for one oracle call and $\widetilde{O}$ hides $\text{polylog}(n/\eps)$ factors. 
\end{lemma}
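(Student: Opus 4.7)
The plan is to argue that the modified subroutine produces a perfect sample from $\pi_u$ because (i) the oracle guarantees prevent all four crash cases listed after \Cref{alg-sample}, (ii) the sequential sampling with approximate marginals produces $H$ with a distribution that is only a constant multiplicative factor away from $\pi_u$, and (iii) the filter step at the end perfectly corrects this distribution while accepting with constant probability per attempt.

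I would first rule out crash cases (1) and (2) by maintaining as an invariant at the start of every while-loop iteration that $R(V_u,E_2,\Lambda)>0$. Initially this holds because $R_u>0$ by the assumption $G_s=G$. It is preserved because the oracle bounds \eqref{eq-r2}--\eqref{eq-r3} give $c_0>0$ (resp.\ $c_1>0$) exactly when the corresponding true reliability is positive, and \Cref{line-sample-c} assigns positive probability to a side only when its reliability is positive. The invariant, together with $t\notin\Lambda$, forces $\partial\Lambda\neq\emptyset$ in \Cref{line-pick}, handling case~(1); and since $q_e<1$ and $c_1>0$, the denominator $q_e c_0+(1-q_e)c_1$ is strictly positive, handling case~(2). \Cref{cor-loop} ensures $\Lambda_1=\Lambda\cup\{w^*\}$, which is needed to identify $p^*$ below with the true conditional marginal.

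Next, for each scanned edge $e$, let $p^*_e$ denote the exact conditional probability of the choice the algorithm makes, and $\tilde{p}_e$ the approximate probability it actually uses. A direct computation from \eqref{eq-r2}--\eqref{eq-r3} yields
\[
  \frac{\tilde{p}_e}{p^*_e}\in\left[\frac{1-1/(10m)}{1+1/(10m)},\,\frac{1+1/(10m)}{1-1/(10m)}\right]\subseteq\left[1-\tfrac{3}{10m},\,1+\tfrac{3}{10m}\right].
\]
Since at most $|E_u|\le m$ edges are scanned before the loop terminates, the total product ratio satisfies $\tilde{p}/p^*\in[e^{-1/3},e^{1/3}]$, where $\tilde p$ is the final value of $p$ maintained by the algorithm and $p^*$ is what it would be under exact oracles. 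With exact oracles the sequential Jerrum--Valiant--Vazirani procedure produces each $H$ with probability exactly $\pi_u(H)$, so $p^*=\pi_u(H)$, and therefore $\tilde{p}(H)\in[e^{-1/3}\pi_u(H),e^{1/3}\pi_u(H)]$. Combining this with \eqref{eq-r1} and $w_u(H)=R_u\pi_u(H)$,
\[
  \frac{w_u(H)}{4p_0\tilde{p}(H)}\;\le\; \frac{R_u\pi_u(H)}{4(1-1/(10m))R_u\cdot e^{-1/3}\pi_u(H)}\;=\;\frac{e^{1/3}}{4(1-1/(10m))}\;<\;1,
\]
eliminating case~(3).

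Finally, the joint probability that a single attempt outputs $H$ and passes the filter is $\tilde{p}(H)\cdot w_u(H)/(4p_0\tilde{p}(H))=w_u(H)/(4p_0)$, summing to an acceptance probability $R_u/(4p_0)\ge 1/5$ per attempt; conditioning on acceptance gives $\Pr[H\mid F=1]=w_u(H)/R_u=\pi_u(H)$, which is a perfect sample. With $T=\lceil 1000\log(n/\eps)\rceil$ independent attempts, the probability that all fail is at most $(4/5)^T\le(\eps/n)^{200}$, handling case~(4). The running time per attempt is at most $|E_u|$ while-loop iterations, each with $O(1)$ oracle queries of cost $N$ and amortized $O(|V_u|+|E_u|)$ bookkeeping for $\Lambda,\partial\Lambda,E_1,E_2$, giving $\widetilde{O}(N(|V_u|+|E_u|))$ in total after absorbing the $\log(n/\eps)$ factor from $T$ into $\widetilde{O}$. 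The main technical obstacle will be keeping the multiplicative error analysis in the third paragraph tight enough that the filter probability stays strictly below $1$ while remaining bounded away from $0$, so that both case~(3) is avoided and a constant per-attempt success probability can be guaranteed.
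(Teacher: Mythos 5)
Your proposal is correct and follows essentially the same route as the paper's proof: compare against the ideal exact-marginal JVV procedure, rule out crashes via the positivity correspondence between $c_0,c_1$ and the true reliabilities, bound the per-edge marginal ratio and take the product over at most $m$ edges, verify the filter probability is valid and bounded below, and amplify over $T$ attempts. (The only quibble is that $\bigl(1-\tfrac{3}{10m}\bigr)^m\ge e^{-1/3}$ fails marginally at $m=1$, but the slightly weaker true bound still keeps the filter probability below $1$, so nothing breaks.)
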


\begin{proof}
  Throughout this proof, we fix a sample $x_{\+P}$ of $\+X_{\+P}$ in advance. The oracle $\+P$ uses $x_{\+P}$ to answer the queries. We will prove that the lemma holds for any $x_{\+P}$ satisfying the three conditions above.

  We first describe an ideal sampling algorithm.
  The algorithm maintains the set $E_1,E_2$ and $\+E$ as in the \textsf{Sample} algorithm. At each step, we pick an edge $e$ according to~\Cref{line-pick} of \Cref{alg-sample}. We compute the conditional marginal probability of $\alpha_e = \Pr_{\+G=(V_u,E')\sim \pi_u} \left[ e \in E' \mid E' \cap E_1 = \+E \right]$, and add $e$ into $\+E$ with probability $\alpha_e$. Then we update $E_1, E_2$ and $\Lambda$. Once $t \in \Lambda$, $\alpha_e = 1-q_e$ for all $e \in E_2$ and we can add all subsequent edges independently.
  The ideal sampling algorithm returns an independent perfect sample.

The modified algorithm simulates the ideal process, but uses the oracle $\+P$ to compute each conditional marginal distribution $\alpha_e$. 
By the definition of conditional probability,
\begin{align}\label{eqn:alpha-e}
  \alpha_e = \frac{\Pr_{\+G=(V_u,E')\sim \pi_u} \left[ e \in E' \land  E' \cap E_1 = \+E \right]}
  {\Pr_{\+G=(V_u,E')\sim \pi_u} \left[ e \in E' \land E' \cap E_1 = \+E \right] + \Pr_{\+G=(V_u,E')\sim \pi_u} \left[ e \notin E' \land E' \cap E_1 = \+E \right]}.
\end{align}
Recall $E_2 = E_u \setminus E_1$. Let $E'_2 = E_2 \setminus e$. Recall $\Lambda_1$ is the set of vertices $u$ can reach if $\+E \cup \{e\}$ is selected. 
Conditional on that $\+E \cup \{e\}$ is selected, the probability that $u$ can reach $t$ is exactly the same as the probability~$\Lambda_1$ can reach $t$ in the remaining graph $(V_u, E'_2)$.
Then the numerator of \eqref{eqn:alpha-e} can be written as
\begin{align*}
	(1-q_e)\prod_{f \in E_1 \cap \+E}(1-q_f)\prod_{f' \in E_1 \cap \+E}q_{f'} \cdot R(V_u, E'_2, \Lambda_1),
\end{align*}
where $ R(V_u, E'_2, \Lambda_1)$ is the $\Lambda_1$-$t$ reliability in the graph $(V_u,E'_2)$.
Similarly, the second term of the denominator of \eqref{eqn:alpha-e} can be written as
\begin{align*}
	q_e\prod_{f \in E_1 \cap \+E}(1-q_f)\prod_{f' \in E_1 \cap \+E}q_{f'} \cdot R(V_u, E'_2, \Lambda).
\end{align*}
Putting them together implies 
\begin{align*}
	\alpha_e = \frac{(1-q_e)R(V_u, E'_2, \Lambda_1)}{(1-q_e)R(V_u, E'_2, \Lambda_1)+q_eR(V_u, E'_2, \Lambda)}.
\end{align*}
If $c_0$ and $c_1$ in \Cref{line-c1} and \Cref{line-c2} are exactly $R(V_u, E'_2, \Lambda)$ and $R(V_u, E'_2, \Lambda_1)$, then \Cref{line-strat} to \Cref{line-end} in \Cref{alg-sample} are the same as the ideal algorithm described above. Under this assumption, \Cref{alg-sample} cannot crash in \Cref{line-pick-sample} or \Cref{line-sample-c}.
Consider the modified algorithm in the lemma.
Note that the state of the algorithm can be uniquely determined by the pair $(E_2,\+E)$. 
By the assumption of $\+P$, we know that $R(V_u, E'_2, \Lambda) = 0$ if and only if $c_0 = 0$ and $R(V_u, E'_2, \Lambda_1) = 0$ if and only if $c_1 = 0$. 
Hence, any state $(E_2,\+E)$ appears in the modified algorithm with positive probability if and only if it appears in the ideal algorithm with positive probability. 
This implies the modified algorithm cannot crash in \Cref{line-pick-sample} or \Cref{line-sample-c}.

By the assumption of the oracle $\+P$ again, we have
\begin{align*}
  1- \frac{1}{4m} \leq \frac{10m-1}{10m+1} &\leq \frac{\frac{(1-q_e)c_1}{(1-q_e)c_1+q_e c_0}}{\alpha_e} \leq \frac{10m+1}{10m-1}\leq 1+ \frac{1}{4m},\\\
  1- \frac{1}{4m} \leq \frac{10m-1}{10m+1} &\leq \frac{\frac{q_e c_0}{(1-q_e)c_1+q_e c_0}}{1-\alpha_e} \leq \frac{10m+1}{10m-1} \leq 1+ \frac{1}{4m}.
\end{align*}
When the algorithm exits the whole loop, 
$u$ can reach $t$ and we have the remaining marginals exactly.

Finally, the algorithm gets a random subgraph $H$ and a value $p$, where $p=p(H)$ is the probability that the algorithm generates $H$. 
Note that there are at most $m$ edges in $E_u$.
Taking the product of all conditional marginals gives 
\begin{align*}
  \exp(-1/2) \leq \left(1- \frac{1}{4m}\right)^{m}  \leq \frac{p(H)}{\pi_u(H)} \leq \left(1+ \frac{1}{4m}\right)^{m} \leq \exp(1/4).
\end{align*}
Recall that 
\begin{align*}
	\pi_u(H)=\frac{w_u(H)}{R(V_u,E_u,u)}.
\end{align*}
By the assumption of the oracle $\+P$, we have 
\begin{align*}
  \frac{9}{10} \leq \frac{p_0}{R(V_u,E_u,u)} \leq \frac{11}{10}.
\end{align*}
The parameter $p_0 > 0$ because the input of \Cref{alg-sample} must satisfy $R(V_u,E_u,u) > 0$.
The filter probability $f = \Pr[F = 1 \mid H]$ in \Cref{line-filter} of \Cref{alg-sample} satisfies
\begin{align*}
	\frac{1}{16} \leq  f = \frac{w_u(H)}{4p(H)p_0} \leq 1.
\end{align*}
Hence, $f$ is a valid probability and $f \geq \frac{1}{16}$. The algorithm cannot crash in \Cref{line-filter}.
The algorithm outputs $H$ if $F = 1$.
By the analysis above, we know that $p(H) > 0\Leftrightarrow\pi_u(H) > 0$ and 
\begin{align*}
\Pr[\textsf{Sample} \text{ outputs } H] \propto p(H) \frac{w_u(H)}{p(H)p_0} = \frac{w_u(H)}{p_0} \propto w_u(H),
\end{align*}
where the last ``proportional to'' holds because $p_0$ is a constant (independent from $H$). Conditional on $F=1$, $H$ is a perfect sample. We repeat the process for $T = 1000\log \frac{n}{\eps}$ times, and each time the algorithm succeeds with probability at least $\frac{1}{16}$. The overall probability of success is at least $1 - (\eps/n)^{200}$.

The running time is dominated by the oracle calls. We can easily use data structures to maintain $\partial \Lambda$, $\Lambda$, $E_2$ and $\+E$. The total running time is $\tilde{O}((|V_u|+|E_u|)N)$.
\end{proof}

The above only deals with the modified algorithm.
Analysing the real algorithm relies on the analysis of \textsf{ApproxCount},
and we defer that to \Cref{sec-main-analysis}.

The \Cref{alg-sample} can only be evoked by \Cref{alg-main}. Fix $u = v_k$. Suppose we use \Cref{alg-sample} to draw samples from $\pi_u$.
For later analysis, we need to make clear how each random variable depends on various sources of randomness.
We abstract the modified algorithm as follows. The oracle $\+P$ is determined by a random vector $\+X_{\+P}$. The algorithm generates the inside independent randomness $\+D_u$. 
The algorithm constructs a random subgraph $H = H(\+X_{\+P},\+D_u)$ and a random indicator variable $F = F(\+X_{\+P},\+D_u)$, where $H$ and $F$ denote the random variables of the same name in the last line of \Cref{alg-sample}. \Cref{lemma-sample} shows that conditional on $F  = 1$, $H$ is an independent sample (independent from $\+X_{\+P}$) that follows $\pi_u$. We denote it by
\begin{align*}
	H(\+X_{\+P},\+D_u)|_{F(\+X_{\+P},\+D_u) = 1} \sim \pi_u.
\end{align*} 
Here, for any random variable $X$ and event $E$, we use $X|_E$ to denote the random variable $X$ conditional on $E$.
In fact, a following stronger result can be obtained from the above proof
\begin{align*}
	\forall x_{\+P} \in \Omega_{\+P}, \quad H|_{F(\+X_{\+P},\+D_u) = 1 \land \+X_{\+P} = x_{\+P}} \sim \pi_u.
\end{align*}
where  $\Omega_{\+P}$ denotes the support of $\+X_{\+P}$.
And it holds that 
\begin{align*}
	\forall x_{\+P} \in \Omega_{\+P}, \quad \Pr[F(\+X_{\+P},\+D_u) = 1 \mid \+X_{\+P} = x_{\+P}] \geq 1 - \frac{1}{(n/\eps)^{200}}.
\end{align*}
Note that the event $F(\+X_{\+P},\+D_u) = 1$ depends on the input random variable $\+X_{\+P}$. In the analysis in \Cref{sec-main-analysis}, we need to define a event $\+C$ such that $\Pr[\+C] \geq 1 -  (\eps/n)^{200}$, $\+C$ is independent from $\+X_{\+P}$ and $H(\+X_{\+P},\+D_u)|_{\+C} \sim \pi_u$. 
We actually define this event $\+C$ in a more refined probability space. The proof below defines this event explicitly. 
We also include an alternative, more conceptual, and perhaps simpler proof in \Cref{app-proof}, where the  event $\+C$ is defined implicitly. 

Consider the following algorithm $\textsf{NewSample}$. Recall that $T = \lceil 1000\log \frac{n}{\eps} \rceil$ is the parameter in \Cref{alg-sample}.
\begin{definition}[$\textsf{NewSample}$]\label{def-new-sample}
The algorithm $\textsf{NewSample}$ is the same as the  $\textsf{Sample}$ in \Cref{alg-sample}. The only difference is that before \Cref{line-last-sample}, $\textsf{NewSample}$ computes the value 
\begin{align}\label{eq-def-pK}
  p_K \defeq \frac{1-\frac{\epsilon^{200}}{n^{200}}}{1 - (1-\frac{R_u}{4p_0})^T}.
\end{align}
If $0 \leq  p_K \leq 1$, then independently sample $K \in \{0,1\}$ such that $\Pr[K = 1] = p_K$; otherwise, let $K = 0$.
\end{definition}

We remark that in the above definition, $K$ is sampled using independent randomness. Formally, let $\+D_u$ be the inside randomness of $\textsf{NewSample}$. We partition $\+D_u$ into two disjoint random strings $\+D_u^{(1)}$ and $\+D_u^{(2)}$. We use $\+D_u^{(1)}$ to simulate all steps in \Cref{alg-sample} and use $\+D_u^{(2)}$ to sample $K$. 

The value $R_u$ is the exact $u$-$t$ reliability in graph $G_u$. Indeed, we cannot compute the exact value of $R_u$ in polynomial time. We only use the algorithm $\textsf{NewSample}$ in analysis, and do not need to implement this algorithm. $\textsf{NewSample}$ draws a random variable $K$ but never uses it at all. Its sole purpose is to further refine the probability space. Thus, the following observation is straightforward to verify.
\begin{observation}\label{ob-new-sample}
Given the same input, the outputs of two algorithms $\textsf{NewSample}$ and $\textsf{Sample}$ follow the same distribution.
\end{observation}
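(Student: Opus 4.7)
The claim is essentially a triviality about augmenting a probability space with an independent auxiliary variable. My plan is to make precise the observation that $K$ is \emph{never read} by the algorithm after it is drawn, so it cannot influence the output, and hence its addition leaves the marginal distribution of the output unchanged.

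First, I would spell out the sample space. The inside randomness of $\textsf{Sample}$ is some string $\+D_u$, and the output of $\textsf{Sample}$ on input $u$ is a deterministic function $\Phi_u(\+D_u)$ taking values in $\Omega_u \cup \{\perp\}$. For $\textsf{NewSample}$, by construction (see the remark following \Cref{def-new-sample}) the randomness is partitioned as $\+D_u = (\+D_u^{(1)},\+D_u^{(2)})$ with $\+D_u^{(1)}$ and $\+D_u^{(2)}$ independent. The steps that existed already in $\textsf{Sample}$ are simulated using only $\+D_u^{(1)}$, and $\+D_u^{(2)}$ is used only in the newly added line that draws $K$. In particular, the output returned in \Cref{line-last-sample} depends solely on $\+D_u^{(1)}$, via the same deterministic function $\Phi_u$.

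Next, observe that $\+D_u^{(1)}$ in $\textsf{NewSample}$ has the same distribution as $\+D_u$ in $\textsf{Sample}$: both are drawn from the same distribution on coin tosses used to make the identical sequence of random choices (picking edges, rounding the marginals in \Cref{line-sample-c}, the filter step in \Cref{line-filter}, etc.). Therefore $\Phi_u(\+D_u^{(1)})$ and $\Phi_u(\+D_u)$ are equal in distribution, which is exactly the statement of the observation.

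There is no real obstacle here: once one commits to the partition $\+D_u=(\+D_u^{(1)},\+D_u^{(2)})$ with independent components and notes that the computation of $K$ reads only $\+D_u^{(2)}$ and the output reads only $\+D_u^{(1)}$, the equality of output distributions is immediate. The only mild care needed is to emphasize that $p_K$ in \eqref{eq-def-pK} depends on quantities ($p_0$, $R_u$) that are themselves determined by the input and by $\+D_u^{(1)}$, but since $K$ is never fed back into the remainder of the computation, this dependence is inert as far as the output is concerned.
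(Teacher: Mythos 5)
Your proposal is correct and matches the paper's reasoning: the paper gives no formal proof, only the one-line justification that $\textsf{NewSample}$ draws $K$ from the independent string $\+D_u^{(2)}$ but never uses it, which is exactly the argument you have spelled out. Nothing further is needed.
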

A natural question here is that why do we even define $\textsf{NewSample}$? By \Cref{ob-new-sample}, we can focus only on $\textsf{NewSample}$ in later analysis (in particular, the analysis of the correctness of our algorithm).
$\textsf{NewSample}$ has one additional random variable $K \in \{0,1\}$, which helps defining the event $\+C$ below.

Similarly, we can defined a modified version of $\textsf{NewSample}$ such that we use the oracle $\+P$ to compute $p_0,c_0$ and $c_1$. 
$\textsf{NewSample}$ also generates the same random subgraph $H = H(\+X_{\+P},\+D_u^{(1)})$ and the same random indicator $F = F(\+X_{\+P},\+D_u^{(1)})$ as \textsf{Sample}.
In addition, it  generates a new random variable $K = K(\+X_{\+P},\+D_u^{(2)})$.
We define the following event $\+C$ for $\textsf{NewSample}$
\begin{align}\label{eq-def-C}
	\+C: \quad F(\+X_{\+P},\+D_u^{(1)}) = 1 \land K(\+X_{\+P},\+D_u^{(2)}) = 1.
\end{align}

\begin{lemma}\label{lemma-new-sample}
Suppose $\+P$ satisfies the conditions in~\eqref{eq-r1},~\eqref{eq-r2} and~\eqref{eq-r3}. Then with probability $1$, $0 \leq p_K \leq 1$.  Furthermore, it holds that
\begin{itemize}
	\item $\+C$ is independent from $\+X_{\+P}$, which implies that $\+C$ depends only on $\+D_u$;
	\item $Pr_{\+D_u}[\+C] = 1 - \eps^{200}/n^{200}$;
	\item conditional on $\+C$, \text{NewSample} does not crash and outputs an independent sample $H(\+X_{\+P},\+D_u^{(1)}) \sim \pi_u$.
\end{itemize}
\end{lemma}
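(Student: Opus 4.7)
The plan is to derive all four claims from a single conditional probability calculation: fixing any realisation $\+X_{\+P}=x$, compute the marginal probability that $F=1$ in a single pass of the repeat loop, and then use the specific form of $p_K$ to cancel out the dependence on $x$.

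First I would verify that $p_K\in[0,1]$. For a fixed $x_{\+P}$, the distribution generated by one pass of the repeat loop satisfies $\Pr[H=h\mid \+X_{\+P}=x_{\+P}]=p(h)$ and $\Pr[F=1\mid H=h,\+X_{\+P}=x_{\+P}]=w_u(h)/(4p(h)p_0)$, so summing over $h$ gives the clean identity
\[
\Pr[F=1\text{ in one pass}\mid \+X_{\+P}=x_{\+P}]=\sum_h p(h)\cdot\frac{w_u(h)}{4p(h)p_0}=\frac{R_u}{4p_0}.
\]
Since the oracle guarantees $p_0/R_u\in[9/10,11/10]$, this per-pass success probability is at least $5/22$, so after $T=\lceil 1000\log(n/\eps)\rceil$ passes the overall success probability is
\[
1-\bigl(1-R_u/(4p_0)\bigr)^T\geq 1-\exp(-5T/22)\geq 1-(\eps/n)^{225}\geq 1-\eps^{200}/n^{200},
\]
which makes the denominator in \eqref{eq-def-pK} at least as large as the numerator.

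Next I would establish independence of $\+C$ from $\+X_{\+P}$. Since the $T$ passes use fresh randomness, the same calculation gives $\Pr[F(\+X_{\+P},\+D_u^{(1)})=1\mid \+X_{\+P}=x_{\+P}]=1-(1-R_u/(4p_0(x_{\+P})))^T$. The random bit $K$ is drawn from $\+D_u^{(2)}$, which is independent of $\+D_u^{(1)}$, and its success probability $p_K(x_{\+P})$ depends on $\+X_{\+P}$ only through $p_0$. Therefore
\[
\Pr[\+C\mid \+X_{\+P}=x_{\+P}]=\bigl(1-(1-R_u/(4p_0))^T\bigr)\cdot\frac{1-\eps^{200}/n^{200}}{1-(1-R_u/(4p_0))^T}=1-\frac{\eps^{200}}{n^{200}},
\]
a quantity that does not depend on $x_{\+P}$. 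This simultaneously yields independence of $\+C$ from $\+X_{\+P}$ and the value $\Pr_{\+D_u}[\+C]=1-\eps^{200}/n^{200}$.

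Finally, to obtain the conditional distribution of $H$, I would reuse the stronger statement already established in the proof of \Cref{lemma-sample}, namely that for every $x_{\+P}$, $H\mid\{F=1,\+X_{\+P}=x_{\+P}\}\sim\pi_u$ and in particular $\textsf{NewSample}$ does not crash on that event. Because $K$ is generated from the independent randomness $\+D_u^{(2)}$ and is thus conditionally independent of $(H,F)$ given $\+X_{\+P}$, conditioning further on $K=1$ does not perturb the distribution of $H$; hence $H\mid\{\+C,\+X_{\+P}=x_{\+P}\}\sim\pi_u$ for every $x_{\+P}$, and marginalising yields $H\mid \+C\sim\pi_u$ and independence from $\+X_{\+P}$. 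The main subtlety here is bookkeeping: one has to keep the two independent pieces $\+D_u^{(1)}$ and $\+D_u^{(2)}$ properly separated so that the cancellation in the computation of $\Pr[\+C\mid \+X_{\+P}]$ really is pointwise in $x_{\+P}$, which is the crux of why the artificial random variable $K$ was introduced in \Cref{def-new-sample}.
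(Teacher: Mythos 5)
Your proof is correct and follows essentially the same route as the paper's: compute the per-pass acceptance probability as the clean identity $R_u/(4p_0)$, use the oracle bounds to show $1-(1-R_u/(4p_0))^T \ge 1-\eps^{200}/n^{200}$ (hence $p_K\in[0,1]$), observe the pointwise cancellation $\Pr[\+C\mid \+X_{\+P}=x_{\+P}]=p_K\cdot(1-(1-R_u/(4p_0))^T)=1-\eps^{200}/n^{200}$, and invoke the conditional independence of $K$ from $(F,H)$ given $\+X_{\+P}$ to transfer the distributional conclusion of \Cref{lemma-sample} to the event $\+C$. The only (immaterial) difference is the constant in the per-pass lower bound ($5/22$ versus the paper's $1/16$).
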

\begin{proof}
	Suppose $\+P$ satisfies the conditions in~\eqref{eq-r1},~\eqref{eq-r2} and~\eqref{eq-r3}. 
We first fix $\+X_{\+P} = x_{\+P}$ for an arbitrary $x_{\+P} \in \Omega_{\+P}$.
By the same analysis as in \Cref{lemma-sample}, in each of the repeat-until loop, the probability $f$ of $F = 1$ is 
\begin{align*}
  \frac{1}{16}\leq f = \sum_{H \in \Omega_u} p(H)\cdot\frac{w_u(H)}{4p(H)p_0} =  \frac{R_u}{4p_0} \leq 1.
\end{align*}
As the repeat-until loop is repeated independently for at most $T$ times until $F(\+X_{\+P},\+D_u)=1$, we have
\begin{align*}
	\Pr[F(\+X_{\+P},\+D_u) = 1 \mid \+X_{\+P} = x_{\+P}] = 1- \tp{1 - \frac{R_u}{4p_0}}^T \geq 1  - \frac{\eps^{200}}{n^{200}}. 
\end{align*}
Hence, $0\leq p_K \leq 1$. 
Next, note that given $\+X_{\+P} = x_{\+P}$, the value of $p_K$ is fixed and $K$ is sampled independently. We have that $K(\+X_{\+P},\+D_u^{(2)})$ and $F(\+X_{\+P},\+D_u^{(1)})$ are independent conditional on $\+X_{\+P} = x_{\+P}$. This implies 
\begin{align*}
	\Pr[\+C \mid \+X_{\+P} = x_{\+P}] &= \Pr[F(\+X_{\+P},\+D_u^{(1)}) = 1 \mid \+X_{\+P} = x_{\+P}]\Pr[K(\+X_{\+P},\+D_u^{(2)}) = 1 \mid \+X_{\+P} = x_{\+P}]\\
	&= \tp{1- \tp{1 - \frac{R_u}{4p_0}}^T}p_K = 1 - \frac{\eps^{200}}{n^{200}}. 
\end{align*}
The probability $1 - \eps^{200} / n^{200}$ is independent from $x_{\+P}$. Hence, the event $\+C$ is independent from $\+X_{\+P}$.

Finally, we analyze the distribution of $H$ conditional on $\+C$. We first condition on $\+X_{\+P} = x_{\+P}$. If we further conditional on $F(\+X_{\+P},\+D_u^{(1)}) = 1$, the same analysis as in \Cref{lemma-sample} shows that the algorithm does not crash and $H(\+X_{\+P},\+D_u^{(1)}) \sim \pi_u$. Note that $K(\+X_{\+P},\+D_u^{(2)})$ is sampled independently with a fixed probability~$p_K$ (since $\+X_{\+P} = x_{\+P}$ has been fixed). Hence, $K(\+X_{\+P},\+D_u^{(2)})$ is independent from both $F(\+X_{\+P},\+D_u^{(1)})$ and $H(\+X_{\+P},\+D_u^{(1)})$ conditional on $\+X_{\+P} = x_{\+P}$. We have
\begin{align*}
	 H(\+X_{\+P},\+D_u^{(1)})|_{\+X_{\+P} = x_{\+P} \land \+C}   \equiv   H(\+X_{\+P},\+D_u^{(1)})|_{\+X_{\+P} = x_{\+P} \land F(\+X_{\+P},\+D_u^{(1)}) = 1}  \sim \pi_u,
\end{align*}
where we use $X \equiv Y$ to denote that two random variables $X$ and $Y$ have the same distribution.
Note that the distribution $\pi_u$ on the RHS is independent from $x_{\+P}$. Summing over $x_{\+P}\in\Omega_{\+P}$ gives that conditioned on $\+C$, the output $H = H(\+X_{\+P},\+D_u^{(1)}) \sim \pi_u$.
\end{proof}

\subsection{Analysis of \textsf{ApproxCount}}
Now we turn our attention to $\textsf{ApproxCount}(V,E,\Lambda,(\widetilde{R}_w,S_w)_{w \in \partial \Lambda})$, where $G=(V,E)$ is a DAG and $t \not\in \Lambda$.
Recall that for any $w \in V$, the graph $G_w = G[V_w]$, where $V_w$ contains all vertices $v$ satisfying $\rch{w}{v}{G}$ and $\rch{v}{t}{G}$. Let $R_w$ be the $w-t$ reliability in $G_w$. Let $S^{\-{ideal}}_w$ be a multi-set of $\ell$ \emph{independent} and \emph{prefect} samples from~$\pi_w$.
Recall that $\ell_0$ and $B$ are parameters in \textsf{ApproxCount}, \Cref{alg-count}, and $d=\abs{\partial\Lambda}$.
In the next lemma, we assume $(\widetilde{R}_w)_{w \in \partial \Lambda}$ is fixed and $(S_w)_{w \in \partial \Lambda}$ is random.
When \textsf{ApproxCount} is called, we use $\+D(V,E,\Lambda)$ to denote the internal randomness in the execution of \textsf{ApproxCount}.
\begin{lemma}\label{lemma-ac}
Suppose the following conditions are satisfied
\begin{itemize}
	\item for all $w \in \partial \Lambda$, $\rch{w}{t}{G}$;
	\item for any $w \in \partial \Lambda$, $1-\eps_0 \leq \frac{\widetilde{R}_w}{R_w} \leq 1+\eps_0 $ for some $\eps_0 < 1/2$;
    \item $\DTV{(S_w)_{w \in \partial \Lambda}}{(S^{\-{ideal}}_w)_{w \in \partial \Lambda}} \leq \delta_0$.
\end{itemize}
Then with probability at least $1 - \delta_0 - 2^{-B/30}$, it holds that 
\begin{align*}
1 - \eps_0 - \frac{2}{\sqrt{\ell_2}}	\leq \frac{\widetilde{R}_{\Lambda}}{R_{\Lambda}} \leq 1 + \eps_0 + \frac{2}{\sqrt{\ell_2}},
\end{align*}
where the probability is taken over the input randomness of $(S_w)_{w \in \partial \Lambda}$ and the independent randomness $\+D(V,E,\Lambda)$ inside the \emph{\textsf{ApproxCount}} algorithm. The running time of \emph{\textsf{ApproxCount}} is $O(n\ell(|V|+|E|))$.
\end{lemma}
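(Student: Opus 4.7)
The plan is to combine three ingredients: (i) a coupling argument reducing to perfect independent input samples, (ii) a bias/variance analysis of a single estimator $Q_\Lambda^{(j)}$ in the two-round protocol, and (iii) a Chernoff bound on the median over $B$ independent trials. By the assumption $\DTV{(S_w)_{w \in \partial\Lambda}}{(S^{\-{ideal}}_w)_{w \in \partial\Lambda}} \leq \delta_0$ and \Cref{lem:coupling}, the input samples can be coupled with iid perfect samples from $(\pi_w)_{w \in \partial\Lambda}$, which costs $\delta_0$ in the failure probability. Under this coupling the $B$ trials use disjoint sample blocks and are mutually independent. Within each trial, when \textsf{Estimate} selects an index $i$ and consumes the $c_i$-th sample from $S_{u_i}^{(j,\cdot)}$, \Cref{lemma-expand-R} together with \Cref{line-H'} produces an independent perfect draw from $\pi_\Lambda^{(i)}$; hence, provided \Cref{line-break} is never triggered, each $Z_{\-{es}}^{(k)}$ is an iid $\-{Bernoulli}(\mu)$ variable with $\mu \defeq \Ex[\widetilde Z_\Lambda^{(j)}]$. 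Writing $a_i \defeq \widetilde{R}_\Lambda^{(i)}/R_\Lambda^{(i)} \in [1-\eps_0,1+\eps_0]$ and $b_i \defeq \sum_{H \in \Omega_\Lambda^{(i)}} w_\Lambda(H) \boldsymbol{1}[i \text{ smallest}]$, so that $\sum_i b_i = R_\Lambda$ by \Cref{lemma-union}, a direct computation yields
\[
\Ex\bigl[Q_\Lambda^{(j)}\bigr] \;=\; \mu \sum_j \widetilde{R}_\Lambda^{(j)} \;=\; \sum_{i=1}^d a_i b_i \;\in\; \bigl[(1-\eps_0) R_\Lambda,\, (1+\eps_0) R_\Lambda\bigr],
\]
which controls the bias and also shows $\mu \geq \tfrac{1-\eps_0}{(1+\eps_0) n} \geq \tfrac{1}{3n}$ using \eqref{eqn:expectation-lb}.

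Next I would show that each individual trial $j$ produces a $(1 \pm 2/\sqrt{\ell_2})$-approximation to $\Ex[Q_\Lambda^{(j)}]$ with probability $\geq 2/3$. Round~1 averages $\ell_1 = 400n$ iid $\-{Bernoulli}(\mu)$ variables; since $\mu \geq \tfrac{1}{3n}$ the expected sum is $\Theta(1)$, so a relative Chernoff bound gives $\widehat Z^{(j)}_\Lambda \in [\mu/2, 2\mu]$ except on an event of small constant probability. On this event Round~2 runs with $T = 25\ell_2 \min\{2/\widehat Z^{(j)}_\Lambda, 4n\}$ and hence $T\mu = \Theta(\ell_2)$, so Chebyshev's inequality yields the target relative error with probability $\geq 99/100$. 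The main technical obstacle I anticipate is bounding the sample-exhaustion event in \Cref{line-break}: each count $c_i$ is $\-{Binomial}(T,p_i)$ with $p_i = \widetilde{R}_\Lambda^{(i)}/\sum_j \widetilde{R}_\Lambda^{(j)}$, and one needs $T p_i \ll \ell_{\-{es}}$ to hold simultaneously for all $i \in [d]$. For Round~1 this is automatic since $T = \ell_{\-{es}} = \ell_1$. For Round~2, using $b_i \leq R_\Lambda^{(i)}$ and $a_i \in [1-\eps_0,1+\eps_0]$ one gets $p_i = O(\mu)$, hence $\Ex[c_i] = O(\ell_2) \ll 500\ell_2$, and a Chernoff bound combined with a union bound over $i$ shows exhaustion occurs with probability at most $2^{-\Omega(\ell_2)}$, which is absorbed into the constant failure budget of the trial. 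I expect the precise unfolding of $\mu$ in terms of the $a_i b_i$, needed to conclude $p_i = O(\mu)$, to be the most delicate bookkeeping in the argument.

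Combining bias and concentration multiplicatively shows that trial $j$ satisfies $Q_\Lambda^{(j)} \in \bigl[(1-\eps_0-2/\sqrt{\ell_2})R_\Lambda,\, (1+\eps_0+2/\sqrt{\ell_2})R_\Lambda\bigr]$ with probability $\geq 2/3$. Since the $B$ trials are mutually independent given the coupling, a Chernoff bound on the number of successful trials shows the median $\widetilde R_\Lambda$ lies in the same interval except with probability $2^{-B/30}$. Combining with the $\delta_0$ spent on coupling gives the advertised overall failure probability $\delta_0 + 2^{-B/30}$. The running time is dominated by the $B$ trials, each of which runs at most $T = O(n\ell_2)$ iterations at cost $O(|V|+|E|)$ per iteration (for the edge-subset sampling in \Cref{line-H'} via \Cref{lem-eff} and the reachability check in \Cref{line-test}), giving total time $O(B \cdot n\ell_2 \cdot (|V|+|E|)) = O(n\ell(|V|+|E|))$.
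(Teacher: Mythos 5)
Your proposal is correct and follows essentially the same route as the paper's proof: couple the input samples with perfect ones at a cost of $\delta_0$, bound the bias of $Q_\Lambda^{(j)}$ via \Cref{lemma-union} and \Cref{lemma-expand-R}, use the round-1 crude estimate to set $T$ so that round-2 Chebyshev gives a $1\pm O(1/\sqrt{\ell_2})$ relative error while a concentration-plus-union-bound argument controls the sample-exhaustion event in \Cref{line-break}, and finish with a Chernoff bound on the median of the $B$ independent blocks. The only differences are cosmetic (you invoke a multiplicative Chernoff bound in round 1 where the paper uses Chebyshev, and you perform the coupling at the start rather than the end), and your minor imprecisions (the round-1 expected sum is $\Omega(1)$ rather than $\Theta(1)$, and the exhaustion probability is $2^{-\Omega(\ell_2/n)}$ via Hoeffding) do not affect the argument.
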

\begin{proof}
If $t \in \Lambda$, then \Cref{alg-count} returns $\widetilde{R}_w = {R}_w = 1$ in \Cref{line-trival}. If $t \not\in \Lambda$ and $\partial\Lambda = \emptyset$, then \Cref{alg-count} returns $\widetilde{R}_w = {R}_w = 0$ in \Cref{line-trival}. In the following, we assume $t \not\in \Lambda$ and $\partial \Lambda \neq \emptyset$. By the first condition, we have $d \geq 1$ in \Cref{alg-count} and all distributions $\pi^{(i)}_{\Lambda}$ for $i \in [d]$ are well-defined.

By \Cref{lemma-expand-R} and the assumption in this lemma, all $\widetilde{R}_{\Lambda}^{(i)}$ computed in \Cref{line-R-tilde} satisfy
\begin{align}\label{eq-bound-RR}
	 1 - \eps_0 \leq \frac{\widetilde{R}^{(i)}_{\Lambda}}{R^{(i)}_{\Lambda}} \leq 1 + \eps_0.
\end{align}
Suppose \textsf{ApproxCount} uses perfect samples from $(S^{\-{ideal}}_w)_{w \in \partial \Lambda}$.
Consider the first call on the subroutine \textsf{Estimate} (\Cref{alg-es}).
In the first call, the parameter $T = \ell_1 = 400n$. Note that $\ell_{\-{es}} = \ell_1 = 400n$. Hence, the condition in \Cref{line-break} of \Cref{alg-es} cannot be satisfied.
By \Cref{lemma-expand-R}, $H'$ obtained in \Cref{line-H'} of \Cref{alg-es} is a perfect sample from $\pi_\Lambda^{(i)}$.
For any $k \in [T]$, we have
\begin{align*}
	\Ex[Z_{\-{es}}^{(k)}] = \sum_{i=1}^d\frac{ \widetilde{R}^{(i)}_\Lambda }{\sum_{t\in [d]}\widetilde{R}_\Lambda^{(t)}} \cdot \sum_{H' \in \Omega_{\Lambda}^{(i)} } \underbrace{\frac{w_{\Lambda}(H')}{R^{(i)}_\Lambda}}_{=\pi_\Lambda^{(i)}(H')} \cdot \boldsymbol{1}\left[H' \in \Omega^{(i)}_\Lambda \land \forall t < i, H' \notin \Omega^{(t)}_{\Lambda}\right]
\end{align*}
By~\eqref{eq-bound-RR} and a calculation similar to that in~\eqref{eqn:expectation-lb}, we have 
\begin{align}\label{eq-ex-bound}
(1-\eps_0) \frac{R_{\Lambda}}{\sum_{i=1}^d \widetilde{R}_{\Lambda}^{(i)}}	 \leq \Ex[Z_{\-{es}}^{(k)}] \leq (1+\eps_0)\frac{R_{\Lambda}}{\sum_{i=1}^d \widetilde{R}_{\Lambda}^{(i)}}	
\end{align}
Using~\eqref{eq-bound-RR}, we have 
\begin{align}\label{eqn:Z-lambda}
\Ex[Z_{\-{es}}^{(k)}] \geq \frac{1-\eps_0}{1+\eps_0} \cdot \frac{R_{\Lambda}}{\sum_{i=1}^d \widetilde{R}_{\Lambda}^{(i)} } \geq \frac{1}{4d}.	
\end{align}
Also recall that
\begin{align}\label{eq-def-Q}
	Z_{\-{es}} = \frac{1}{T}\sum_{k=1}^{T}Z_{\-{es}}^{(k)}.
\end{align}
Then $\Var{}{Z_{\-{es}}}=\frac{\Var{}{Z_{\-{es}}^{(k)}}}{T}$, and by \eqref{eqn:Z-lambda}, $\Ex[Z_{\-{es}}]\geq \frac{1}{4d}$.
By Chebyshev's inequality and for any $k \in [T]$, 
\begin{align*}
    &\Pr\left[ \abs{Z_{\-{es}} - \Ex[Z_{\-{es}}] }\geq \frac{10\sqrt{d}}{\sqrt{T}} \Ex[Z_{\-{es}}]  \right] 
    \le \frac{T}{100d}\cdot \frac{\Var{}{Z_{\-{es}}}}{(\Ex[Z_{\-{es}}])^2} = \frac{T}{100d}\cdot \frac{\Var{}{Z_{\-{es}}^{(k)}}/T}{\left(\Ex[Z_{\-{es}}^{(k)}]\right)^2} \\
   =\; & \frac{1}{100d}\left( \frac{\Ex\left[\left(Z_{\-{es}}^{(k)}\right)^2\right]}{\left(\Ex[Z_{\-{es}}^{(k)}]\right)^2} - 1 \right) 
    = \frac{1}{100d}\left( \frac{\Ex\left[Z_{\-{es}^{(k)}}\right]}{\left(\Ex[Z_{\-{es}^{(k)}}]\right)^2} - 1 \right) \le \frac{1}{25},
\end{align*}
where the last inequality is due to \eqref{eqn:Z-lambda}. Since $T = \ell_1 =  400n \geq 100d$ in the first call on \Cref{alg-es}, the random variable $\widehat{Z}_{\Lambda}^{(j)}$ in \Cref{alg-count} satisfies
\begin{align*}
	\Pr \left[ \frac{1}{2}  \leq \frac{\widehat{Z}_{\Lambda}^{(j)}}{\Ex[Z_{\-es}]} \leq 2 \right] \geq \frac{24}{25}.
\end{align*}

Next, we analyse the second call on \Cref{alg-es} conditional on $\frac{1}{2}  \leq \frac{\widehat{Z}_{\Lambda}^{(j)}}{\Ex[Z_{\-es}]} \leq 2 $. 
By \eqref{eqn:Z-lambda}, $\Ex[Z_{\-es}] \geq \frac{1}{4d} \geq \frac{1}{4n}$.
In this case, the parameter $T$ in \Cref{alg-es} satisfies 
\begin{align}\label{eq-T-analysis}
\frac{25 \ell_2}{\Ex[Z_{\-{es}}]} \leq  T = 25 \ell_2 \min\{ 2/ \widehat{Z}_{\Lambda}^{(j)},4n\}	 \leq  \min \left\{\frac{100 \ell_2}{\Ex[Z_{\-{es}}]}, 100\ell_2n\right\}.
\end{align}
Consider the second call of \Cref{alg-es}. 
Note that  $\ell_{\-{es}} = 500\ell_2$ in the second round.
Let us first assume that $\ell_{\-{es}} = \infty$, which means each $S^{\-{es}}_{u_i}$ contains infinitely many perfect samples. 
We first analyse the algorithm in this ideal situation and then compare the real algorithm (where $\ell_{\-{es}} = 500\ell_2$) with this ideal algorithm.  
Note that if  $\ell_{\-{es}} = \infty$, then the condition in \Cref{line-break} of \Cref{alg-es} cannot be triggered. By a similar analysis,
\begin{align*}
 &\Pr\left[ \abs{Z_{\-{es}} - \Ex[Z_{\-{es}}] } \geq \frac{1}{\sqrt{\ell_2}} \Ex[Z_{\-{es}}] \right] \leq \ell_2\cdot \frac{\Var{}{Z_{\-{es}}}}{(\Ex[Z_{\-{es}}])^2} = \frac{\ell_2}{T} \cdot \frac{\Var{}{Z_{\-{es}}^{(k)}}}{\left(\Ex[Z_{\-{es}}^{(k)}]\right)^2}\\
 \leq \,& \frac{1}{25}\cdot \frac{\Var{}{Z_{\-{es}}^{(k)}}}{\Ex[Z_{\-{es}}^{(k)}]} \leq \frac{1}{25}\cdot \frac{\Ex\left[\tp{Z_{\-{es}}^{(k)}}^2\right]}{\Ex[Z_{\-{es}}^{(k)}]} = \frac{1}{25}.
\end{align*} 
We then show that the following result holds at the end of this ideal algorithm ($\ell_{\-{es}} = \infty$)
\begin{align*}
	\Pr\left[\exists i \in [d], \text{ s.t. } c_i > 500\ell_2 \right] \leq \frac{1}{25}.
\end{align*}
Fix an index $i \in [d]$. \Cref{alg-es} has $T$ iterations in total. For any $k \in [T]$, let $X_k \in \{0,1\}$ indicate whether $i$ is picked in \Cref{line-pick-sample}. Note that all $X_k$'s are independent random variables. Let $X = \sum_{k=1}^T X_k$. Then
\begin{align*}
	\Ex[X] = T \frac{ \widetilde{R}^{(i)}_\Lambda }{\sum_{t\in [d]}\widetilde{R}_\Lambda^{(t)}} \leq \frac{100\ell_2}{\Ex[Z_{\-{es}}]}\frac{ \widetilde{R}^{(i)}_\Lambda }{\sum_{t\in [d]}\widetilde{R}_\Lambda^{(t)}},
\end{align*}
where the inequality holds by the upper bound in~\eqref{eq-T-analysis}. Since $\Ex[Z_{\-{es}}] = \Ex[Z_{\-{es}}^{(k)}]$, we can use the lower bound in \eqref{eq-ex-bound} and the upper bound in~\eqref{eq-bound-RR} to obtain
\begin{align*}
\Ex[X] \leq \frac{100\ell_2}{1- \epsilon_0}\cdot \frac{\widetilde{R}^{(i)}_\Lambda}{R_\Lambda} \leq 100 \ell_2 \cdot \frac{1+\epsilon_0}{1-\epsilon_0} \cdot \frac{R^{(i)}_\Lambda}{R_\Lambda}  \overset{(\ast)}{\leq} 100 \ell_2 \cdot \frac{1+\epsilon_0}{1-\epsilon_0}  \leq 300\ell_2,
\end{align*}
where inequality~$(\ast)$ uses the fact $R^{(i)}_\Lambda \leq R_\Lambda$. 
This is because $R^{(i)}_\Lambda = \sum_{H \in \Omega_\Lambda^{(i)}}w_\Lambda(H)$, $R_\Lambda = \sum_{H \in \Omega_\Lambda}w_\Lambda(H)$ and $\Omega_\Lambda^{(i)} \subseteq \Omega_{\Lambda}$ (by \Cref{lemma-union}).
Note that $T \leq 100\ell_2 n$ and $\ell_2 \geq 10^4 n^2 \max\{m^2,\epsilon^{-2}\}$.
Using Hoeffding inequality on~$X$ yields 
\begin{align*}
	\Pr[X > 500\ell_2 ] \leq \Pr[X > \Ex[X] + 200 \ell_2] \leq \exp\tp{-\frac{200^2 \ell_2^2}{T}} \leq \frac{1}{25n}.
\end{align*}
By a union bound over all $i \in [d]$, we have
\begin{align*}
	\Pr\left[\exists i \in [d], \text{ s.t. } c_i > 500\ell_2 \right] \leq \frac{1}{25}.	
\end{align*}
We can couple the ideal algorithm ($\ell_{\-{es}} = \infty$) with the real algorithm ($\ell_{\-{es}} = 500\ell_2$) such that 
if the above bad event does not occur, the two algorithms output the same value. 
Hence, the random variable $\widetilde{Z}_{\Lambda}^{(j)}$ in \Cref{alg-count} satisfies
\begin{align*}
	&\Pr \left[ 1 - \frac{1}{\sqrt{\ell_2}} \leq \frac{\widetilde{Z}_{\Lambda}^{(j)}}{\Ex[Z_{\-es}]} \leq 1+ \frac{1}{\sqrt{\ell_2}}  \right]\\
	 \geq\,& 	\Pr \left[ \frac{1}{2}  \leq \frac{\widehat{Z}_{\Lambda}^{(j)}}{\Ex[Z_{\-es}]} \leq 2 \right] \Pr\left[  1 - \frac{1}{\sqrt{\ell_2}} \leq \frac{\widetilde{Z}_{\Lambda}^{(j)}}{\Ex[Z_{\-es}]} \leq 1+ \frac{1}{\sqrt{\ell_2}} \; \middle | \; \frac{1}{2}  \leq \frac{\widehat{Z}_{\Lambda}^{(j)}}{\Ex[Z_{\-es}]} \leq 2  \right]\\
	 \geq\,& \frac{24}{25}\tp{1 - \frac{1}{25}-\frac{1}{25}} \geq \frac{3}{4}.
\end{align*}
Combining~\eqref{eq-ex-bound} with the above inequality, we know that with probability at least $3/4$, the random variable $Q^{(j)}_\Lambda$ in \Cref{alg-count} satisfies 
\begin{align*}
1 - \eps_0 - \frac{2}{\sqrt{\ell_2}} \leq \tp{1-\frac{1}{\sqrt{\ell_2}}}(1-\eps_0) \leq \frac{Q^{(j)}_\Lambda}{R_\Lambda} \leq \tp{1+\frac{1}{\sqrt{\ell_2}}}(1+\eps_0) \leq 1 + \eps_0 + \frac{2}{\sqrt{\ell_2}}.
\end{align*}
Since $\widetilde{R}_{\Lambda}$ is the median of $B$ values $Q^{(j)}_\Lambda$, the success probability is boosted from $3/4$ to $1 - 2^{-B/30}$ by the Chernoff bound.

Finally, the algorithm actually uses the samples from $(S_w)_{w \in \partial \Lambda}$. Consider an optimal coupling between the real algorithm with the algorithm using ideal samples from $(S^{\-{ideal}}_w)_{w \in \partial \Lambda}$. Due to the assumption that $\DTV{(S_w)_{w \in \partial \Lambda}}{(S^{\-{ideal}}_w)_{w \in \partial \Lambda}} \leq \delta_0$ and \Cref{lem:coupling}, 
the two algorithms output the same answer with probability at least $1 - \delta_0$. Hence, $1 - \eps_0 - \frac{2}{\sqrt{\ell_2}}	\leq \frac{\widetilde{R}_{\Lambda}}{R_{\Lambda}} \leq 1 + \eps_0 + \frac{2}{\sqrt{\ell_2}}$ with probability at least $1 - \delta_0 - 2^{-B/30}$.

The running time of \Cref{alg-count} is dominated by the second call on \textsf{Estimate} with parameter $T = O(\ell_2 n)$. 
In \Cref{alg-es}, the running time is dominated by the time spent on \Cref{line-test}.
We can find all the vertices that can reach $t$ in graph $H'$ in time $O(|E|+|V|)$ (first inverse the direction of all edges and then run a BFS starting from $t$). 
We can then compute $Z^{(k)}_{\-{es}}$ in time $O(|V|)$.
The total running time is 
\begin{align*}
	O(BT(|V|+|E|)) = O(B\ell_2 n (|V|+|E|)) = O(n\ell (|V|+|E|)). &\qedhere
\end{align*}
\end{proof}

\Cref{lemma-ac} treats \textsf{ApproxCount} as a standalone algorithm. However, in our main algorithm, we use \textsf{ApproxCount} as a subroutine. 
We need to make sure that the inpputs are consistent every time \textsf{ApproxCount} is called.
Recall that $G=(V,E)$ denotes the input graph of \Cref{alg-main}. Every time when $\textsf{ApproxCount}$ is evoked, its input includes a subset of vertices $V_0 \subseteq V$, a subset of edges $E_0 \subseteq E$, a subset of vertices $\Lambda_0$ and $(\widetilde{R}_w,S_w)_{w \in \partial \Lambda_0}$.
Recall that $\partial\Lambda_0 = \{w \in V_0 \setminus \Lambda_0 \mid \exists w \in \Lambda_0 \text{ s.t. } (w',w)\in E_0\}$.
The properties we need are the following.
\begin{lemma}\label{lem-input-condition}
If $\textsf{ApproxCount}$ is evoked with input $V_0 \subseteq V$, $E_0 \subseteq E$, $\Lambda_0 \subseteq V_0$ and $(\widetilde{R}_w,S_w)_{w \in \partial \Lambda_0}$, then
\begin{itemize}
  \item for all $w \in \partial \Lambda_0$, $E_w\subseteq E_0$, where $E_w$ is the edge set of $G_w$;
  \item for all $w \in \partial \Lambda_0$, $(\widetilde{R}_w,S_w)$ has already been computed.
\end{itemize}
\end{lemma}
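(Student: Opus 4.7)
The plan is a case analysis on the call site of \textsf{ApproxCount}, which is invoked in two places: \Cref{line-count} of \Cref{alg-main} with input $(V_{v_k}, E_{v_k}, \{v_k\}, \cdot)$, and \Cref{line-c1} and \Cref{line-c2} of \Cref{alg-sample} with input $(V_u, E_2, \Lambda_0, \cdot)$ where $u = v_k$ and $\Lambda_0 \in \{\Lambda, \Lambda_1\}$. The second bullet, that $(\widetilde{R}_w, S_w)$ has already been computed, is immediate in all cases: any $w \in \partial \Lambda_0$ has a $G$-predecessor that is reachable from $v_k$, so $v_k \prec w$ topologically, and the outer loop of \Cref{alg-main} iterates from $k = n - 1$ downward, so $(\widetilde{R}_w, S_w)$ must have been assigned in a previous iteration.

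For the first bullet in the \Cref{alg-main} call, a direct reachability check shows $V_w \subseteq V_{v_k}$, hence $E_w \subseteq E_{v_k}$. For the \Cref{alg-sample} call with $\Lambda_0 = \Lambda$, the post-update boundary $\partial \Lambda$ is a subset of the pre-update boundary (removing $e$ from $E_2$ can only shrink a boundary for the same $\Lambda$), so \Cref{lemma-loop} applied at the start of the current while-loop gives $E_w \cap E_1^{\text{pre}} = \emptyset$. It then remains to verify $e \notin E_w$: if $e = (w', w^*) \in E_w$, then $w^* \in V_w$, forcing $w \preccurlyeq w^*$; combined with the minimality of $w^*$ in the pre-update boundary (so $w^* \preccurlyeq w$), this forces $w = w^*$, and then $w' \in V_{w^*}$ would require $\rch{w^*}{w'}{G}$, contradicting $w' \prec w^*$ in a DAG.

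The main work is the case $\Lambda_0 = \Lambda_1 = \Lambda \cup \{w^*\}$ (via \Cref{cor-loop}). I would split $w \in \partial \Lambda_1$ into sub-case (A), where $w$ has an in-neighbour in $\Lambda$ via the updated $E_2$ (so $w \in \partial \Lambda$ and the previous argument applies), and sub-case (B), where the only witness from $\Lambda_1$ to $w$ in the updated $E_2$ is $(w^*, w)$. In (B) it is again enough to show $E_w \cap E_1^{\text{new}} = \emptyset$; the edge $e$ is excluded by the same DAG argument as above, so suppose for contradiction that $f = (v, v') \in E_w \cap E_1^{\text{pre}}$ was picked at some loop $j < k$. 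Since $w^* \in \partial \Lambda^{\text{pre}}$, there is a path $u = u_0, u_1, \ldots, u_{s-1}, u_s = w^*$ in $G$ whose first $s-1$ edges lie in $\+E^{\text{pre}}$ and whose last edge $(u_{s-1}, w^*)$ lies in $E_2^{\text{pre}}$; appending $(w^*, w)$ yields a path of length $s+1$ ending at $w$, and the topological ordering together with $w \prec v'$ (which follows from $f \in E_w$) gives $u_i \prec v'$ for every $i$. Replaying the edge-classification argument of \Cref{lemma-loop}, I would locate $i^* \in \{1, \ldots, s+1\}$ with $u_{i^*-1} \in \Lambda^{(j)}$ and $u_{i^*} \notin \Lambda^{(j)}$ and check $(u_{i^*-1}, u_{i^*}) \in E_2^{(j)}$: the case $i^* \leq s - 1$ is verbatim from \Cref{lemma-loop}, while $i^* \in \{s, s+1\}$ follows from $E_2^{\text{new}} \subseteq E_2^{\text{pre}} \subseteq E_2^{(j)}$. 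Then $u_{i^*} \in \partial \Lambda^{(j)}$ and $u_{i^*} \prec v'$ contradict the minimality of $v'$ at loop $j$.

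The main obstacle is precisely sub-case (B) above: \Cref{lemma-loop} is formulated only at the \emph{start} of a while-loop, whereas here the invariant is needed in a mid-loop state and with respect to $\Lambda_1 \supsetneq \Lambda$. The proof essentially re-does the path construction of \Cref{lemma-loop} with a two-edge tail $(u_{s-1}, w^*), (w^*, w)$ rather than a single-edge tail, so one has to be careful that both extra cases $i^* = s$ and $i^* = s+1$ still satisfy $(u_{i^*-1}, u_{i^*}) \in E_2^{(j)}$. The remaining cases either reduce directly to \Cref{lemma-loop} or are elementary DAG bookkeeping.
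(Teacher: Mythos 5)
Your proof is correct, and its overall skeleton (case analysis on the call site, the topological-ordering argument for the second bullet, \Cref{lemma-loop} at the start of the while-loop, and the minimality of $w^*$ for the \Cref{line-c1} case) matches the paper's. The one place where you genuinely diverge is the sub-case you correctly identify as the crux: a vertex $w$ that enters $\partial\Lambda_1$ only through the new edge $(w^*,w)$, so that \Cref{lemma-loop} does not apply to $w$ directly. You handle it by replaying the entire path-construction induction of \Cref{lemma-loop} with a two-edge tail $(u_{s-1},w^*),(w^*,w)$, and your bookkeeping there is sound (both extra edges lie in $E_2^{\text{pre}}\subseteq E_2^{(j)}$, and $w\notin\Lambda^{(j)}$ because $\Lambda$ only grows). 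The paper instead disposes of this sub-case in one line via a containment: since $(w^*,w)\in E$, every vertex reachable from $w$ is reachable from $w^*$, so $V_w\subseteq V_{w^*}$ and $E_w\subseteq E_{w^*}$; \Cref{lemma-loop} applied to $w^*\in\partial\Lambda^{\text{pre}}$ already gives $E_{w^*}\cap E_1^{\text{pre}}=\emptyset$, and $e=(w',w^*)\notin E_w$ because $w^*\prec w$ forces $w^*\notin V_w$. Your route is longer but makes explicit what the paper leaves terse; the paper's route buys brevity by reusing \Cref{lemma-loop} for $w^*$ rather than re-proving it for $w$. Either argument is acceptable.
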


\begin{remark}\label{remark-welldfine}
  \Cref{lem-input-condition} guarantees that for any input $(V_0,E_0,\Lambda_0)$ of \Cref{alg-count}, for any $w \in \partial \Lambda_0$, $G^0_w = G_w$, where $G^0=(V_0,E_0)$. This implies that $\rch{w}{t}{{G^0}}$ and all distributions in~\eqref{eqn:pi^i_Lambda} are well-defined.
\end{remark}

\begin{proof}[Proof of \Cref{lem-input-condition}]
Note that \textsf{ApproxCount} can be evoked either in \Cref{line-count} of \Cref{alg-main} or in \Cref{line-c1} and \Cref{line-c2} in \Cref{alg-sample}.

If \textsf{ApproxCount} is  evoked  by \Cref{alg-main}, then $V_0 = V_{v_k}$, $E_0=E_{v_k}$ and $\Lambda_0 = \{v_k\}$. For any $w \in V_{v_k} \setminus \{v_k\}$, $G_w$ is a subgraph of $G_{v_k} = (V_0,E_0)$ and the first property holds. The second property holds because $v_k \prec w$ for all $w \in V_{v_k} \setminus \{v_k\}$.
	
Suppose next that \textsf{ApproxCount} is evoked  by \Cref{alg-sample}.
For the first property, by~\Cref{lemma-loop}, at the beginning of every while-loop, for any $w \in \partial \Lambda$, $E_1 \cap E_w = \emptyset$, which implies $E_w \subseteq E_2$.
In other words, the property holds at the beginning of the loop with $V_0=V_u$, $E_0=E_2$, and $\Lambda_0=\Lambda$.
Now consider \Cref{line-c1} and \Cref{line-c2} separately.
\begin{itemize}
  \item Suppose \textsf{ApproxCount} is called in \Cref{line-c1}.
    In this case, comparing to the beginning of the loop,
    $\partial \Lambda$ can only be smaller,
    and the edge $(w',w^*)$ is removed from $E_2$, where $w'\in\Lambda$.
    If the property does not hold,
    then there is some $w''\in\partial\Lambda$ such that $E_{w''}$ is not contained in the current $E_2$.
    This means that $(w',w^*)\in E_{w''}$,
    which implies that $w''\prec w^*$. This contradicts how $w^*$ is chosen.
  \item Next consider \Cref{line-c2}.
    By \Cref{cor-loop}, $\Lambda_1 = \Lambda \cup \{w^*\}$.
    Note that $\partial\Lambda_1 = \partial\Lambda\cup\Gamma_{\-{out}}(w^*) \setminus \{w^*\}$, where $\Gamma_{\-{out}}(w^*) = \{v \in V_u \setminus \Lambda \mid (w^*,v) \in E_2\}$.
    The property holds for all vertices in $\partial\Lambda\setminus\{w^*\}$ by the previous case.
    For $w''\in \Gamma_{\-{out}}(w^*) \setminus \partial \Lambda $, the removal of the edge $(w',w^*)$ does not affect $G_{w''}$.
    Thus the property also holds.
\end{itemize}


The second property holds because $u \prec w$ for all $w \in V_u \setminus \{u\}$.
\end{proof}

In \Cref{alg-main}, each $(\widetilde{R}_w,S_w)$ is computed with respect to $G_w$. \Cref{lem-input-condition} together with \Cref{lemma-ac} shows that for every instance of $\textsf{ApproxCount}$ evoked by \Cref{alg-main}, we can reuse all $(\widetilde{R}_w,S_w)$ computed before.

We still need to take care of the case when \Cref{alg-count} is called by \Cref{alg-sample}.
This requires a generalised version of \Cref{lemma-ac}. 
Again, let $G=(V,E)$ be the input of \Cref{alg-main} and $v_1,v_2,\ldots,v_n \in V$, where $v_1 = s$ and $v_n = t$, be the topological ordering in \Cref{alg-main}. For $i$ from $n$ to $1$, \Cref{alg-main} compute $\widetilde{R}_{v_i}$ and a multi-set $S_{v_i}$ of $\ell$ random samples step by step. 
For any fixed $i$, we view each $(\widetilde{R}_{v_j},S_{v_{j}})_{j > i}$ as a random variable following a joint distribution.

Every time when \textsf{ApproxCount} (described in \Cref{alg-count}) is evoked by \Cref{alg-main} or \Cref{alg-sample}, its input includes a subset of vertices $V_0 \subseteq V$ with $t \in V_0$, a subset of edges $E_0 \subseteq E$, a subset of vertices $\Lambda_0 \subseteq V_0$ and $(\widetilde{R}_w,S_w)_{w \in \partial \Lambda_0}$, where $\partial\Lambda_0 = \{w \in V_0 \setminus \Lambda_0 \mid \exists w' \in \Lambda_0 \text{ s.t. } (w',w)\in E_0\}$. For any $i$, define $\Phi_i$ as a set of tuples $(V_0,E_0,\Lambda_0)$ such that
\begin{itemize}
	\item $(V_0,E_0,\Lambda_0) \in 2^V \times 2^E \times 2^{V_0}$;
	\item for all $w \in \partial \Lambda_0$, $E_w\subseteq E_0$, where $E_w$ is the edge set of $G_w$;
	\item $\partial\Lambda_0 \subseteq \{v_{i+1},\ldots,v_{n}\}$.
\end{itemize}

By \Cref{lem-input-condition} and the way \Cref{alg-main} works, $\Phi_i $ contains all possible inputs  of \textsf{ApproxCount} when we compute $\widetilde{R}_{v_i}$ and $S_{v_i}$ (including the recursive calls). 
For any $(V_0,E_0,\Lambda_0) \in \Phi_i$, let $R(V_0,E_0,\Lambda_0)$ denote the $\Lambda_0 -t$ reliability in the graph $G_0 = (V_0, E_0)$, where every edge $e \in E_0$ fails independently with probability $q_e$.
Suppose the random tuples $(\widetilde{R}_{v_j},S_{v_{j}})_{j > i}$ have been  generated by \Cref{alg-main}.
If we run $\textsf{ApproxCount}$ on $(V_0,E_0,\Lambda_0)$ and $(\widetilde{R}_w,S_w)_{w \in \partial  \Lambda_0}$ (note that $\partial \Lambda_0 \subseteq \{v_{i+1},\ldots,v_n\}$), it will return a random number $\widetilde{R}(V_0,E_0,\Lambda_0)$, where the randomness comes from the input randomness of $(\widetilde{R}_{v_j},S_{v_{j}})_{j > i}$ and the independent randomness $\+D(V_0,E_0,\Lambda_0)$ inside $\textsf{ApproxCount}$.
Our implementation makes sure that any \textsf{ApproxCount} is evoked for every $(V_0,E_0,\Lambda_0)$ at most once.
Hence, there is a unique random variable $\widetilde{R}(V_0,E_0,\Lambda_0)$ for each $(V_0,E_0,\Lambda_0)$.

We have the following generalised version of \Cref{lemma-ac}.
Recall that $S^{\-{ideal}}_w$ is a set of $\ell$ independent perfect samples from the distribution $\pi_w$.
\begin{lemma}\label{lemma-ac-gen}
Given the random tuples $(\widetilde{R}_{v_j},S_{v_{j}})_{j > i}$ such that the following two conditions are satisfied
\begin{itemize}
  \item for all $j > i$, $1-\eps_0 \leq \frac{\widetilde{R}_{v_j}}{R_{v_j}} \leq 1+\eps_0 $ for some $\eps_0 < 1/2$;
  \item $\DTV{(S_{v_j})_{j > i}}{(S^{\-{ideal}}_{v_j})_{j > i}} \leq \delta_0$.
\end{itemize}
Let $\Phi \subseteq \Phi_i$.
Then with probability at least $1 - \delta_0 - |\Phi|2^{-B/30}$, it holds that 
\begin{align}\label{eq-lemma-good}
\forall (V_0,E_0,\Lambda_0) \in \Phi,\quad
1 - \eps_0 - \frac{2}{\sqrt{\ell_2}}	\leq \frac{\widetilde{R}(V_0,E_0,\Lambda_0)}{R(V_0,E_0,\Lambda_0)} \leq 1 - \eps_0 - \frac{2}{\sqrt{\ell_2}},
\end{align}
where the probability is taken over the randomness of $(\widetilde{R}_{v_j},S_{v_{j}})_{j > i}$ and the independent randomness of $\+D(V_0,E_0,\Lambda_0)$ for $(V_0,E_0,\Lambda_0) \in \Phi$. 
\end{lemma}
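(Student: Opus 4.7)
The plan is to reduce the multi-input statement to \Cref{lemma-ac} applied to each $(V_0, E_0, \Lambda_0) \in \Phi$ individually, combining the failure probabilities via a single coupling step followed by a union bound over $\Phi$.

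First, I would use the TV-distance hypothesis to couple $(S_{v_j})_{j>i}$ with $(S^{\-{ideal}}_{v_j})_{j>i}$ via an optimal coupling from \Cref{lem:coupling}, so that the two tuples coincide except on a ``bad'' event $\+C_0$ with $\Pr[\+C_0]\le \delta_0$. Conditional on $\neg \+C_0$, we may assume that for every $j>i$, $S_{v_j}$ consists of $\ell$ independent perfect samples from $\pi_{v_j}$, and the joint distribution across $j$'s is a product distribution. Doing this once globally avoids having to reason about sample reuse across different elements of $\Phi$.

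Next, for any fixed $(V_0,E_0,\Lambda_0)\in \Phi$, I would verify the hypotheses of \Cref{lemma-ac} in this idealised world. By definition of $\Phi \subseteq \Phi_i$, every $w \in \partial \Lambda_0$ lies in $\{v_{i+1},\dots,v_n\}$ and satisfies $E_w \subseteq E_0$; by \Cref{remark-welldfine}, $G^0_w = G_w$, so the ``boundary'' reliability $R_w$ and distribution $\pi_w$ used inside the call coincide with the quantities governing $(\widetilde{R}_w, S_w)$, and in particular $\rch{w}{t}{G}$. Together with the assumed multiplicative error $\eps_0$ on each $\widetilde{R}_w$ and the perfection of samples under $\neg \+C_0$, the hypotheses of \Cref{lemma-ac} hold with its $\delta_0$-parameter set to $0$. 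Hence the conclusion of \Cref{lemma-ac} applies with failure probability at most $2^{-B/30}$ taken purely over the internal randomness $\+D(V_0,E_0,\Lambda_0)$.

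Since the internal randomness $\+D(V_0,E_0,\Lambda_0)$ is, by assumption, independent across $(V_0,E_0,\Lambda_0)\in\Phi$, a union bound over the $|\Phi|$ tuples in $\Phi$ yields that, conditional on $\neg \+C_0$, every bound in \eqref{eq-lemma-good} holds simultaneously except on an event of probability at most $|\Phi|\cdot 2^{-B/30}$. Combining with $\Pr[\+C_0]\le \delta_0$ gives the claimed overall failure probability of $\delta_0+|\Phi|\cdot 2^{-B/30}$. The only subtle point is bookkeeping of probability spaces: \Cref{lemma-ac} mixes sample and internal randomness into one probability, so the coupling step must be carried out \emph{before} invoking \Cref{lemma-ac}, so that the residual $2^{-B/30}$ factor depends only on the per-call internal randomness and is hence amenable to the union bound.
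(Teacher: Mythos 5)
Your strategy is the same as the paper's: pay the $\delta_0$ cost once via a single optimal coupling of $(S_{v_j})_{j>i}$ with the ideal samples, run the argument of \Cref{lemma-ac} for each tuple on the ideal samples, and union bound over $\Phi$ only for the $2^{-B/30}$ term --- which is exactly the point of the lemma, since a naive union bound would give $|\Phi|(\delta_0+2^{-B/30})$. One technical caveat: your phrase ``conditional on $\neg\+C_0$, we may assume $S_{v_j}$ consists of $\ell$ independent perfect samples'' is not literally correct, because conditioning on the coupling succeeding distorts the marginal law of the ideal samples (it becomes the normalized overlap of the two distributions, not the product of perfect-sample distributions), so \Cref{lemma-ac} cannot be invoked under that conditioning. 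The paper avoids this by intersecting events rather than conditioning: it defines $A_1$ (the coupled sample tuples agree, so the real and ideal outputs coincide) and $A_2$ (the bound \eqref{eq-lemma-good} holds for the ideal run, evaluated under the \emph{unconditioned} ideal distribution), and bounds $\Pr[A_1\wedge A_2]\geq 1-\delta_0-|\Phi|2^{-B/30}$. A related minor point: the residual $2^{-B/30}$ is not ``purely over the internal randomness'' --- the concentration in \Cref{lemma-ac} also uses the randomness of the (ideal) samples, which are shared across tuples --- but this does not matter since the union bound needs no independence. With those two phrasings repaired, your proof is the paper's proof.
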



\begin{proof}[Proof of \Cref{lemma-ac-gen}]
  Strictly speaking, all $(\widetilde{R}_{v_j})_{j > i}$ are random variables, and the first condition means that the event for $(\widetilde{R}_{v_j})_{j > i}$ holds with probability $1$. 
  We will actually prove a stronger result. 
  Namely, the lemma holds with probability at least $1 - \delta_0 - |\Phi|2^{-B/30}$, where the probability is taken over the randomness of $(S_{v_{j}})_{j > i}$ and the independent randomness of $\+D(V_0,E_0,\Lambda_0)$ for $(V_0,E_0,\Lambda_0) \in \Phi$, for any fixed values of $\widetilde{R}_{v_j}$ as long as the first condition is met.


  Note that for any $(V_0,E_0,\Lambda_0) \in \Phi$, for all $w \in \partial \Lambda_0$,  $E_w\subseteq E_0$, where $E_w$ is the edge set of $G_w$, and $\partial\Lambda_0 = \{w \in V_0 \setminus \Lambda_0 \mid \exists w \in \Lambda_0 \text{ s.t. } (w',w)\in E_0\}$, which also implies that $w$ can reach $t$ in the graph $(V_0,E_0)$. 
  The assumption in this lemma implies the assumption in \Cref{lemma-ac}, and the proof of this lemma is similar to the proof of \Cref{lemma-ac}.

Again, the  cases where $t \in \Lambda_0$ and  where $\partial \Lambda_0 = \emptyset$ are trivial. The main case is when $t\not\in\Lambda_0$ and $\partial\Lambda_0\neq\emptyset$. We first use $(S^{\-{ideal}}_{v_j})_{j > i}$ to run the algorithm. 
Let use denote the output of the algorithm by $\widetilde{R}^{\-{ideal}}(V_0,E_0,\Lambda_0)$.
By the same argument as the one for \Cref{lemma-ac}, for any $(V_0,E_0,\Lambda_0) \in \Phi$, with probability at least $1-2^{-B/30}$,
\begin{align*}
1 - \eps_0 - \frac{2}{\sqrt{\ell_2}}		\leq \frac{\widetilde{R}^{\-{ideal}}(V_0,E_0,\Lambda_0)}{R(V_0,E_0,\Lambda_0)} \leq 1 + \eps_0 + \frac{2}{\sqrt{\ell_2}}.
\end{align*}
By a union bound over all  $(V_0,E_0,\Lambda_0) \in \Phi$, we have that with probability at least $1-|\Phi|2^{-B/30}$,
\begin{align}\label{eq-R-good}
\forall (V_0,E_0,\Lambda_0) \in \Phi, \quad  1 - \eps_0 - \frac{2}{\sqrt{\ell_2}}		\leq \frac{\widetilde{R}^{\-{ideal}}(V_0,E_0,\Lambda_0)}{R(V_0,E_0,\Lambda_0)} \leq 1 + \eps_0 + \frac{2}{\sqrt{\ell_2}}.
\end{align}

Then we show the lemma using an optimal coupling between ${(S_{v_j})_{j > i}}$ and ${(S^{\-{ideal}}_{v_j})_{j > i}} $. To be more precise, we first sample  ${(S_{v_j})_{j > i}}$ and ${(S^{\-{ideal}}_{v_j})_{j > i}} $ from their optimal coupling, then by \Cref{lem:coupling} we have
\begin{align}\label{eq-couple-bound}
	\Pr\left[\forall j > i, S_{v_j} = S^{\-{ideal}}_{v_j} \right] \geq 1 - \delta_0.
\end{align}
Next, we sample all $\+D = (\+D(V_0,E_0,\Lambda_0))_{(V_0,E_0,\Lambda)\in \Phi}$. 
When we use ${(S_{v_j})_{j> i}}$ and $\+D$ to run \textsf{ApproxCount} on all of $(V_0,E_0,\Lambda)\in \Phi$, we obtain an output vector $\widetilde{R} = (\widetilde{R}(V_0,E_0,\Lambda_0))_{(V_0,E_0,\Lambda)\in \Phi}$. 
Similarly, denote by $\widetilde{R}^{\-{ideal}} = (\widetilde{R}^{\-{ideal}}(V_0,E_0,\Lambda_0))_{(V_0,E_0,\Lambda)\in \Phi}$ the output vector when we use ${(S^{\-{ideal}}_{v_j})_{j > i}}$ and $\+D$ to run \textsf{ApproxCount}.
Define two good events
\begin{itemize}
	\item $A_1$: $\widetilde{R}^{\-{ideal}} = \widetilde{R} $. By~\eqref{eq-couple-bound}, $\Pr[A_1] \geq 1 - \delta_0$;
	\item $A_2$: ~\eqref{eq-R-good} holds for $\widetilde{R}^{\-{ideal}}$. We know $\Pr[A_2] \geq 1-|\Phi|2^{-B/30}$.
\end{itemize}
If both $A_1$ and $A_2$ occur, then~\eqref{eq-lemma-good} holds. The probability is 
\begin{align*}
	\Pr[A_1 \land A_2] = 1 - \Pr[\overline{A_1} \lor \overline{A_2}] \geq 1 - \Pr[\overline{A_1}] - \Pr[\overline{A_2}] \geq 1 - \delta_0 -|\Phi|2^{-B/30}. &\qedhere
\end{align*}  
\end{proof}

Note that \Cref{lemma-ac-gen} cannot be obtained by simply applying \Cref{lemma-ac} with a union bound, 
as that will result in a failure probability of $\abs{\Phi}\left( \delta_0+2^{-B/30} \right)$ instead of $\delta_0+\abs{\Phi}2^{-B/30}$.
This is crucial to the efficiency of our algorithm.

\subsection{Analyze the main algorithm}\label{sec-main-analysis}
Now, we are ready to put everything together and analyze the whole algorithm.
Recall that we use $n$ to denote the number of vertices in the input graph and $m \geq n-1$ the number of edges.



We will need a simple lemma.

\begin{lemma}  \label{lem:conditional-dtv}
  Let $X$ be a random variable over some finite state space $\Omega$. Let $E \subseteq \Omega$ be an event that occurs with positive probability. Let $Y$ be the random variable $X$ conditional on $E$.
  Then,
  \begin{align*}
	\DTV{X}{Y} \leq \Pr[\overline{E}]. 
  \end{align*}
\end{lemma}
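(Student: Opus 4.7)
The plan is to compute the total variation distance directly from its definition. Let $\mu(x) = \Pr[X = x]$ for $x \in \Omega$, and let $\nu(x) = \Pr[X = x \mid E] = \mu(x)/\Pr[E]$ for $x \in E$ and $\nu(x) = 0$ for $x \notin E$. I would then split the sum defining $\DTV{X}{Y}$ according to whether $x \in E$ or $x \in \overline{E}$. On $\overline{E}$ we get $|\mu(x) - \nu(x)| = \mu(x)$, contributing $\Pr[\overline{E}]$ in total. On $E$ we get $|\mu(x) - \mu(x)/\Pr[E]| = \mu(x)\cdot(1/\Pr[E] - 1)$ (since $\Pr[E] \leq 1$), whose sum over $x \in E$ equals $1 - \Pr[E] = \Pr[\overline{E}]$. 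Adding and dividing by $2$ yields $\DTV{X}{Y} = \Pr[\overline{E}]$, which in particular gives the claimed inequality.

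Alternatively, one can prove the bound via an explicit coupling invoking \Cref{lem:coupling}. Construct $(X', Y')$ as follows: sample $X' \sim X$, and if $X' \in E$ set $Y' = X'$; otherwise sample $Y'$ independently from the conditional distribution of $X$ given $E$. A quick check shows that $X' \sim \mu$ and $Y' \sim \nu$, so this is a valid coupling, and $\Pr[X' \neq Y'] \leq \Pr[X' \notin E] = \Pr[\overline{E}]$. Then \Cref{lem:coupling} gives $\DTV{X}{Y} \leq \Pr[\overline{E}]$.

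Neither approach has an obstacle; this is a short standard fact. I would probably present the direct computation, since it is a one-line calculation and avoids introducing auxiliary random variables. The only mild care needed is to handle the degenerate case $\Pr[E] = 1$ (where both sides are $0$) implicitly via the convention that the sum over $\overline{E}$ vanishes.
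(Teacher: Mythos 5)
Your proposal is correct. Your second (coupling) argument is essentially the paper's proof verbatim: the paper also couples $X$ and $Y$ by first deciding whether $E$ occurs, coupling them perfectly on $E$, and sampling independently otherwise, then invokes \Cref{lem:coupling}. Your primary route, the direct computation from the definition of total variation distance, is a genuinely different and slightly stronger argument: because $E$ is an event in the state space of $X$ itself, splitting the sum over $E$ and $\overline{E}$ gives the exact identity $\DTV{X}{Y} = \Pr[\overline{E}]$, not merely the upper bound. The computation checks out — the mass on $\overline{E}$ contributes $\Pr[\overline{E}]$ and the rescaling on $E$ contributes $1 - \Pr[E]$, so the two halves sum to $2\Pr[\overline{E}]$. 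What the paper's coupling buys is brevity and reuse of an already-stated lemma; what your computation buys is the equality and the explicit handling of the degenerate case $\Pr[E]=1$. Either is acceptable here, since the paper only ever needs the inequality.
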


\begin{proof}
  We couple $X$ and $Y$ as follows: (1) first sample an indicator variable whether the event $E$ occurs; (2) if $E$ occurs, couple $X$ and $Y$ perfectly; and (3) if $E$ does not occur, independently sample $X$ conditional of $\overline{E}$ and sample $Y$. 
  By \Cref{lem:coupling},
  \begin{align*}
	\DTV{X}{Y} \leq \Pr[X \neq Y] &\leq \Pr[\overline{E}]. \qedhere
  \end{align*}
\end{proof}

The main goal of this section is to prove the following lemma.
%
In the next lemma, we consider a variant of \Cref{alg-main}, where we replace the subroutine \textsf{Sample} with the subroutine \textsf{NewSample} in \Cref{def-new-sample}.
\Cref{ob-new-sample} shows that \textsf{Sample} and \textsf{NewSample} have the same output distribution.
Hence, the replacement does not change the distributions of $\widetilde{R}_{v_i}$, $\widetilde{R}(V_0,E_0,\Lambda_0)$ and $S_{v_i}$ for all $1 \leq i \leq n$ and all $(V_0,E_0,\Lambda_0) \in \Phi_i$.
The only difference is that this variant of \Cref{alg-main} cannot be implemented in polynomial time.
However, we only use the variant algorithm to analyze the approximation error of \Cref{alg-main}.
The running time of \Cref{alg-main} is analyzed separately in the proof of \Cref{thm:main}.

\begin{lemma}\label{lemma-induction}
For any $1 \leq i \leq n$, there exists a good event $\+A(i)$ such that $\+A(i)$ occurs with probability at least $1 - \frac{n-i}{10n}$ and conditional on $A(i)$, it holds that 
\begin{itemize}
\item for any $j \geq i$, let $S^{\-{ideal}}_{v_j}$ be a multi-set of $\ell$ independent perfect samples from $\pi_{v_j}$, it holds that
\begin{align}\label{eq-2}
	\DTV{ (S_{v_j})_{j\geq i} }{(S^{\-{ideal}}_{v_j})_{j\geq i}} \leq 2^{-4m}(2^{n-i}-1);
	\end{align} 
  \item the following event, denoted by $\+B(i)$, occurs:
    \begin{align}\label{eqn:C_i}
      \forall(V_0,E_0,\Lambda_0) &\in \Phi_i, &
      1-\frac{n-i}{50n\max\{m,\eps^{-1}\}}  &\leq \frac{\widetilde{R}(V_0,E_0,\Lambda_0)}{R(V_0,E_0,\Lambda_0)} \leq  1+\frac{n-i}{50n\max\{m,\eps^{-1}\}}.
    \end{align}
\end{itemize}
In particular, the event $\+B(i)$ implies that, 
  \begin{align}\label{eq-1}
    1-\frac{n-i}{50n\max\{m,\eps^{-1}\}} \leq \frac{\widetilde{R}_{v_i}}{R_{v_i}} \leq 1+\frac{n-i}{50n\max\{m,\eps^{-1}\}}.
  \end{align}
\end{lemma}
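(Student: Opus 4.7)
The plan is a downward induction on $i$, from $i = n$ to $i = 1$. The base case $i = n$ is immediate: since $v_n = t$, we have $V_{v_n} = \{t\}$ and $E_{v_n} = \emptyset$, so $\pi_t$ is the Dirac mass on the edgeless subgraph, matching the $\ell$ copies of $\emptyset$ stored in $S_t$; moreover every $(V_0, E_0, \Lambda_0) \in \Phi_n$ has $\partial\Lambda_0 = \emptyset$, so \textsf{ApproxCount} returns the exact value at \Cref{line-trival}. Take $\+A(n)$ to be the full sample space.

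For the inductive step, set
\begin{align*}
  \+A(i) \defeq \+A(i+1) \cap \+B(i) \cap \+C^{(i)},
\end{align*}
where $\+B(i)$ is the accuracy event in~\eqref{eqn:C_i} and $\+C^{(i)}$ is the intersection of the $\ell$ filter-success events $\+C$ (from \Cref{lemma-new-sample}) over the $\ell$ calls of \textsf{NewSample} that build $S_{v_i}$. To bound $\Pr[\overline{\+B(i)}\mid\+A(i+1)]$ I would apply \Cref{lemma-ac-gen} with $\Phi = \Phi_i$, $\eps_0 = \frac{n-i-1}{50n\max\{m,\eps^{-1}\}}$, and $\delta_0 = 2^{-4m}(2^{n-i-1}-1)$ inherited from the inductive TV bound. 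Using $|\Phi_i|\leq 2^{2n+m}$ and $B = 60n + 150m$ yields $|\Phi_i|\, 2^{-B/30} \leq 2^{-4m}$, and therefore $\Pr[\overline{\+B(i)} \mid \+A(i+1)] \leq 2^{-4m}\cdot 2^{n-i-1}$. Because $\eps_0 + 2/\sqrt{\ell_2} \leq \frac{1}{10m}$ under $\+A(i+1) \cap \+B(i)$, the oracle $\+P$ inside \textsf{NewSample} satisfies~\eqref{eq-r1}--\eqref{eq-r3}, and \Cref{lemma-new-sample} together with a union bound over the $\ell$ calls gives $\Pr[\overline{\+C^{(i)}} \mid \+A(i+1) \cap \+B(i)] \leq \ell(\eps/n)^{200}$. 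The three losses sum to at most $\frac{n-i}{10n}$ with slack.

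The TV bound is the main step. The key fact from \Cref{lemma-new-sample} is that conditional on $\+C$ the output of \textsf{NewSample} is independent of the oracle vector $\+X_{\+P}$, and hence independent of $(S_{v_j})_{j>i}$, which fully determines $\+X_{\+P}$. So under $\+A(i)$, $S_{v_i}$ consists of $\ell$ i.i.d.\ perfect $\pi_{v_i}$-samples that are independent of $(S_{v_j})_{j>i}$. Matching this against the product structure of $(S^{\-{ideal}}_{v_j})_{j\geq i}$ reduces the TV distance to one over the $j > i$ marginals only. I would then write, via \Cref{lem:conditional-dtv} and the triangle inequality,
\begin{align*}
  \DTV{(S_{v_j})_{j>i} \mid \+A(i)}{(S^{\-{ideal}}_{v_j})_{j>i}} \leq \Pr[\overline{\+B(i) \cap \+C^{(i)}} \mid \+A(i+1)] + 2^{-4m}(2^{n-i-1}-1),
\end{align*}
and the previous bound on the first term telescopes to the required $2^{-4m}(2^{n-i}-1)$. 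Finally, specialising~\eqref{eqn:C_i} to the tuple $(V_{v_i}, E_{v_i}, \{v_i\}) \in \Phi_i$ yields~\eqref{eq-1}.

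The delicate part will be the conditional-independence claim. The algorithmic $S_{v_i}$ is genuinely a function of $(S_{v_j})_{j>i}$ through the nested \textsf{ApproxCount} calls inside \textsf{NewSample}; it is only after conditioning on the filter event $\+C$ that this dependence dissolves, and only because the refined probability space of \Cref{def-new-sample} is engineered precisely so that $\+C^{(i)}$ is independent of the randomness driving $\+X_{\+P}$. A secondary care point is that the choice $T = \lceil 1000\log(n/\eps) \rceil$ in \textsf{Sample} must keep $\ell(\eps/n)^{200}$ well within the $2^{-4m}\cdot 2^{n-i-1}$ budget at every level of the induction.
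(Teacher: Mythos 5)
Your overall strategy is the same as the paper's: downward induction with $\+A(i)=\+A(i+1)\land\+B(i)\land\+C(i)$, \Cref{lemma-ac-gen} for the accuracy event, \Cref{lemma-new-sample} for the filter events, and \Cref{lem:conditional-dtv} plus the triangle inequality for the TV recursion. One step, however, does not close as written. In your displayed inequality you charge $\Pr[\overline{\+B(i)\cap\+C^{(i)}}\mid\+A(i+1)]$ when passing from conditioning on $\+A(i+1)$ to conditioning on $\+A(i)$, so the recursion produces $2^{-4m}(2^{n-i}-1)+\ell(\eps/n)^{200}$ per level rather than the stated $2^{-4m}(2^{n-i}-1)$. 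Your ``secondary care point'' — that $\ell(\eps/n)^{200}$ should sit inside the $2^{-4m}$ budget — is false in general: $m$ can be as large as $\Theta(n^2)$, in which case $2^{-4m}$ is exponentially small in $n^2$ while $\ell(\eps/n)^{200}$ is only polynomially small, so the extra term dominates and \eqref{eq-2} as stated would fail under your accounting. (The lemma's conclusion would survive with a weakened \eqref{eq-2}, but then the induction hypothesis itself must be restated.)

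The fix is already contained in your own setup, and it is what the paper does: by \Cref{lemma-new-sample}, $\+C^{(i)}$ is independent of $(S_{v_j})_{j>i}$ (it is governed by the internal randomness $\+D_u$ of the $\ell$ \textsf{NewSample} calls), so conditioning on $\+C^{(i)}$ leaves the law of $(S_{v_j})_{j>i}$ unchanged and only $\Pr[\overline{\+B(i)}\mid\+A(i+1)]\le 2^{-4m}+2^{-4m}(2^{n-i-1}-1)$ needs to be paid via \Cref{lem:conditional-dtv}; the cost of $\+C^{(i)}$ enters only the probability bound on $\+A(i)$, not the TV bound. Two smaller remarks: (i) your justification ``$\+C$ is independent of $\+X_{\+P}$, hence of $(S_{v_j})_{j>i}$ which determines $\+X_{\+P}$'' runs the implication backwards — independence from a function of a variable does not give independence from the variable; the correct route is that $\+C$ depends only on $\+D_u$, which is independent of everything upstream. (ii) $\+X_{\+P}$ is not fully determined by $(S_{v_j})_{j>i}$; it also depends on the $\widetilde R_{v_j}$'s and the internal randomness of the \textsf{ApproxCount} calls. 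Neither of these affects the architecture of the argument, but the TV bookkeeping must be repaired as above for \eqref{eq-2} to hold with the stated constant.
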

\begin{proof}
We first show that $\+B(i)$ implies~\eqref{eq-1}. If $i = n$, then $\widetilde{R}_{v_i} = R_{v_i}$ and~\eqref{eq-1} holds.  For $i < n$, $\widetilde{R}_{v_i}$ is computed in \Cref{line-count} of \Cref{alg-main}, where the input $(V_0,E_0,\Lambda_0) \in \Phi_i$. Hence, $\+B(i)$ implies~\eqref{eq-1}.

Next, we construct the event $\+A(i)$ inductively from $i = n$ to $i = 1$ and prove ~\eqref{eq-2} and~\eqref{eqn:C_i}. 
The base case is $i = n$, where $v_n = t$. 
In this case, the only possible sample in $S_{v_n}$ is the empty graph with one vertex $t$, and thus~\eqref{eq-2} holds.
Also note that if $(V_0,E_0,\Lambda_0) \in \Phi_n$, then $\partial\Lambda_0 = \emptyset$. By \Cref{line-trival} in \Cref{alg-count}, 
if $t \in \Lambda$, the output is exactly $1$, and if $t \notin \Lambda$, the output is exactly $0$. In either case,~\eqref{eqn:C_i} holds.
Hence, we simply let $\+A(n)$ be the empty event, occurring with probability 1.

For $i<n$, we inductively define
\begin{align*}
  \+A(i) \defeq \+A(i+1) \land \+B(i) \land \+C(i),
\end{align*}
where the event $\+C(i)$ is defined next.
Recall that \Cref{alg-main} calls \textsf{Sample}  $\ell$ times to generate a multi-set of $\ell$ samples in $S_{v_i}$.
By \Cref{ob-new-sample}, \textsf{Sample} and \textsf{NewSample} (\Cref{def-new-sample}) have the same output distribution.
For analysis purposes, suppose we call \textsf{NewSample} instead $\ell$ times to generate a multi-set of $\ell$ samples.
%
%
In \eqref{eq-def-C}, we defined an event $\+C$ for one instance of $\textsf{NewSample}$. Since we have $\ell$ of them, define
\begin{align*}
	\+C(i): \text{for all } \ell \text{ calls of } \textsf{NewSample}, \text{ the event } \+C \text{ occurs.}
\end{align*}

%
%




Clearly $\+A(i)$ implies $\+B(i)$ and \eqref{eq-1}.
Before we show that $\+A(i)$ implies~\eqref{eq-2},
we first lower bound the probability of $\+A(i)$ by $1 - \frac{n-i}{10n}$.
By the induction hypothesis, since $\+A(i)$ implies $\+A(i+1)$, conditional on $\+A(i)$, we have
\begin{align}\label{eq-dtv-1}
	\DTV{ (S_{v_j})_{j > i} }{(S^{\-{ideal}}_{v_j})_{j> i}} \leq 2^{-4m}(2^{n-i-1}-1).
\end{align}
In fact $\+A(i+1)$ implies $\+A(j)$ for all $j\ge i+1$.
Thus, by \eqref{eq-1},
\begin{align}\label{eq-all-B}
  \forall j \geq i+1,\quad  1-\frac{n-i-1}{50n\max\{m,\eps^{-1}\}} \leq \frac{\widetilde{R}_{v_j}}{R_{v_j}} \leq 1+\frac{n-i-1}{50n\max\{m,\eps^{-1}\}},
\end{align}
as the worst case of the bound is when $j=i+1$. 
Note that $\Phi_j \subseteq \Phi_i$ for all $j < i$.
Conditional on $\+A(i+1)$, we know that~\eqref{eqn:C_i} (with parameter $i$) already holds for all $(V_0,E_0,\Lambda_0) \in \cup_{j > i}\Phi_j$.
We only need to show~\eqref{eqn:C_i} holds for all $\widetilde{R}(V_0,E_0,\Lambda_0)$ with $(V_0,E_0,\Lambda_0) \in \Phi_i \setminus \cup_{j > i}\Phi_j$.
The event $\+A(i+1)$ does not bias the inside independent randomness $\+D(V_0,E_0,\Lambda_0)$ in \Cref{alg-count} that generates $\widetilde{R}(V_0,E_0,\Lambda_0)$ for  $(V_0,E_0,\Lambda_0) \in \Phi_i \setminus \cup_{j > i}\Phi_j$.
Combining~\eqref{eq-dtv-1},~\eqref{eq-all-B} and \Cref{lemma-ac-gen}, with probability at least $1 - 2^{-4m}(2^{n-i-1}-1) - |\Phi_i|2^{-B/30}$, it holds that $\forall (V_0,E_0,\Lambda_0) \in \Phi_i \setminus \cup_{j > i}\Phi_j$,
\begin{align}\label{eqn:all-phi}
  1 - \frac{n-i-1}{50n\max\{m,\eps^{-1}\}}  - \frac{2}{\sqrt{\ell_2}}	\leq \frac{\widetilde{R}(V_0,E_0,\Lambda_0)}{R(V_0,E_0,\Lambda_0)} \leq 1 + \frac{n-i-1}{50n\max\{m,\eps^{-1}\}}  + \frac{2}{\sqrt{\ell_2}}.
\end{align}
Also recall that
$\ell_0 = \left\lceil 10^4 n^2\max\{m^2,\eps^{-2}\} \right\rceil$  and $B = 60n + 150m$.
We have
\begin{align}\label{eq-increase}
  \frac{n-i-1}{50n\max\{m,\eps^{-1}\}}  + \frac{2}{\sqrt{\ell_2}}	\leq \frac{n-i}{50n\max\{m,\eps^{-1}\}},
\end{align}
which means that \eqref{eqn:all-phi} implies $\+B(i)$.
By \Cref{lemma-ac-gen}, we have
\begin{align}
  \Pr[\+B(i) \mid \+A(i+1)] &\geq 1 - |\Phi_i|2^{-B/30} - 2^{-4m}(2^{n-i-1}-1)\label{eq-tmp}\\
  &\geq 1  - 2^{-4m} - 2^{-4m}(2^{n-i-1}-1) \geq 1 - 2^{-n},\label{eq-p-2}
\end{align}
where we used the fact that $|\Phi_i| \leq 2^{m + 2n}$.

Given $\+A(i+1) \land \+B(i)$, \eqref{eq-1} also holds.
The $\ell$ samples in $S_{v_i}$ are generated by \textsf{NewSample} on the graph~$G_{v_i}$.
In \Cref{line-c1} and \Cref{line-c2} of \Cref{alg-sample} (which are also in \textsf{NewSample}), \Cref{alg-count} is evoked to compute the value of $c_0$ and $c_1$. 
By \Cref{lem-input-condition}, the input $(V_0,E_0,\Lambda_0) \in \Phi_i$. 
Also, by~\eqref{eq-1}, $\widetilde{R}_{v_i}$ approximates ${R}_{v_i}$.
Hence the subroutine \Cref{alg-count} behaves like the oracle $\+P$ assumed in \Cref{lemma-new-sample}, satisfying conditions \eqref{eq-r1}, \eqref{eq-r2}, and \eqref{eq-r3}.
By the definition of $\+C(i)$ and \Cref{lemma-new-sample}, it is independent from $\+A(i+1) \land \+B(i)$ because $\+C(i)$ depends only on the independent randomness inside \textsf{NewSample}. By a union bound over $\ell$ calls of \textsf{NewSample}, we have
\begin{align}\label{eq-p-1}
  \Pr[\+C(i) \mid \+A(i+1) \land \+B(i)] \ge 1 - \frac{\eps^{200}\ell}{n^{200}} \geq 1 - \frac{1}{10n^2}, 
\end{align}
where $\ell = (60n + 150m)\left\lceil 10^5 n^3\max\{m^2,\eps^{-2}\}  \right\rceil$.
By the induction hypothesis,  \eqref{eq-p-2}, and \eqref{eq-p-1},
\begin{align*}
  \Pr[\+A(i)] &= \Pr[\+A(i+1) \land \+B(i) \land \+C(i)] \ge \left( 1 - \frac{n-i-1}{10n} \right)\left( 1 - \frac{1}{10n^2} \right)\left( 1 - 2^{-n} \right)\ge 1 - \frac{n-i}{10n}. 
\end{align*}

We still need to show that $\+A(i)$ implies~\eqref{eq-2}.
We use $(S_{v_j})_{j> i}|_{\+A(i+1)}$ to the denote the random samples of~$(S_{v_j})_{j> i}$ conditional on $\+A(i+1)$. 
Similarly, we can define $((\widetilde{R}_{v_j})_{j> i},(S_{v_j})_{j> i},\+D)|_{\+A(i+1)} $, where $\+D = (\+D(V_0,E_0,\Lambda_0))_{(V_0,E_0,\Lambda_0) \in \Phi_i}$,
and $((\widetilde{R}_{v_j})_{j\geq i},(S_{v_j})_{j\geq i},\+D)|_{\+A(i+1)\land \+B(i)} $, where we further condition on~$\+B(i)$. 
Note that the event $\+B(i)$ is determined by the random variables $((\widetilde{R}_{v_j})_{j> i},(S_{v_j})_{j> i},\+D)$.
By letting $E$ be $\+B(i)$ conditional on $\+A(i+1)$ in \Cref{lem:conditional-dtv},
we have that
\begin{align*}
  &\DTV{((\widetilde{R}_{v_j})_{j> i},(S_{v_j})_{j> i},\+D)|_{\+A(i+1)\land \+B(i)}}{((\widetilde{R}_{v_j})_{j> i},(S_{v_j})_{j> i},\+D)|_{\+A(i+1)}}\notag\\
  \leq \, & 1 - \Pr[\+B(i) \mid \+A(i+1)]\\
  \text{(by~\eqref{eq-tmp})}\,\,	\leq\,& |\Phi_i|2^{-B/30} + 2^{-4m}(2^{n-i-1}-1) \leq 2^{-4m} + 2^{-4m}(2^{n-i-1}-1).\notag
\end{align*}
Projecting to $(S_{v_j})_{j>i}$ we have
\begin{align*}
  \DTV{(S_{v_j})_{j> i}|_{\+A(i+1)\land \+B(i)}}{(S_{v_j})_{j> i}|_{\+A(i+1)}} \leq 2^{-4m} + 2^{-4m}(2^{n-i-1}-1).
\end{align*}
By the induction hypothesis, it holds that
\begin{align*}
 \DTV{(S_{v_j})_{j> i}|_{\+A(i+1)}} {(S_{v_j}^{\-{ideal}})_{j> i}} \leq 2^{-4m}(2^{n-i-1}-1).	
\end{align*}
Using the triangle inequality for total variation distances, we have
\begin{align*}
\DTV{(S_{v_j})_{j> i}|_{\+A(i+1)\land \+B(i)}} {(S_{v_j}^{\-{ideal}})_{j> i}} \leq 	2^{-4m} + 2^{-4m}(2^{n-i-1}-1) \times 2 = 2^{-4m}(2^{n-i}-1).
\end{align*}

Given $\+A(i+1) \land \+B(i)$, \eqref{eq-1} also holds.
The $\ell$ samples in $S_{v_i}$ are generated by \text{NewSample} on the graph~$G_{v_i}$.
By~\Cref{lemma-new-sample}, in that case, if $\+C(i)$ occurs, $S_{v_i}$ contains $\ell$ prefect independent samples. Furthermore, the event $\+C(i)$ is independent from $(S_{v_j})_{j> i}$ (as by~\Cref{lemma-new-sample}, $\+C(i)$ depends only on the internal independent randomness of \textsf{NewSample}
\footnote{In fact, $(S_{v_j})_{j> i}$ is correlated with $\+X_{\+P}$ in \Cref{lemma-new-sample} but independent from $\+D_u$.} ). 
Hence,
\begin{align*}
\DTV{ (S_{v_j})_{j\geq i}|_{\+A(i)} }{(S^{\-{ideal}}_{v_j})_{j\geq i}} &= 	\DTV{(S_{v_j})_{j> i}|_{\+A(i)}} {(S_{v_j}^{\-{ideal}})_{j> i}}\\
&= \DTV{(S_{v_j})_{j> i}|_{\+A(i+1)\land \+B(i)}} {(S_{v_j}^{\-{ideal}})_{j> i}}  \leq 2^{-4m}(2^{n-i}-1). \qedhere
\end{align*}
\end{proof}

We remark that the set $S_{v_i}$ may be used multiple times throughout \Cref{alg-main}.
In particular, this means that there may be subtle correlation among $\widetilde{R}_i$'s.
These correlations do not affect our approximation guarantee.
This is because the conditions of \Cref{lemma-ac} and \Cref{lemma-ac-gen} only involve marginals.
Namely, as long as the marginals are in the suitable range, the correlation amongst them does not matter.

By \eqref{eq-1} of \Cref{lemma-induction} with $i = 1$, note that $v_1 = s$, we have
\begin{align}\label{eq-Rs-1}
  \Pr\left[  1-\frac{1}{50\max\{m,\eps^{-1}\}} \leq \frac{\widetilde{R}_{s}}{R_{s}} \leq 1+\frac{1}{50\max\{m,\eps^{-1}\}} \right] \geq \Pr[\+A(1)] \geq \frac{3}{4}. 
\end{align}
Note that the events $\+A(i)$ for $1 \leq i \leq n$ are defined for the variant of \Cref{alg-main}, where we replace \textsf{Sample} with \textsf{NewSample}.
By~\Cref{ob-new-sample}, the variant and \Cref{alg-main} have the same output distribution. Hence, for the original \Cref{alg-main}, \eqref{eq-Rs-1} still holds. 

\begin{proof}[Proof of \Cref{thm:main}]
  The approximation guarantee follows directly from~\eqref{eq-Rs-1}. 
  Note that this guarantee is always at least a $(1\pm 1/m)$-approximation
  and is stronger than a $(1\pm\eps)$-approximation when $\eps>1/m$.
  We need this because in the analysis for the sampling subroutine we need to apply a union bound for the errors over the edges.

We analyze the running time next. Recall that $n$ is the number of vertices of the input graph and $m \geq n - 1$ is number of edges in $G$. 
Recall 
\begin{align*}
  \ell =  O((n+m)n^2\max\{m^2,\epsilon^{-2}\}) = O \tp{ n^2m \max\{m^2,\eps^{-2}\}}.
\end{align*}
By \Cref{lemma-ac}, the running time of \textsf{ApproxCount} (\Cref{alg-count}) is at most
\begin{align*}
  T_{\textsf{count}} = O(mn\ell) =  O \tp{ n^3m^2 \max\{m^2,\eps^{-2}\} }.
\end{align*}
By \Cref{lemma-sample}, the running time of \textsf{Sample} (\Cref{alg-sample}) is at most 
\begin{align*}
  T_{\textsf{sample}} = \widetilde{O}\tp{(n+m)T_{\textsf{count}}} = \widetilde{O}\tp{mT_{\textsf{count}}} = \widetilde{O}\tp{n^3m^3 \max\{m^2,\eps^{-2}\}}.
\end{align*}
Hence, the running time of \Cref{alg-main} is 
\begin{align*}
	T & = O\tp{n\tp{T_{\textsf{count}} + \ell T_{\textsf{sample}}}} = O\tp{n \ell T_{\textsf{sample}}} = \widetilde{O}\tp{n^6m^4\max\{m^4,\eps^{-4}\}}. \qedhere
\end{align*}
\end{proof}

\section{\BIS-hardness for \texorpdfstring{$s-t$}{s-t} unreliability}
\label{sec:BIS-hard}

In this section we show \Cref{thm:st-unrel}.
We first reduce \BIS{} to $s-t$ unreliability where each vertex (other than $s$ and $t$) fails with $1/2$ probability independently.
Note that in this version of the problem no edge would fail.
Given a DAG $D=(V\cup\{s,t\},A)$, this is equivalent to counting the number of subsets $S\subseteq V$ such that in the induced subgraph $D[S\cup\{s,t\}]$, $s$ cannot reach $t$.
We call $S$ a $\nrch{s}{t}{}$ set.

Given a bipartite graph $G=(V,E)$,
let its two partitions be $L$ and $R$.
We add two special vertices $s$ and $t$,
and connect, with directed edges, $s$ to all vertices in $L$ and all vertices in $R$ to $t$.
Lastly, for any edge $\{u,v\}\in E$, where $u\in L$ and $v\in R$, we replace it with a directed edge $(u,v)$.
Call the new directed graph $D_G$.
Clearly it is a DAG.

For any independent set $I$ in $G$, we claim that in $D_G[I\cup\{s,t\}]$, $s$ cannot reach $t$.
This is because for any $e\in E$, there is at least one vertex unoccupied.
Thus $s$ cannot reach $t$ using the directed version of $e$, and this holds for any $e\in E$.

In the other direction, let $S$ be a $\nrch{s}{t}{}$ subset of $V$.
This means that for any edge $\{u,v\}\in E$, either $u\not\in S$ or $v\not\in S$, as otherwise $s\rightarrow u\rightarrow v\rightarrow t$.
This means that $S$ is an independent set of $G$.

Thus, there is a one-to-one correspondence between independent sets of $G$ and $\nrch{s}{t}{}$ subsets of $V$.
Namely, $s-t$ unreliability where vertices (other than $s$ and $t$) fail with $1/2$ probability is \BIS{}-hard.

Next, we reduce $s-t$ unreliability from the vertex version.
For this, we can replace each vertex $v$ (other than $s$ and $t$) by two vertices $v,v'$ and a directed edge $v\rightarrow v'$.
All incoming edges of $v$ still goes into $v$, and all outgoing edges of $v$ goes out from $v'$.
Assign to the new edges the failure probabilities of their corresponding vertices,
and assign failure probability $0$ to all original edges.
Clearly the unreliability is the same with these changes.
To make failure probabilities uniform,
we can replace edges with failure probability $0$ by $k$ parallel edges.
Effectively, the connection fails only if all the parallel edges fail at the same time.
If the failure probability of each edge is $q$, the probability of all parallel edges failing is $q^k$.
As this probability approaches $0$ exponentially fast, it is easy to set a polynomially bounded $k$ so that the new unreliability is a sufficiently good approximation of the original.

As a side note, the last reduction also works for reliability.
Thus \Cref{thm:main} also works for $s-t$ reliability in DAGs where vertices rather than edges fail independently.

\ifdoubleblind
\else
\section*{Acknowledgement}
We thank Kuldeep S.~Meel for bringing the problem to our attention, Antoine Amarilli for explaining their method to us, and Marcelo Arenas for insightful discussions.
We also thank Zongchen Chen for suggesting a better presentation of \Cref{thm:main}, and Mark Jerrum for some useful insights.
This project has received funding from the European Research Council (ERC) under the European Union's Horizon 2020 research and innovation programme (grant agreement No.~947778).
Weiming Feng acknowledges the support from Dr. Max R\"ossler, the Walter Haefner Foundation and the ETH Z\"urich Foundation.
This work was done in part while Weiming Feng was visiting the Simons Institute for the Theory of Computing.
\fi

\bibliographystyle{alpha}
\bibliography{refs}

\newcommand{\etalchar}[1]{$^{#1}$}
\begin{thebibliography}{ALO{\etalchar{+}}21}

\bibitem[ACJR21]{ACJR21}
Marcelo Arenas, Luis~Alberto Croquevielle, Rajesh Jayaram, and Cristian
  Riveros.
\newblock {\#}{NFA} admits an {FPRAS:} efficient enumeration, counting, and
  uniform generation for logspace classes.
\newblock {\em J. {ACM}}, 68(6):48:1--48:40, 2021.

\bibitem[{\`{A}}J93]{AJ93}
Carme {\`{A}}lvarez and Birgit Jenner.
\newblock A very hard log-space counting class.
\newblock {\em Theor. Comput. Sci.}, 107(1):3--30, 1993.

\bibitem[ALO{\etalchar{+}}21]{ALOVV21}
Nima Anari, Kuikui Liu, Shayan {Oveis Gharan}, Cynthia Vinzant, and
  Thuy{-}Duong Vuong.
\newblock Log-concave polynomials {IV:} approximate exchange, tight mixing
  times, and near-optimal sampling of forests.
\newblock In {\em {STOC}}, pages 408--420. {ACM}, 2021.

\bibitem[ALOV19]{ALOV19}
Nima Anari, Kuikui Liu, Shayan {Oveis Gharan}, and Cynthia Vinzant.
\newblock Log-concave polynomials {II:} high-dimensional walks and an {FPRAS}
  for counting bases of a matroid.
\newblock In {\em STOC}, pages 1--12. {ACM}, 2019.

\bibitem[AvBM23]{AvBM23}
Antoine Amarilli, Timothy van Bremen, and Kuldeep~S. Meel.
\newblock Conjunctive queries on probabilistic graphs: the limits of
  approximability.
\newblock {\em arXiv}, abs/2309.13287, 2023.

\bibitem[Bal80]{Ball80}
Michael~O. Ball.
\newblock Complexity of network reliability computations.
\newblock {\em Networks}, 10(2):153--165, 1980.

\bibitem[Bal86]{Ball86}
Michael~O. Ball.
\newblock Computational complexity of network reliability analysis: An
  overview.
\newblock {\em IEEE Trans. Rel.}, 35(3):230--239, 1986.

\bibitem[BP83]{BP83}
Michael~O. Ball and J.~Scott Provan.
\newblock Calculating bounds on reachability and connectedness in stochastic
  networks.
\newblock {\em Networks}, 13(2):253--278, 1983.

\bibitem[CGM21]{CGM19}
Mary Cryan, Heng Guo, and Giorgos Mousa.
\newblock Modified log-{S}obolev inequalities for strongly log-concave
  distributions.
\newblock {\em Ann. Probab.}, 49(1):506--525, 2021.

\bibitem[CGZZ23]{CGZZ23}
Xiaoyu Chen, Heng Guo, Xinyuan Zhang, and Zongrui Zou.
\newblock Near-linear time samplers for matroid independent sets with
  applications.
\newblock {\em arXiv}, abs/2308.09683, 2023.

\bibitem[CHLP23]{CHLP23}
Ruoxu Cen, William He, Jason Li, and Debmalya Panigrahi.
\newblock Beyond the quadratic time barrier for network unreliability.
\newblock {\em arXiv}, abs/2304.06552, 2023.

\bibitem[Col87]{Col87}
Charles~J. Colbourn.
\newblock {\em The Combinatorics of Network Reliability}.
\newblock Oxford University Press, 1987.

\bibitem[DGGJ04]{DGGJ04}
Martin~E. Dyer, Leslie~Ann Goldberg, Catherine~S. Greenhill, and Mark Jerrum.
\newblock The relative complexity of approximate counting problems.
\newblock {\em Algorithmica}, 38(3):471--500, 2004.

\bibitem[GH20]{GH20}
Heng Guo and Kun He.
\newblock Tight bounds for popping algorithms.
\newblock {\em Random Struct. Algorithms}, 57(2):371--392, 2020.

\bibitem[GJ19]{GJ19a}
Heng Guo and Mark Jerrum.
\newblock A polynomial-time approximation algorithm for all-terminal network
  reliability.
\newblock {\em SIAM J. Comput.}, 48(3):964--978, 2019.

\bibitem[GJK{\etalchar{+}}97]{GJKSM97}
Vivek Gore, Mark Jerrum, Sampath Kannan, Z.~Sweedyk, and Stephen~R. Mahaney.
\newblock A quasi-polynomial-time algorithm for sampling words from a
  context-free language.
\newblock {\em Inf. Comput.}, 134(1):59--74, 1997.

\bibitem[GJL19]{GJL19}
Heng Guo, Mark Jerrum, and Jingcheng Liu.
\newblock Uniform sampling through the {L}ov{\'{a}}sz local lemma.
\newblock {\em J. {ACM}}, 66(3):18:1--18:31, 2019.

\bibitem[HS18]{HS18}
David~G. Harris and Aravind Srinivasan.
\newblock Improved bounds and algorithms for graph cuts and network
  reliability.
\newblock {\em Random Struct. Algorithms}, 52(1):74--135, 2018.

\bibitem[Jer81]{Jer81}
Mark Jerrum.
\newblock {\em On the complexity of evaluating multivariate polynomials}.
\newblock PhD thesis, The University of Edinburgh, 1981.

\bibitem[JVV86]{JVV86}
Mark Jerrum, Leslie~G. Valiant, and Vijay~V. Vazirani.
\newblock Random generation of combinatorial structures from a uniform
  distribution.
\newblock {\em Theor. Comput. Sci.}, 43:169--188, 1986.

\bibitem[Kar99]{Kar99}
David~R. Karger.
\newblock A randomized fully polynomial time approximation scheme for the
  all-terminal network reliability problem.
\newblock {\em {SIAM} J. Comput.}, 29(2):492--514, 1999.

\bibitem[Kar20]{Kar20}
David~R. Karger.
\newblock A phase transition and a quadratic time unbiased estimator for
  network reliability.
\newblock In {\em {STOC}}, pages 485--495. {ACM}, 2020.

\bibitem[KL83]{KL83}
Richard~M. Karp and Michael Luby.
\newblock {Monte-Carlo} algorithms for enumeration and reliability problems.
\newblock In {\em {FOCS}}, pages 56--64. {IEEE} Computer Society, 1983.

\bibitem[KL85]{KL85}
Richard~M. Karp and Michael Luby.
\newblock {Monte-Carlo} algorithms for the planar multiterminal network
  reliability problem.
\newblock {\em J. Complex.}, 1(1):45--64, 1985.

\bibitem[KLM89]{KLM89}
Richard~M. Karp, Michael Luby, and Neal Madras.
\newblock {Monte-Carlo} approximation algorithms for enumeration problems.
\newblock {\em J. Algorithms}, 10(3):429--448, 1989.

\bibitem[MCM23]{MCM23}
Kuldeep~S. Meel, Sourav Chakraborty, and Umang Mathur.
\newblock A faster {FPRAS} for {\#}{NFA}.
\newblock {\em arXiv}, abs/2312.13320, 2023.

\bibitem[PB83]{PB83}
J.~Scott Provan and Michael~O. Ball.
\newblock The complexity of counting cuts and of computing the probability that
  a graph is connected.
\newblock {\em {SIAM} J. Comput.}, 12(4):777--788, 1983.

\bibitem[Pro86]{Pro86}
J.~Scott Provan.
\newblock The complexity of reliability computations in planar and acyclic
  graphs.
\newblock {\em {SIAM} J. Comput.}, 15(3):694--702, 1986.

\bibitem[Val79]{Val79a}
Leslie~G. Valiant.
\newblock The complexity of enumeration and reliability problems.
\newblock {\em {SIAM} J. Comput.}, 8(3):410--421, 1979.

\bibitem[ZL11]{ZL11}
Rico Zenklusen and Marco Laumanns.
\newblock High-confidence estimation of small \emph{s}-\emph{t} reliabilities
  in directed acyclic networks.
\newblock {\em Networks}, 57(4):376--388, 2011.

\end{thebibliography}

\appendix
\section{An alternative way to define the event \texorpdfstring{$\+C$}{C}} \label{app-proof}

We start from the following abstract setting. Let $A \sim \mu_A$ and $B \sim \mu_B$ be two random variables over some state space $\Omega$. 
Suppose for any $x \in \Omega$, it holds that 
\begin{align*}
  \mu_A(x) \geq (1 - \eps)\mu_B(x), 
\end{align*}
for some $0\le \eps<1$.
Then, the distribution $\mu_A$ can be rewritten as
\begin{align*}
	\mu_A = (1-\eps) \mu_B + \eps \nu,
\end{align*}
where the distribution $\nu$ is defined by
\begin{align*}
	\forall x \in \Omega, \quad \nu(x) = \frac{\mu_A(x) - (1- \eps) \mu_B(x) }{\eps}.
\end{align*}
Then, we can draw a sample $A \sim \mu_A$ using the following procedure.
\begin{itemize}
	\item Flip a coin with probability of HEADS being $1 - \eps$;
	\item If the outcome is HEADS, draw $A \sim \mu_B$;
	\item If the outcome is TAILS, draw  $A \sim \nu$.
\end{itemize}
In this procedure, we can define an event $\+C$ as ``the outcome of the coin flip is HEADS''. We know that conditional on $\+C$, the distribution of $A$ is $\mu_B$. Such an event $\+C$ is defined in an expanded space $\Omega \times \{\text{HEADS, TAILS}\}$.

Consider the modified version of \textsf{Sample}, where \textsf{Sample} is defined in \Cref{alg-sample}. Suppose we use it on $\pi_u$, where $u = v_k$. Let $\+X_{\+P}$ be the random variable associated with the oracle $\+P$.
Denote the distribution of $\+X_{\+P}$ by $\mu_{\+P}$.
\textsf{Sample} uses independent inside randomness $\+D_u$ to generate $H = H(\+X_{\+P}, \+D_u)$ and $F = F(\+X_{\+P}, \+D_u)$. Define
\begin{align*}
	\mu_A: \quad &\text{the joint distribution of } \+X_{\+P} \text{ and } H,\\
	\mu_B: \quad &\text{the product distribution of } \mu_{\+P} \text{ and } \pi_u.
\end{align*}
For any $x_{\+P}$ in the support of $\+X_{\+P}$ and any $h$ in the support of $\pi_u$, we have
\begin{align*}
	\frac{\mu_A(x_{\+P},h)}{\mu_B(x_{\+P},h)} = \frac{\Pr[\+X_{\+P} = x_{\+P}] \Pr[H =h \mid \+X_{\+P} = x_{\+P} ] }{\Pr[\+X_{\+P} = x_{\+P}] \pi_u(h)} = \frac{\Pr[H =h \mid \+X_{\+P} = x_{\+P} ] }{ \pi_u(h)}.
\end{align*}
Note that 
\begin{align*}
\Pr[H =h \mid \+X_{\+P} = x_{\+P} ] \geq \Pr[ F = 1 \mid \+X_{\+P} = x_{\+P} ] \Pr[ H =h \mid F = 1 \land \+X_{\+P} = x_{\+P} ]	 \geq (1-\eps)\pi_u(h),
\end{align*}
where $\eps = 1 - 1/n^{200}$.
 This implies 
 \begin{align*}
 	\frac{\mu_A(x_{\+P},h)}{\mu_B(x_{\+P},h)}  \geq 1 - \eps.
 \end{align*}
 Using the above abstract result, we can define an equivalent process of \textsf{Sample} and find an event $\+C$ such that $\Pr[\+C] \geq 1 - \eps$ and conditional on $\+C$, $(\+X_{\+P}, H) \sim \mu_B$. 
 By the definition of $\mu_B$, we know that conditional on $\+C$, $\+X_{\+P} \sim \mu_{\+P}$ still follows the distribution specified by the oracle $\+P$, which means that $\+C$ is independent from $\+X_{\+P}$. And also, $H$ is a perfect independent sample from $\pi_u$.
 Finally, we remark that in our analysis (see \Cref{ob-new-sample} and \Cref{lemma-induction}), we only need to show that such an equivalent process and the event $\+C$ exist. We do not need to implement the process nor certify the event in the algorithm. 

\section{Reducing counting \texorpdfstring{$s-t$}{st} connected subgraphs in DAGs to \texorpdfstring{\#NFA}{counting NFA}}
\label{sec:reduction}

Given a graph $G=(V,E)$ with a source $s$ and a sink $t$, the set of $s-t$ connected subgraphs are $\{H=(V,E_H)\mid E_H\subseteq E{~s.t.~}\rch{s}{t}{H}\}$.
Let $m\defeq\abs{E}$.
Counting $s-t$ connected subgraphs is equivalent to computing the $s-t$ reliability of the same graph with $q_e=1/2$ for all $e\in E$.
In this section, we reduce counting $s-t$ connected subgraphs in DAGs to \#NFA.
This reduction is essentially the same as the one in \cite{AvBM23},
where it is not explicitly given.

Given a DAG $G$, we construct an NFA $A_G$ such that the number of its accepting strings is the same as the number of $s-t$ connected subgraphs in $G$.
The states of $A_G$ consists of the starting state $s$, all edges, the accepting state $t$, a failure state, and some auxiliary states.
We order all edges in $G$ according to the head of the edge's topological order, say $e_1,\dots,e_m$. 
In particular, this means that if $f_1,\dots, f_k$ form a path, then $f_1\prec f_2 \dots \prec f_k$.
Moreover, we want to connect $s$ and $t$ to their respective adjacent edges, and two edges if they share an endpoint.
However, we want each bit of the input string to correspond to whether to have an edge or not,
which implies that we need to absorb all intermediate inputs.
Thus, instead, to connect $e_i$ to $e_j$ with $i<j$, we add auxiliary states $f_{k}^{(i,j)}$ from $k=i+1$ to $k=j-1$.
We connect $e_i$ to $f_{i+1}^{(i,j)}$, $f_{i+1}^{(i,j)}$ to $f_{i+2}^{(i,j)}$, etc., labelled with both $0$ and $1$.
Lastly, we connect $f_{j-1}^{(i,j)}$ to $e_j$, labelled with only $1$, and we connect $f_{j-1}^{(i,j)}$ to the failure state, labelled with $0$.
Once we are in the failure state, it can only move to itself, namely it has only a self-loop labelled with both $0$ and $1$.
We also do the same as above by treating $s$ as $e_0$ (whose tail is $s$ and head does not matter) and $t$ as $e_{m+1}$ (whose head is $t$ and tail does not matter).
Note that there are $O(m^2)$ states and we are counting accepting strings of length $m+1$.
The last bit of any accepting string has to be $1$, and therefore each accepting string is an indicator vector for a subset of edges.
It is easy to verify that the string is accepted if and only if $s$ can reach $t$ in the corresponding subgraph.
This finishes the reduction.

\end{document}